\theoremstyle{plain}
\newtheorem{theorem}{Theorem}[section]
\newtheorem{lemma}[theorem]{Lemma}
\newtheorem{proposition}[theorem]{Proposition}
\newtheorem{definition}[theorem]{Definition}
\newtheorem{assumption}[theorem]{Assumption}
\theoremstyle{remark}
\newtheorem{remark}[theorem]{Remark}
\newtheorem{example}[theorem]{Example}
\numberwithin{equation}{section}
\def\ba{\begin{array}}
\def\ea{\end{array}}
\def\beq{\begin{equation}}
\def\endeq{\end{equation}}
\def\bes{\begin{equation*}}
\def\ees{\end{equation*}}
\def\bea{\begin{eqnarray}}
\def\eea{\end{eqnarray}}
\def\beaa{\begin{eqnarray*}}
\def\eeaa{\end{eqnarray*}}
\def\ol{\overline}
\def\td{\nabla}
\def\pa{\partial}
\def\a{\alpha}
\def\e{\varepsilon}
\def\f{\varphi}
\def\g{\gamma}
\def\si{\sigma}
\def\nts{\negthinspace}
\def\sE{\mathscr{E}}
\def\sL{\mathscr{L}}
\def\cF{\mathcal{F}}
\def\cI{\mathcal{I}}
\def\cT{\mathcal{T}}
\def\hE{\mathbb{E}}
\def\hF{\mathbb{F}}
\def\hP{\mathbb{P}}
\def\hQ{\mathbb{Q}}
\def\hR{\mathbb{R}}
\def\dbL{\rm l\nts L}
\begin{document}

	\title[Probability distortion]{Time-consistent Conditional Expectation under Probability Distortion}

	\keywords{Probability distortion, time-inconsistency, nonlinear expectation}

	%\subjclass[2000]{}
	
	\author{Jin Ma}
	\address{Department of Mathematics, University of Southern California, Los Angeles, United States}
	\email{jinma@usc.edu}
	
	\author{Ting-Kam Leonard Wong}
	\address{Department of Statistical Sciences, University of Toronto, Toronto, Ontario, Canada.}
	\email{tkl.wong@utoronto.ca}

	\author{Jiafeng Zhang}
	\address{Department of Mathematics, University of Southern California, Los Angeles, United States}
	\email{jianfenz@usc.edu}
	
	\maketitle
	\date{\today}

% Outcomment only when entries are known. Otherwise leave as is and 
%   default values will be used.
%\setcounter{page}{1}
%\VOLUME{00}%
%\NO{0}%
%\MONTH{Xxxxx}% (month or a similar seasonal id)
%\YEAR{0000}% e.g., 2005
%\FIRSTPAGE{000}%
%\LASTPAGE{000}%
%\SHORTYEAR{00}% shortened year (two-digit)
%\ISSUE{0000} %
%\LONGFIRSTPAGE{0001} %
%\DOI{10.1287/xxxx.0000.0000}%

\begin{abstract}
We introduce a new notion of conditional nonlinear expectation under probability distortion. Such a distorted nonlinear expectation is not sub-additive in general, so it is beyond the scope of Peng's framework of nonlinear expectations. A more fundamental problem when extending the distorted expectation to a dynamic setting is {\it time-inconsistency}, that is, the usual ``tower property" fails. By localizing the probability distortion and restricting to a smaller class of random variables, we introduce a so-called distorted probability and construct a conditional expectation in such a way that it coincides with the original nonlinear expectation at time zero, but has a time-consistent dynamics in the sense that the tower property remains valid. Furthermore, we show that in the continuous time model this conditional expectation corresponds to a parabolic differential equation whose coefficient involves the law of the underlying diffusion. This work is the first step towards a new understanding of nonlinear expectations under probability distortion, and will potentially be a helpful tool for solving time-inconsistent stochastic optimization problems.
\end{abstract}

% Sample
%\KEYWORDS{deterministic inventory theory; infinite linear programming duality; 
%  existence of optimal policies; semi-Markov decision process; cyclic schedule}
%\MSCCLASS{Primary: 90B05; secondary: 90C40, 90C90}
%\ORMSCLASS{Primary: Inventory/production: deterministic multi-item;
%  secondary: dynamic programming/optimal control: deterministic 
%  semi-Markov; programming: infinite dimensional}

\maketitle
%%%%%%%%%%%%%%%%%%%%%%%%%%%%%%%%%%%%%%%%%%%%%%%%%%%%%%%%%%%%%%%%%%%%%%

% Samples of sectioning (and labeling) in MOOR.
% NOTE: (1) all section levels end with a period,
%       (2) capitalization is as shown (sentence style, not title style).
%
%\section{Introduction.}\label{intro} %%1.
%\subsection{Duality and the classical EOQ problem.}\label{class-EOQ} %% 1.1.
%\subsection{Outline.}\label{outline1} %% 1.2.
%\subsubsection{Cyclic schedules for the general deterministic SMDP.}
%  \label{cyclic-schedules} %% 1.2.1
%\section{Problem description.}\label{problemdescription} %% 2.

% Text of your paper here

\section{Introduction.}
In this paper we propose a new notion of nonlinear conditional expectation under {\it probability distortion}. Such a nonlinear expectation is by nature not sub-additive, thus is different from Peng's well-studied nonlinear expectations (see e.g.~\cite{Peng-97, Peng-G}).
Our goal is to find an appropriate definition of  conditional nonlinear  expectations such that it is {\it time-consistent} in the sense that the usual ``tower property" holds.

Probability distortion has been largely motivated by empirical findings in behavioral economics and finance, see, e.g., Kahneman-Tversky \cite{KT1, KT2}, Zhou \cite{Zhou}, and the references therein. It describes the natural human tendency to exaggerate small probabilities for certain events, contradicting the classical axiom of rationality. Mathematically, this can be characterized by a nonlinear expectation where the underlying probability scale is modified by a {\it distortion function}. More precisely, let $\xi$ be a non-negative random variable representing the outcome of an uncertain event. The usual (linear) expectation of $\xi$ can be written in the form
\bea
\label{exp}
\mathbb{E} [\xi] = \int_0^{\infty} \mathbb{P}(\xi \geq x) dx.
\eea
Probability distortion, on the other hand, considers a ``distorted'' version of the expectation
\bea
\label{distoexp}
\sE [\xi] := \int_0^{\infty}  \varphi \big( \mathbb{P}(\xi \geq x)\big) dx,
\eea
where the distortion function $\varphi: [0, 1] \rightarrow [0, 1]$ is  continuous, strictly increasing, and satisfies $\varphi(0) = 0$, $\varphi(1) = 1$. Economically the most interesting case is that $\varphi$ is reverse $S$-shaped, i.e., $\varphi$ is concave when $p\approx 0$ and is convex when $p\approx 1$. In the special case $\varphi(p) \equiv p$, (\ref{distoexp}) reduces to (\ref{exp}). In general the distorted expectation $\sE[\cdot]$ is nonlinear, i.e., neither subadditive nor superadditive.

While \eqref{distoexp} is useful in many contexts, a major difficulty occurs when one tries to define the ``conditional",  or ``dynamic",  
version of the distorted expectation.  Consider, for example,  a ``naively" defined distorted conditional expectation given the information $\mathcal{F}_t$ at time $t$:
\bea
\label{eqn:naive.conditional.expectation}
{\sE_t}[\xi] = \int_0^{\infty} \varphi \left(\mathbb{P}(\xi > x|\cF_t)\right) dx.
\eea
Then it is easy to check that  in general $ \sE_s [ \sE_t [\xi ] ] \neq \sE_s[\xi]$ for $s < t$, i.e., the ``tower property" or the flow property fails. This is often referred to as a type of ``time-inconsistency" and is studied extensively in stochastic optimal control; see Section \ref{sect-inconsistency} for more discussion.

Motivated by the work Karnam-Ma-Zhang \cite{KMZ}, which provides a new perspective for time-inconsistent optimization problems,  in this paper we find a different way to define the distorted conditional expectation so that it remains {\it time-consistent} in terms of preserving the tower property. To be specific, let $(\Omega, \mathcal{F}, \hF, \mathbb{P})$ be a filtered probability space, where $\hF:= \{\mathcal{F}_t\}_{0 \leq t \leq T}$. We look for a family of operators $\{{\sE}_t\}_{0 \leq t \leq T}$ such that for a given $\cF_T$-measurable random variable $\xi$, it holds
that ${\sE}_0 [\xi] = \sE [\xi]$ as in \eqref{distoexp}, and for $0\le s < t\le T$, the tower property holds: $ \sE_s [ \sE_t [\xi ] ] =  \sE_s[\xi]$.  More generally, we shall construct operators $\sE_{s, t}$ for $0 \leq s \leq t \leq T$ such that $\sE_{r, s} [ \sE_{s, t} [ \xi]] = \sE_{r, t}[\xi]$ for $\cF_t$-measurable $\xi$ and $r \leq s \leq t$. We shall argue that this is possible at least for a large class of random variables: $\xi = g(X_t)$, where $g: \mathbb{R} \rightarrow [0, \infty)$ is increasing, and $X$ is either a binomial tree or a one-dimensional diffusion
\bea
\label{X}
dX_t = b(t, X_t) dt + \sigma(t, X_t) dB_t, \quad t\ge 0. 
\eea
It is worth noting that while the aforementioned class of random variables are somewhat restricted, especially the monotonicity of  $g$, which plays a crucial role in our approach (see Remark \ref{rem-nonmonotone}), it contains a large class of practically useful random variables considered in most publications about probability distortion, where $X$ is the state process and $g$ is a utility function, whence  monotone.

The main idea of our approach is based on the following belief: in a dynamic distorted expectation the form of the distortion function should  depend on the prospective time horizon. Simply put, the distortion function over $[0,T]$ such as that in  \eqref{distoexp} is very likely to be different from that in \eqref{eqn:naive.conditional.expectation}, which is applied only to subintervals of $[0, T]$. We believe this is why \eqref{eqn:naive.conditional.expectation} becomes time-inconsistent. Similar to the idea of ``dynamic utility" in \cite{KMZ}, we propose to {\it localize} the distortion function as follows: given a collection of initial distortion functions $\varphi_t$ corresponding to intervals of the form $[0, t]$, we look for  a dynamic distortion function $\Phi(s, t,x; p)$ such that $\Phi(0, t, X_0; \cdot) = \varphi_t$ (e.g. $\varphi_t \equiv \varphi$), and that the resulting distorted conditional expectation 
\begin{equation} 
\label{eqn:naive.conditional.expectation2}
{\sE}_{s, t}[\xi] = \int_0^{\infty} \Phi\left(s, t, X_s; \mathbb{P}(\xi > y|\cF_s)\right) dy, \quad 0\le s<t\le T,
\end{equation}
is time-consistent for all $\xi = g(X_t)$ with  $g$ being increasing. Intuitively, the dependence of the distortion function $\Phi$ on $(s, t,x)$ could be thought of as the agent's (distorted) view  towards the prospective random events at future time $t$ at current time $s$ and state $x$.

We shall first illustrate this idea in discrete time using a binomial tree model to present all main elements. The diffusion case is conceptually similar but the analysis is much more involved. In both cases, however,  the dynamic distortion function has an interesting interpretation: there exists a probability $\mathbb{Q}$ (equivalent to $\mathbb{P}$ and independent of the increasing function $g$) such that
\[
\Phi\big(s,t, x; \mathbb{P}(X_t \geq y|X_s=x)\big) = \mathbb{Q}(X_t \geq y | X_s = x)
\]
(see Theorems \ref{thm-ddf} and \ref{thm-limit} as well as Remark \ref{rem-cEdiscrete}). We shall refer to $\mathbb{Q}$ as the {\it distorted probability}, so that \eqref{eqn:naive.conditional.expectation2} renders the distorted conditional expectation a usual linear conditional expectation under $\mathbb{Q}$. We should note that such a hidden linear structure, due to the restriction $\xi = g(X_t)$, has not been explored in previous works. In particular, in the continuous time setting, this enables us to show that the conditional expectation $\sE_{s,t} [\xi]$ in \eqref{eqn:naive.conditional.expectation2} can be written as $\sE_{s,t} [\xi] = u(s, X_s)$, where the function $u$ satisfies a linear parabolic PDE whose coefficients depend on the distortion function $\varphi$ and the density of the  underlying diffusion  $X$ defined by \eqref{X}. 

We would like to emphasize that while this paper considers only the conditional expectations, it is the first step towards a long term goal of investigating stochastic optimization problems under probability distortion, as well as other time-inconsistent problems.  In fact,  in a very recent paper He-Strub-Zariphopoulou \cite{HSZ} studied an optimal investment problem under probability distortion and showed that a time-consistent dynamic distortion function of the form $\Phi=\Phi(s,t; p)$ exists if and only if it belongs to the family introduced in Wang \cite{Wang} or the agent does not invest in the risky assets. This result in part validates our general framework, which aims at large class of optimization problems of similar type in a  general setting, by allowing $\Phi$ to depend on the state $X_s$, and even its law. 

The rest of the paper is organized as follows. In Section \ref{sect-inconsistency} we review some approaches in the literature for time-inconsistent stochastic optimization problems, which will put this paper in a right perspective.  In Section \ref{sec:distortion} we recall the notion of probability distortion and introduce our dynamic distortion function. In Section \ref{sec:discrete.time} we construct a time-consistent dynamic distortion function in a discrete time binomial tree framework. In Section \ref{sec:continuous.time} we consider the diffusion case \eqref{X} with constant $\sigma$, and the results are extended to the case with general $\sigma$ in Section \ref{sect:generaldiffusion}. Finally, in Section \ref{sect-density} we study the density of the underlying state process $X$, which is crucial for constructing our dynamic distortion function $\Phi$.

\subsection{Discussion: Time-inconsistency in stochastic control.}
\label{sect-inconsistency}

We begin by recalling the usual meaning of  ``time-inconsistency" in a stochastic optimization problem. Consider a stochastic control problem over time horizon $[0, T]$, denote it by $P_{[0,T]}$, and assume $u^*_{0,T}$ is an optimal control. Now for any $t<T$ we consider the same problem over time horizon $[t, T]$ and denote it by 
$P_{[t,T]}$. The dynamic problems $\{P_{[t, T]}\}_{t\in[0,T]}$ is said to be {\it time-consistent} if $u^*_{0,T}\big|_{[t, T]}$ remains optimal for each $P_{[t,T]}$, and {\it time-inconsistent} if it is not. 

Following Strotz \cite{Strotz}, there are two main approaches for dealing with time-inconsistent problems: {\it precommitment strategy} and {\it consistent planning}. The former approach essentially ignores the inconsistency issue and studies only the problem $P_{[0,T]}$, so it can be viewed as a static problem. The consistent planning approach, also known as the {\it game approach}, assumes that the agent plays with future selves and tries to find an  equilibrium. This approach is by nature dynamic, backward in time, and  time-consistent; and the solution is subgame optimal.  Starting from Ekeland-Lazrak \cite{EL},  the game approach has gained strong traction in the math finance community (see e.g.,  Bjork-Murgoci \cite{BM},  Bjork-Murgoci-Zhou \cite{BMZ}, Hu-Jin-Zhou \cite{HJZ},  and Yong \cite{Yong}, to mention a few). We remark, however, that  mathematically the two approaches actually produce different values. 

In Karnam-Ma-Zhang \cite{KMZ} the authors suggested a different perspective. Instead of using a predetermined ``utility" function for all problems $P_{[t,T]}$ as in the game approach (in the context of probability distortion this means using the same $\varphi$ in \eqref{eqn:naive.conditional.expectation} for all $0\le s<t\le T$), in \cite{KMZ} a {\it dynamic utility} is introduced, in the spirit of the {\it predictable forward utility} in
Musiela-Zariphopoulou \cite{MZ1, MZ2} and Angoshtar-Zariphopoulou-Zhou  \cite{AZZ},  to formulate a new dynamic problem $\tilde P_{[t,T]}$, $t\in[0,T]$.  This new dynamic problem is time-consistent and in the meantime $\tilde P_{[0,T]}$ coincides with 
the precommitment $P_{[0,T]}$.  We should note that similar idea also appeared in the works Cui-Li-Wang-Zhu \cite{CLWZ} and Feinstein-Rudloff \cite{FR1, FR2}. In  \cite{KMZ} it is also proposed to use the dynamic programing principle (DPP) to characterize the 
time-consistency, rather than the aforementioned original definition using optimal control. Such a modification is particularly important in situations  where the optimal control does not exist. Noting that the DPP is nothing but the ``tower property" in the absence of   control, we thus consider this paper the first step towards a more general goal.

\section{Static and dynamic probability distortions.}
\label{sec:distortion}

In this section we define probability distortion and introduce the notion of time-consistent dynamic distortion function.

\subsection{Nonlinear expectation under probability distortion.}
Let $(\Omega, \cF, \mathbb{P})$ be a probability space, and let $\dbL^0_+(\cF)$ be the set of $\cF$-measurable random variables $\xi \ge 0$. The notion of probability distortion (see, e.g., Zhou \cite{Zhou}) consists of two elements: (i) a ``distortion function", and (ii) a Choquet-type integral that defines the ``distorted expectation".   More precisely, we have the following definition.

\medskip

\begin{definition} \label{defn-distortion} { \ }
\begin{enumerate}
	\item[(i)] A mapping $\varphi: [0, 1] \rightarrow [0, 1]$ is called a  distortion function if it is continuous, strictly increasing, and satisfies $\varphi(0) = 0$ and $\varphi(1) = 1$.
	\item[(ii)] For any  random variable $\xi \in \dbL^0_+(\cF)$, the distorted expectation operator (with respect to the distortion function $\varphi$) is defined by \eqref{distoexp}. We denote $\dbL^1_\varphi(\cF) := \{\xi\in \dbL^0_+(\cF): \sE[\xi] < \infty\}$.
\end{enumerate}
\end{definition}

\medskip

\begin{remark} { \ }
\begin{enumerate}
	\item[(i)]  The requirement $\xi \ge 0$ is imposed mainly for convenience. 
	\item[(ii)] If  $\varphi(p) = p$, then $\sE[\xi] = \hE^{\hP}[\xi]$ is the standard expectation under $\hP$.
	\item[(iii)] $\sE[\cdot]$ is law invariant, namely $\sE[\xi]$ depends only on the law of $\xi$.
\end{enumerate}
\end{remark}

The following example shows that $\sE$ is in general neither sub-additive nor super-additive. In particular, it is beyond the scope of  Peng \cite{Peng-G} which studies sub-additive nonlinear expectations. 

\medskip

\begin{example} \label{eg-nonlinear}
Assume $\xi_1$ is a Bernoulli random variable: $\hP(\xi_1 = 0) = p,  \hP(\xi_1=1) = 1- p$, and $\xi_2 := 1-\xi_1$. Then clearly $\sE[\xi_1 + \xi_2] = \sE[1]=1$. However, by \eqref{cEdiscrete} below, we have 
\beaa
\sE[\xi_1] =  \varphi(1-p),\quad \sE[\xi_2] = \varphi(p),\quad \mbox{and thus}\quad  \sE[\xi_1] + \sE[\xi_2] = \varphi(p) + \varphi(1-p).
\eeaa
Depending on $\varphi$ and $p$, $\sE[\xi_1] + \sE[\xi_2] $ can be greater than or less than $1$.
\end{example}

\begin{proposition}  \label{prop-DCT}
Assume all the random variables below are in $ \dbL^0_+(\cF)$. Let $c, c_i \ge 0$ be constants. 
\begin{enumerate}
	\item[(i)] $\sE[c] = c$ and $\sE[c\xi] = c \sE[\xi]$.
	\item[(ii)] If $\xi_1 \le \xi_2$, then $\sE[\xi_1] \le \sE[\xi_2]$. In particular, if $c_1 \le \xi \le c_2$,  then $c_1 \le \sE[\xi] \le c_2$. 
	\item[(iii)] Assume $\xi_k$ converges to $\xi$ in distribution, and $\xi^*:=\sup_k \xi_k \in \dbL^1_\varphi(\cF)$. Then $\sE[\xi_k] \to \sE[\xi]$.
\end{enumerate}
\end{proposition}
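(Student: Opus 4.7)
The proof proceeds part by part, with (i) and (ii) being direct computations from the definition, and (iii) reducing to the classical dominated convergence theorem applied to the survival-function integrands.

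For part (i), the plan is to evaluate each integral from the definition \eqref{distoexp}. For $\sE[c]$, note that $\mathbb{P}(c \ge x) = 1$ if $x \le c$ and $0$ otherwise; combined with $\varphi(0)=0$ and $\varphi(1)=1$, the integral is $\int_0^c 1\,dx = c$. For $\sE[c\xi]$ with $c>0$, I would write $\mathbb{P}(c\xi \ge x) = \mathbb{P}(\xi \ge x/c)$ and perform the change of variable $y = x/c$ to obtain $c\sE[\xi]$. The case $c=0$ reduces to $\sE[0]=0$, which is the previous identity with $c=0$.

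For part (ii), the key observation is the set inclusion $\{\xi_1 \ge x\} \subseteq \{\xi_2 \ge x\}$ whenever $\xi_1 \le \xi_2$, which yields $\mathbb{P}(\xi_1 \ge x) \le \mathbb{P}(\xi_2 \ge x)$. Since $\varphi$ is increasing, the integrands in \eqref{distoexp} are ordered pointwise in $x$, and integration gives $\sE[\xi_1] \le \sE[\xi_2]$. The bounded case then follows from monotonicity combined with $\sE[c_i] = c_i$ from (i).

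For part (iii), I would apply the classical (Lebesgue) dominated convergence theorem to the integrands on $(0,\infty)$. Set $G_k(x) = \mathbb{P}(\xi_k \ge x)$, $G(x) = \mathbb{P}(\xi \ge x)$, and $G^*(x) = \mathbb{P}(\xi^* \ge x)$. Convergence in distribution of $\xi_k$ to $\xi$ gives $G_k(x) \to G(x)$ at every continuity point of $G$; since $G$ is non-increasing it has at most countably many discontinuities, so the convergence holds for Lebesgue-a.e. $x \in (0,\infty)$. Continuity of $\varphi$ then yields $\varphi(G_k(x)) \to \varphi(G(x))$ a.e. For domination, the pointwise bound $\xi_k \le \xi^*$ gives $G_k \le G^*$, whence $\varphi(G_k) \le \varphi(G^*)$ by monotonicity of $\varphi$; this dominating integrand is integrable by the hypothesis $\sE[\xi^*] < \infty$. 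Invoking DCT yields $\sE[\xi_k] \to \sE[\xi]$.

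The only real subtlety I anticipate is in step (iii), namely the passage from convergence in distribution (which a priori gives only convergence of $G_k$ at continuity points of $G$) to Lebesgue-a.e.\ convergence needed for DCT. This is handled cleanly by the monotonicity of $G$, which limits its discontinuity set to a countable, hence Lebesgue-null, set. Everything else is bookkeeping with the definition of $\sE$.
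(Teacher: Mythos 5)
Your proposal is correct and follows essentially the same route as the paper: (i) and (ii) are direct consequences of the definition and monotonicity of $\varphi$, and (iii) is exactly the paper's argument — a.e.\ convergence of $\varphi(\hP(\xi_k\ge x))$ via convergence in distribution plus countability of the discontinuities of the monotone survival function, with $\varphi(\hP(\xi^*\ge x))$ as the integrable dominating function for the dominated convergence theorem.
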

\begin{proof}  
Since $\varphi$ is increasing,  (i) and (ii) can be verified straightforwardly.  To see (iii), note that $\lim_{k\to \infty}\hP(\xi_k\ge x) = \hP(\xi\ge x)$ for all but  countably many values of $x\in (0,\infty)$. By the continuity of $\varphi$, we have $\lim_{k\to \infty}\varphi(\hP(\xi_k\ge x)) =\varphi( \hP(\xi\ge x))$ for Lebesgue-a.e. $x\in [0,\infty)$. Moreover, since $\varphi$ is increasing, $\varphi(\hP(\xi_k\ge x)) \le \varphi(\hP(\xi^*\ge x))$ for all $k$. By \eqref{distoexp} and the dominated convergence theorem we have $\sE[\xi_k] \to \sE[\xi]$. 
\end{proof} 

We now present two special cases that will play a crucial role in our analysis. In particular, they will lead naturally to the concept of {\it distorted probability}.  Let
\bea
\label{dbI}
\cI := \{g: \hR \to [0,\infty):  \mbox{$g$ is bounded, continuous, and increasing}\}.
\eea

\begin{proposition} \label{prop-discrete} { \ }
\begin{enumerate}
	\item[(i)] Assume $\eta \in \dbL^1_\varphi(\cF)$ takes only finitely many values $x_1,\cdots, x_n$. Then
	\bea
	\label{cEdiscrete0}
	\sE[ \eta ] = \sum_{k=1}^n x_{(k)} \left[\varphi\left(\hP(\eta \ge x_{(k)})\right) - \varphi\left(\hP(\eta \geq x_{(k+1)})\right)\right],
	\eea
	where $x_{(1)}\le \cdots\le x_{(n)}$ are the ordered values of $x_1,\cdots, x_n$, and   $x_{(n+1)} := \infty$.
	
	In particular, if $x_1<\cdots< x_n$ and $g\in \cI$, then
	\bea
	\label{cEdiscrete}
	\sE[g(\eta)] = \sum_{k=1}^n g(x_k) \left[\varphi\left(\hP(\eta \ge x_{k})\right) - \varphi\left(\hP(\eta\ge x_{k+1})\right)\right].
	\eea

	\item[(ii)] Assume $\eta \in \dbL^0(\cF)$ has density $\rho$, and $g\in \cI$, $\varphi \in C^1([0, 1])$.  Then
	\bea
	\label{cEcont}
	\sE[g(\eta)] = \int_{-\infty}^\infty g(x) \rho(x) \varphi'(\hP(\eta\ge x))dx.
	\eea
\end{enumerate}
\end{proposition}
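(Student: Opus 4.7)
For part (i), the plan is a direct calculation from the definition of $\sE$. Because $\eta$ takes only the finitely many values $x_{(1)}\le\cdots\le x_{(n)}$, all nonnegative since $\eta\in\dbL^0_+(\cF)$, the survival function $y\mapsto\hP(\eta\ge y)$ is a right-continuous step function equal to $\hP(\eta\ge x_{(j)})$ on $(x_{(j-1)},x_{(j)}]$ for $j=1,\dots,n$ (using the convention $x_{(0)}:=0$) and equal to $0$ on $(x_{(n)},\infty)$. Integrating $\varphi(\hP(\eta\ge y))$ piece by piece therefore gives
\[
\sE[\eta]=\int_0^\infty\varphi(\hP(\eta\ge y))\,dy=\sum_{j=1}^n\varphi(\hP(\eta\ge x_{(j)}))\,(x_{(j)}-x_{(j-1)}),
\]
and an Abel summation, using $\varphi(\hP(\eta\ge x_{(1)}))=\varphi(1)=1$, $\varphi(0)=0$, and $x_{(0)}=0$, rearranges this into \eqref{cEdiscrete0}. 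Formula \eqref{cEdiscrete} then follows by applying \eqref{cEdiscrete0} to $g(\eta)$: when $g$ is strictly increasing at each $x_k$, the ordered distinct values of $g(\eta)$ are $g(x_1)<\cdots<g(x_n)$ and $\{g(\eta)\ge g(x_k)\}=\{\eta\ge x_k\}$, so the two sides match term-by-term; when $g$ is merely increasing, consecutive values $g(x_k)$ may coincide, but grouping such indices into the (contiguous by monotonicity) blocks on which $g(x_k)$ is constant, the internal differences inside each block telescope, again reducing the right-hand side of \eqref{cEdiscrete} to the expression produced by \eqref{cEdiscrete0}.

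For part (ii), the idea is to introduce a companion probability under which the distortion becomes a linear expectation. Because $\eta$ has density $\rho$, the survival function $\bar F(x):=\hP(\eta\ge x)$ is absolutely continuous with $\bar F'(x)=-\rho(x)$, and since $\varphi\in C^1([0,1])$ the chain rule shows that $\tilde F(x):=1-\varphi(\bar F(x))$ is a continuous, absolutely continuous CDF on $\hR$ with density $\tilde\rho(x):=\varphi'(\bar F(x))\rho(x)$. Let $\tilde\eta$ be a random variable with distribution $\tilde F$. On one hand, the standard density formula gives
\[
\hE[g(\tilde\eta)]=\int_{-\infty}^\infty g(x)\,\varphi'(\hP(\eta\ge x))\,\rho(x)\,dx,
\]
which is exactly the right-hand side of \eqref{cEcont}. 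On the other hand, by the continuity and monotonicity of $g$, the set $\{g\ge y\}$ is a closed half-line $[g^{-1}(y),\infty)$ with $g^{-1}(y):=\inf\{x:g(x)\ge y\}$, so the continuity of $\tilde F$ yields
\[
\hP(g(\tilde\eta)\ge y)=1-\tilde F(g^{-1}(y))=\varphi(\bar F(g^{-1}(y)))=\varphi(\hP(g(\eta)\ge y)).
\]
Combining this with the classical layer-cake identity $\hE[g(\tilde\eta)]=\int_0^\infty\hP(g(\tilde\eta)\ge y)\,dy$ gives $\hE[g(\tilde\eta)]=\sE[g(\eta)]$, and equating the two expressions for $\hE[g(\tilde\eta)]$ produces \eqref{cEcont}.

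The most delicate step is the half-line identity $\{g\ge y\}=[g^{-1}(y),\infty)$: monotonicity makes this set a half-line at all, and continuity of $g$ upgrades it from open to closed, which is what allows $\hP(g(\tilde\eta)\ge y)$ to be read off from $\tilde F$ without any boundary correction (this is also where nonstrict monotonicity of $g$ is handled transparently). As an alternative route, one could derive (ii) directly from (i) by approximating $\eta$ from below by discrete random variables $\eta_N$ (for instance, round $\eta$ down to the nearest multiple of $1/N$ on a growing bounded window), applying \eqref{cEdiscrete} to $g(\eta_N)$, and passing to the limit using Proposition~\ref{prop-DCT}(iii); the discrete sums are then Riemann sums converging to the integral in \eqref{cEcont} by the continuity of the integrand.
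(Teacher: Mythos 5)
Your proof is correct. Part (i) is essentially the paper's own argument: the survival function of $\eta$ is a step function, the integral becomes a finite sum, and Abel rearrangement gives \eqref{cEdiscrete0}; your careful handling of ties in $g(x_k)$ when deducing \eqref{cEdiscrete} is a detail the paper leaves implicit, and it checks out.

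Part (ii), however, follows a genuinely different route. The paper first proves \eqref{cEcont} for bounded, differentiable, strictly increasing $g$ by integration by parts in the variable $x$ (formula \eqref{sEgsmooth}), and then removes the smoothness and strict monotonicity by uniform approximation of $g$ together with the convergence result of Proposition \ref{prop-DCT}(iii) (its Steps 3--4 even push beyond the class $\cI$ to discontinuous and unbounded increasing $g$). You instead fix $\eta$ once and for all, observe that $\tilde F:=1-\f(\hP(\eta\ge\cdot))$ is an absolutely continuous CDF with density $\tilde\rho(x)=\f'(\hP(\eta\ge x))\rho(x)$, and show that for every $g\in\cI$ the distorted expectation $\sE[g(\eta)]$ equals the ordinary expectation of $g$ under this one distorted law, via the layer-cake identity and the half-line identity $\{g\ge y\}=[g^{-1}(y),\infty)$ (where continuity and monotonicity of $g$ enter, exactly as you flag). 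This avoids any approximation in $g$ and in effect proves directly the ``common distorted probability'' interpretation that the paper only records afterwards in Remark \ref{rem-cEdiscrete}(ii),(iv), deducing \eqref{cEcont} as a corollary; the trade-off is that the paper's approximation scheme yields \eqref{cEcont} for a wider class of $g$ than the statement requires, while your argument is shorter and makes the hidden linear structure explicit. Your edge cases ($y$ below $\inf g$ or above $\sup g$) are handled by $\f(0)=0$, $\f(1)=1$ and deserve a one-line mention, but this is cosmetic, not a gap.
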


\begin{proof}  
 (i) Denote $x_{(0)} := 0$. It is clear that $\hP(\eta \ge x) = \hP(\eta\ge x_{(k)})$  for $ x\in (x_{(k-1)}, x_{(k)}]$. Then
\beaa
\sE[\eta] =\int_0^{\infty} \varphi(\hP(\eta \geq x)) dx=\sum_{k=1}^n [x_{(k)} - x_{(k-1)}] \varphi(\hP(\eta \geq x_{(k)}),
\eeaa
which implies \eqref{cEdiscrete} by using a simple Abel rearrangement as well as the fact $\varphi(\hP(\eta \ge x_{(n+1)})) =0$. 

(ii)  We proceed in four steps.

{\it Step 1.} Assume $g$ is bounded, strictly increasing,  and differentiable. Let $a :=  g(-\infty), b :=  g(\infty)$. Then, 
$\varphi(\hP(g(\eta)\ge x))=1$, $x\le a$; $\varphi(\hP(g(\eta)\ge x))=0$, $x\ge b$, and integration by parts yields
\begin{eqnarray}
\label{sEgsmooth}
\sE[g(\eta)] &=& a + \int_a^b \varphi(\hP(g(\eta) \ge x)) dx = a + \int_{-\infty}^\infty \varphi(\hP(\eta \ge x)) g'(x) dx\\
&=& a + \left.\varphi(\hP(\eta \ge x)) g(x)\right|_{x=-\infty}^{x=\infty} - \int_{-\infty}^\infty g(x)  {d\over dx}\left(\varphi(\hP(\eta \ge x))\right) dx\nonumber\\
&=& \int_{-\infty}^\infty g(x)  \rho(x) \varphi'(\hP(\eta \ge x)) dx.\nonumber
\end{eqnarray}

{\it Step 2.} Assume $g$ is bounded, increasing,  and continuous.  One can easily construct $g_n$ such that each $g_n$ satisfies the requirements in Step 1 and $g_n$ converges to $g$ uniformly. By Step 1, \eqref{cEcont} holds for each $g_n$. Send $n\to \infty$ and apply Proposition \ref{prop-DCT} (iii) we prove \eqref{cEcont} for $g$. 

{\it Step 3.} Assume $g$ is increasing and bounded by a constant $C$. For any $\e>0$, one can construct a continuous and increasing function $g_\e$ and an open set $O_\e$ such that $|g_\e|\le C$, $|g_\e(x) - g(x)|\le \e$ for $x\notin O_\e$,   and the Lebesgue measure $|O_\e|\le \e$.  Then \eqref{cEcont} holds for each $g_\e$. Note that
\beaa
\hE\left[|g_\e(\eta) - g(\eta)|\right] \le \e + 2C \hP(\eta\in O_\e) = \e + 2C \int_{O_\e} \rho(x) dx \to 0\quad \mbox{as}~ \e\to 0.
\eeaa
Then $g_\e(\eta) \to g(\eta)$ in distribution and thus $\sE[g_\e(\eta)]\to \sE[g(\eta)]$ by Proposition \ref{prop-DCT} (iii). Similarly,
\beaa
\int_{-\infty}^\infty |g_\e(x)-g(x)| \rho(x) \varphi'(\hP(\eta\ge x))dx \le \e + 2C \int_{O_\e} \rho(x) \varphi'(\hP(\eta\ge x))dx  \to 0.
\eeaa
Then we obtain \eqref{cEcont} for $g$. 

{\it Step 4.} In the general case, denote $g_n := g \wedge n$. Then \eqref{cEcont} holds for each $g_n$ and $g_n \uparrow g$. By monotone convergence theorem, 
\beaa
\lim_{n\to\infty}  \int_{-\infty}^\infty g_n(x) \rho(x) \varphi'(\hP(\eta\ge x))dx =  \int_{-\infty}^\infty g(x) \rho(x) \varphi'(\hP(\eta\ge x))dx.
\eeaa
If $g(\eta) \in \dbL^1_\varphi(\cF)$, then by Proposition \ref{prop-DCT} (iii) we obtain \eqref{cEcont} for $g$. Now assume $\sE[g(\eta)]=\infty$. Following the arguments in  Proposition \ref{prop-DCT} (iii), note that $\hP(g_n(\eta) \ge x) \uparrow \hP(g(\eta)\ge x)$ for Lebesgue-a.e. $x\in[0,\infty)$, as $n\to\infty$. Then by monotone convergence theorem one can verify that $\sE[g_n(\eta)]=\int_0^\infty \varphi(\hP(g_n(\eta) \ge x)) dx ~ \uparrow ~ \int_0^\infty \varphi(\hP(g(\eta) \ge x)) dx=\sE[g(\eta)]$, proving \eqref{cEcont} again.  
\end{proof}

\medskip

\begin{remark} \label{rem-cEdiscrete} { \ }
\begin{enumerate}
	\item[(i)] In the discrete case, the formula \eqref{cEdiscrete} can be interpreted as follows. For each $k$, define the {\it distorted probability} $q_{k}$ by
	\bea
	\label{distprob}
	q_k := \varphi(\hP(\eta \geq x_{k})) - \varphi(\hP(\eta \geq x_{k + 1})),\quad k = 1, 2, \ldots, n.
	\eea
	Then $q_k \ge 0$, $\sum_{k=1}^n q_k = 1$, and  ${\sE}[g(\eta)] = \sum_{k = 1}^n g(x_{k}) q_{k}$. So $\{q_k\}$ plays the role of a ``probability distribution", and $\sE$ is the usual linear expectation under the (distorted) probability $\{q_k\}$. This observation will be the foundation of  our analysis below.
	\item[(ii)] In the continuous case, the situation is similar. Indeed, denote $\widetilde \rho(x) := \rho(x) \varphi'(\hP(\eta \ge x))$. Then $\widetilde \rho$ is also a density function, and by \eqref{cEcont}, $\sE[g(\eta)] = \int_{-\infty}^\infty g(x) \widetilde \rho(x) dx$ is the usual expectation under the distorted density $\widetilde \rho$ of $\eta$. 
	\item[(iii)] Although the operator $\sE: \dbL^1_\varphi(\cF)\to [0, \infty)$ is nonlinear in general,  for fixed $\eta$, the restricted mapping $g\in\cI \mapsto \sE[g(\eta)]$ is linear  under non-negative linear combinations. 
	\item[(iv)] Actually, for any $\xi\in \dbL_\varphi^1(\cF)$, note that $F_\xi(x) := 1- \varphi(\hP(\xi\ge x))$, $x\ge 0$, is a cdf, and thus defines a distorted probability measure $\hQ^\xi$ such that $\sE[\xi] = \hE^{\hQ^\xi}[\xi]$. However, this $\hQ^\xi$ depends on $\xi$. The main feature in \eqref{cEdiscrete} and \eqref{cEcont} is that, for a given $\eta$, we find a common distorted probability measure for all $\xi\in \{g(\eta): g\in \cI\}$.
\end{enumerate}
\end{remark}

\subsection{Time-inconsistency.}
Let $0\in \cT \subset [0, \infty)$ be the set of possible times, and $X=\{X_t\}_{t\in \cT}$ be a Markov process with deterministic $X_0$.
Denoting $\hF = \{\cF_t\}_{t\in \cT}= \hF^X$ be the filtration generated by $X$,  we want to define an $\cF_t$-measurable conditional expectation $\sE_t[\xi]$ such that each $\sE_t[\xi]$ is $\cF_t$-measurable, and the following ``tower property" (or ``flow property") holds (we will consider $\sE_{s, t}$ later on):
\bea
\label{flow}
\sE_s \left[\sE_t[\xi]\right] = \sE_s[\xi],\quad \mbox{for all}~ s, t \in \cT \text{ such that }  s < t.
\eea

We note that the tower property (\ref{flow}) is standard for the usual (linear) expectation as well as the sub-linear $G$-expectation of Peng \cite{Peng-G}. It is also a basic requirement of the so-called {\it dynamic risk measures} (see e.g. Bielecki-Cialenco-Pitera \cite{BCP}). However, under probability distortion, the simple-minded definition of the conditional expectation given by \eqref{eqn:naive.conditional.expectation} could very well be time-inconsistent. Here is a simple explicit example:

\medskip

\begin{example} \label{eg-inconsistent}
Consider a two period binomial tree model: $X_t = \sum_{i=1}^t \zeta_i$, $ t\in \cT :=\{0,1,2\}$, where $\zeta_1$, $\zeta_2$ are independent Rademacher random variables with  $\hP(\zeta_i=\pm 1)  = {1\over 2}$, $i=1,2$.  Let $\varphi(p) := p^2$, $\xi := g(X_2)$ for some strictly increasing function $g$, and $\sE_1[\xi]$ be defined by \eqref{eqn:naive.conditional.expectation}. Then 
\bea
\label{inconsistent}
\sE\left[\sE_1[\xi]\right] \neq \sE[\xi].
\eea 
\end{example}
\begin{proof}  
By \eqref{cEdiscrete}, we have
\beaa
\left.\sE_1[\xi]\right|_{X_1=-1} = g(-2) \left[1-\varphi({1\over 2})\right] + g(0) \varphi({1\over 2}),\quad  \left.\sE_1[\xi]\right|_{X_1=1} = g(0) \left[1-\varphi ({1\over 2})\right] + g(2) \varphi ({1\over 2}).
\eeaa
Note that $\sE_1[\xi]\big|_{X_1=-1}  <  \sE_1[\xi]\big|_{X_1=1}$ since $g$ is strictly increasing. Then,  by \eqref{cEdiscrete} again, we have
\begin{eqnarray}
\label{inconsistentEExi}
&&\sE\left[ \sE_1[\xi]\right] =  \left.\sE_1[\xi]\right|_{X_1=-1}  \left[1-\varphi ({1\over 2})\right]  +  \left.\sE_1[\xi]\right|_{X_1=1} \varphi ({1\over 2})\\
&=& g(-2) \left[1-\varphi ({1\over 2}) \right]^2 + 2 g(0) \varphi ({1\over 2} )\left[1-\varphi ({1\over 2})\right] + g(2) \left[\varphi({1\over 2})\right]^2=
{9\over 16} g(-2) + {3\over 8} g(0) + {1\over 16} g(2). \nonumber
\end{eqnarray}
On the other hand, by \eqref{cEdiscrete} we also have
\bea
\label{inconsistentExi}
\sE[\xi] =  g(-2) \left[1- \varphi ({3\over 4}) \right] + g(0) \left[\varphi ({3\over 4}) - \varphi ({1\over 4})\right] + g(2) \varphi ({1\over 4})= {7\over 16} g(-2) + {1\over 2} g(0) + {1\over 16} g(2).
\eea
Comparing (\ref{inconsistentEExi}) and (\ref{inconsistentExi}) and noting that $g(-2) < g(0)$, we obtain $\sE\left[ \sE_1[\xi]\right]  <  \sE[\xi]$.  
\end{proof}

\subsection{Time-consistent dynamic distortion function.}
As mentioned in the Introduction, an apparent reason for the time-inconsistency of the ``naive" distorted conditional  expectation \eqref{eqn:naive.conditional.expectation} is that the distortion function $\varphi$ is time-invariant.  Motivated by the idea of {\it dynamic utility} in Karnam-Ma-Zhang \cite{KMZ}, we introduce the notion of {\it time-consistent dynamic distortion function} which forms the framework of this paper.  Denote
$$
\cT_2 := \{(s, t) \in \cT \times \cT: s< t\}.
$$

\begin{definition}    \label{defn-ddf} {\ }
\begin{enumerate}
	\item[(i)] A  mapping  $\Phi: \cT_2 \times \hR \times [0, 1] \to [0, 1]$ is called a dynamic distortion function if it is jointly Lebesgue measurable in $(x, p)$ for any $(s, t)\in\cT_2$ and, for each $(s, t,x)\in \cT_2\times \hR$, the mapping $p\in [0,1] \mapsto \Phi(s, t,x; p)$ is a distortion function in the sense of Definition \ref{defn-distortion}.
	\item[(ii)] Given a dynamic distortion function $\Phi$,  for any $(s, t)\in \cT_2$ we define $\sE_{s,t}$ as follows:
	\bea
	\label{condcE}
	\sE_{s,t}[\xi] := \int_0^\infty \Phi(s, t,X_s; \hP(\xi\ge x|\cF_s)) dx,\quad \xi\in \dbL^0_+(\sigma(X_t)).
	\eea
	\item[(iii)] We say a dynamic distortion function $\Phi$ is time-consistent if the tower property holds:
	\begin{equation}
	\label{flow2}
	\sE_{r,t}[g(X_t)] =  \sE_{r,s}\left[ \sE_{s,t}[g(X_t)]\right],\quad r,s,t\in \cT, ~0\le r<s<t\le T, ~g\in \cI. 
	\end{equation}
\end{enumerate}
\end{definition}

\medskip

\begin{remark} \label{rem-ddf} { \ }
	\begin{enumerate}
		\item[(i)] Compared to the naive definition \eqref{eqn:naive.conditional.expectation}, the dynamic distortion function in \eqref{condcE} depends also on the current time 
		$s$, the ``terminal" time $t$, and the current state $x$.  This enables us to describe  different (distorted) perceptions of   future events at different times and states.  For example, people may feel very differently towards a catastrophic event that might happen tomorrow as opposed to ten years later with the same probability.  
		\item[(ii)] In this paper we apply $\sE_{s, t}$ only on $\xi=g(X_t)$ for some $g\in \cI$. As we saw in Remark \ref{rem-cEdiscrete}, in this case the operator $\sE_{s,t}$ will be linear in $g$. The general case with non-monotone $g$ (or even path dependent $\xi$)  seems to be very challenging and will be left to  future research, see Remark \ref{rem-nonmonotone} below. It is worth noting, however, that in many applications $g$ is a utility function, which is indeed increasing. 
		\item[(iii)] Given $g\in \cI$, one can easily show that $\sE_{s,t}[g(X_t)] = u(s, X_s)$ for some function $u(s,\cdot)\in \cI$. This justifies the right side of \eqref{flow2}.
	\end{enumerate}
\end{remark}

\medskip

Now, for each $0<t\in \cT$, we assume that an initial distortion function $\varphi_t(\cdot)$ is given (a possible choice is $\varphi_t \equiv \varphi$) as the perspective at time $0$ towards the future events at $t>0$.   Our goal is  to construct a time-consistent dynamic distortion function $\Phi$ such that $\Phi(0, t, X_0; \cdot) = \varphi_t(\cdot)$ for all $0<t\in \cT$. We shall consider models both in discrete time and in continuous time.

\section{The binomial tree case.} 
\label{sec:discrete.time}
In this section we consider a binomial tree model which contains all the main ideas of our approach. Let $\{\varphi_t\}_{t\in \cT\backslash \{0\}}$  be a given family of initial distortion functions.

\subsection{The two-period binomial tree case.} 
\label{sec:2period}

To illustrate our main idea, let us first consider the simplest case when $X$ follows a two-period binomial tree as in Example \ref{eg-inconsistent} (see the left graph in Figure \ref{fig:binomial_tree1}). Let  $\xi = g(X_2)$ where $g\in \cI$.  We shall construct $\Phi(1, 2, x; p)$ and $\sE_{1,2}[\xi]$.  

Note that $\Phi(0,t, 0; \cdot) = \varphi_t(\cdot)$ for $t=1,2$, by  \eqref{cEdiscrete} we have
\begin{equation}
\label{cEX2}
{\sE}_{0,2} [\xi] = g(-2) \left[\varphi_2(1) - \varphi_2(\frac{3}{4})\right] + g(0) \left[\varphi_2 (\frac{3}{4} ) - \varphi_2 (\frac{1}{4} ) \right] + g(2)  \left[\varphi_2 (\frac{1}{4} ) - \varphi_2(0) \right].
\end{equation}
Here we write $\varphi_2(0)$ and $\varphi_2(1)$, although their values are $0$ and $1$, so that formula \eqref{Phi1} below will be more informative when extending to multi-period models. Assume $\sE_{1,2}[\xi] = u(1, X_1)$. Then by definition we should have
\begin{eqnarray}
\label{u1}
u(1, -1) &=& g(-2) \left[1 - \Phi(1, 2, -1; {1\over 2})\right] + g(0) \Phi(1,2, -1; {1\over 2}),\\
u(1, 1) &=& g(0) \left[1 - \Phi(1, 2,1; {1\over 2})\right] + g(2) \Phi(1, 2, 1; {1\over 2}).
\end{eqnarray}
Assume now that $u(1,\cdot)$ is also increasing,  then by  \eqref{cEdiscrete} again we have
\begin{equation}
\label{cEX1}
{\sE}_{0,1} \left[\sE_{1,2}[\xi]\right] = \sE_{0,1}\left[u(1, X_1)\right]  = u(1, -1) \left[\varphi_1(1) - \varphi_1(\frac{1}{2})\right] + u(1,1) \left[\varphi_1 (\frac{1}{2} ) - \varphi_1(0) \right].
\end{equation}
Plugging \eqref{u1} into \eqref{cEX1}:
\beaa
{\sE}_{0,1} \left[\sE_{1,2}[\xi]\right] &=& g(-2) \left[1- \Phi(1, 2, -1; {1\over 2})\right] \left[\varphi_1(1)- \varphi_1(\frac{1}{2})\right]  +  g(2)  \Phi(1, 2, 1; {1\over 2})[\varphi_1 (\frac{1}{2} )-\varphi_1(0)]\\
&&+ g(0) \left[ \Phi(1,2, -1; {1\over 2}) \left[\varphi_1(1)- \varphi_1(\frac{1}{2})\right]   + \left[1- \Phi(1, 2, 1; {1\over 2})\right]  [\varphi_1 (\frac{1}{2} )-\varphi_1(0)]\right].
\eeaa
Recall from \eqref{flow2} that we want  the above to be equal to \eqref{cEX2} for all $g\in \cI$. This leads to a natural and unique choice:
\bea
\label{Phi1}
\Phi(1, 2,-1; {1\over 2}) := \frac{\varphi_2(\frac{3}{4}) - \varphi_1(\frac{1}{2})}{\varphi_1(1) - \varphi_1(\frac{1}{2})},\quad  \Phi(1, 2, 1; {1\over 2}):=  \frac{\varphi_2(\frac{1}{4}) - \varphi_1(0)}{\varphi_1(\frac{1}{2}) - \varphi_1(0)}.
\eea
Consequently, \eqref{u1} now reads 
\bea
\label{u2}
u(1, -1) &=& g(-2) \left[1-\Phi(1, 2,-1; {1\over 2})\right] + g(0)\Phi(1, 2,-1; {1\over 2}),\\
u(1, 1) &=& g(0) \left[1-\Phi(1, 2, 1; {1\over 2})\right] + g(2) \Phi(1, 2, 1; {1\over 2}).
\eea
Note that since $\varphi_2(\cdot)$ is strictly increasing. Assuming further $\varphi_2({1\over 4}) < \varphi_1({1\over 2}) < \varphi_2({3\over 4})$ and using \eqref{Phi1}, we have  
\bea
\label{mon1}
0< \Phi(1, 2,-1; {1\over 2})< 1,\quad 0< \Phi(1, 2, 1; {1\over 2})< 1.
\eea
Note that  (\ref{u2}) and \eqref{mon1} imply that $u(1,-1) \le g(0) \le u(1, 1)$, thus $u(1,\cdot)$ is indeed increasing. 

Finally, we note that the distorted  expectations $\sE_{0,1}[u(1, X_1)]$,  $\sE_{0,2}[g(X_2)]$, and the distorted conditional expectation $\sE_{1,2}[g(X_2)]$ can be viewed as  a standard  expectation and conditional expectation, but under a new {\it distorted probability measure} described in the right graph in Figure \ref{fig:binomial_tree1}, where 
\begin{equation}
\label{qij2}
q_{0,0}^+ := \varphi_1({1\over 2}),\quad q_{1,1}^+ := \frac{\varphi_2(\frac{1}{4}) - \varphi_1(0)}{\varphi_1(\frac{1}{2}) - \varphi_1(0)},\quad q_{1,0}^+ := \frac{\varphi_2(\frac{3}{4}) - \varphi_1(\frac{1}{2})}{\varphi_1(1) - \varphi_1(\frac{1}{2})},\quad q_{i,j}^- := 1- q_{i,j}^+.
\end{equation}
This procedure resembles finding the risk-neutral measure in  option pricing theory, whereas the arguments of $\varphi_t$ in \eqref{qij2} represent the quantiles of the simple random walk. 

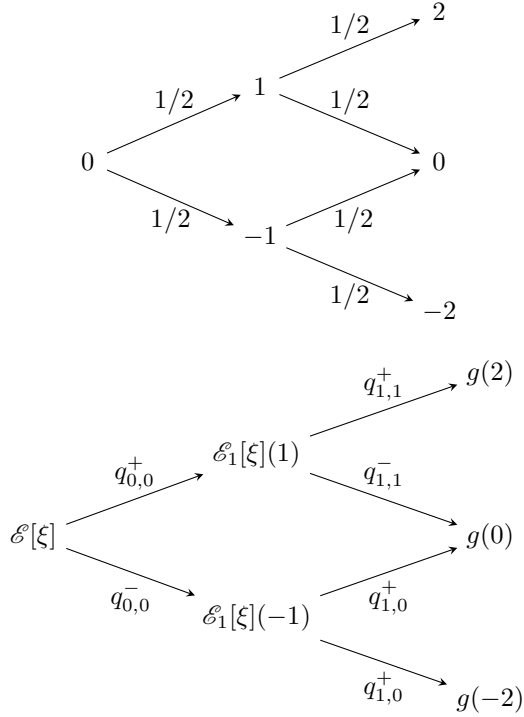
\begin{figure}[t!]
	\centering
	\begin{tikzpicture}[>=stealth]
	\matrix (tree) [
	matrix of nodes,
	minimum size=0.5cm,
	column sep=1.7cm,
	row sep=0.5cm,
	]
	{
		&              & $2$ \\
		&  $1$ &   \\
		$0$ &               & $0$ \\
		& $-1$ &   \\
		&              & $-2$ \\
	};
	\draw[->] (tree-3-1) -- (tree-2-2) node [midway,above] {$1/2$};
	\draw[->] (tree-3-1) -- (tree-4-2) node [midway,below] {$1/2$};
	\draw[->] (tree-2-2) -- (tree-1-3) node [midway,above] {$1/2$};
	\draw[->] (tree-2-2) -- (tree-3-3) node [midway,above] {$1/2$};
	\draw[->] (tree-4-2) -- (tree-3-3) node [midway,below] {$1/2$};
	\draw[->] (tree-4-2) -- (tree-5-3) node [midway,below] {$1/2$};
	\end{tikzpicture}\quad
	\begin{tikzpicture}[>=stealth]
	\matrix (tree) [
	matrix of nodes,
	minimum size=0.5cm,
	column sep=1.7cm,
	row sep=0.5cm,
	]
	{
		&              & $g(2)$ \\
		&  ${\sE}_1 [\xi](1)$ &   \\
		$ \sE[ \xi]$ &               & $g(0)$ \\
		& ${\sE}_1 [\xi](-1)$ &   \\
		&              & $g(-2)$ \\
	};
	\draw[->] (tree-3-1) -- (tree-2-2) node [midway,above] {$q_{0,0}^+$};
	\draw[->] (tree-3-1) -- (tree-4-2) node [midway,below] {$q_{0,0}^-$};
	\draw[->] (tree-2-2) -- (tree-1-3) node [midway,above] {$q_{1,1}^+$};
	\draw[->] (tree-2-2) -- (tree-3-3) node [midway,above] {$q_{1,1}^-$};
	\draw[->] (tree-4-2) -- (tree-3-3) node [midway,below] {$q_{1,0}^+$};
	\draw[->] (tree-4-2) -- (tree-5-3) node [midway,below] {$q_{1,0}^+$};
	\end{tikzpicture}
	\caption{Two period binomial tree: left for $X$ and right for ${\sE}_t [\xi]$, with $(q_{i,j}^+, q_{i,j}^-)$ in \eqref{qij2}.} 
	\label{fig:binomial_tree1}
\end{figure}

\medskip

\begin{remark} \label{rem-nonmonotone} { \ }
	\begin{enumerate}
		\item[(i)] We now explain why it is crucial to assume  $g\in \cI$. Indeed, assume instead that $g$ is decreasing. Then by \eqref{cEdiscrete0} and following similar arguments we can see that
		\beaa
		\Phi(1, 2,1; {1\over 2}) = \frac{\varphi_2(\frac{3}{4}) - \varphi_1(\frac{1}{2})}{\varphi_1(1) - \varphi_1(\frac{1}{2})},\quad  \Phi(1, 2, -1; {1\over 2})=  \frac{\varphi_2(\frac{1}{4}) - \varphi_1(0)}{\varphi_1(\frac{1}{2}) - \varphi_1(0)}.
		\eeaa
		This is in general different from \eqref{Phi1}. That is, we cannot find a common time-consistent dynamic distortion function which works for both increasing and  decreasing functions $g$.
		
		\item[(ii)] For a fixed (possibly non-monotone) function $g: \hR \to [0, \infty)$, it is possible to construct $\Phi$ such that $\sE_{0,2}[g(X_2)] = \sE_{0,1}[\sE_{1,2}[g(X_2)]]$. However, this $\Phi$ may depend on $g$. It seems to us that this is too specific and thus is not desirable.  
		
		\item[(iii)] Another challenging case is when $X$ has crossing edges. This destroys the crucial monotonicity in a different way and $\Phi$ may not exist, as we shall see in Example \ref{eg-crossing} below. There are two ways to understand the main difficulty here:  for the binary tree in Figure \ref{fig:binary_tree} and for $g\in \cI$, 
		\begin{itemize}
			\item $u(1, -1)$ is the weighted average of $g(-2)$ and $g(1)$, and $u(1, 1)$ is the weighted average of $g(-1)$ and $g(2)$. Since $g(-1)<g(1)$, for any given $\Phi$, there exists some $g\in \cI$ such that $u(1, -1) > u(1,1)$, namely $u(1,\cdot)$ is not increasing in $x$.
			\item In $\sE_{1,2}[g(X_2)]$ the conditional probability $p_2 = \hP(X_2=1|X_1=-1)$ would contribute to the weight of $g(1)$, but not to that of $g(-1)$. However, since $g(-1) <g(1)$, in $\sE_{0,2}[g(X_2)]$ the $p_2$ will contribute to the weight of $g(1)$ as well. This discrepancy destroys the tower property.  The same issue also arises in continuous time when the diffusion coefficient is non-constant; see Remark \ref{rem-general} below.
		\end{itemize}
	\end{enumerate}
\end{remark}

\medskip

The following example shows that it is essential to require that the tree is recombining.

\medskip

\begin{example}
	\label{eg-crossing}
	Assume $X$ follows the binary tree in Figure \ref{fig:binary_tree} and  $g\in \cI$ is strictly increasing. Then in general there is no time-consistent dynamic distortion function $\Phi$.
	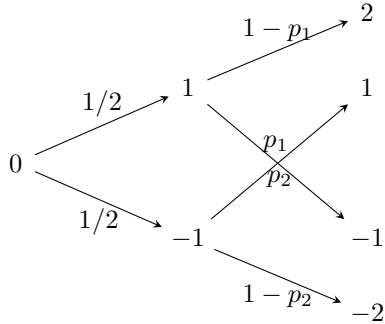
\begin{figure}[h]
		\centering
		\begin{tikzpicture}[>=stealth]
		\matrix (tree) [
		matrix of nodes,
		minimum size=0.5cm,
		column sep=1.7cm,
		row sep=0.5cm,
		]
		{
			&              & $2$ \\
			&  $1$ &   $1$\\
			$0$ &               &  \\
			& $-1$ &   $-1$\\
			&              & $-2$ \\
		};
		\draw[->] (tree-3-1) -- (tree-2-2) node [midway,above] {$1/2$};
		\draw[->] (tree-3-1) -- (tree-4-2) node [midway,below] {$1/2$};
		\draw[->] (tree-2-2) -- (tree-1-3) node [midway,above] {$1-p_1$};
		\draw[->] (tree-2-2) -- (tree-4-3) node [midway,above] {$p_1$};
		\draw[->] (tree-4-2) -- (tree-2-3) node [midway,below] {$p_2$};
		\draw[->] (tree-4-2) -- (tree-5-3) node [midway,below] {$1-p_2$};
		\end{tikzpicture}
		\caption{A two period binary tree with crossing edges.} 
		\label{fig:binary_tree}
	\end{figure}
\end{example}
\begin{proof}  
By \eqref{cEdiscrete0} we have
\beaa
\sE_{0,2}[g(X_2)] &=&  g(-2)[1-\varphi_2({1+p_2\over 2})] + g(-1) [\varphi_2({1+p_2\over 2}) - \varphi_2({1-p_1+p_2\over 2})] \\
&&+ g(1) [\varphi_2({1-p_1+p_2\over 2}) - \varphi_2({1-p_1\over 2})] + g(2) \varphi_2({1-p_1\over 2}).
\eeaa 
Assume $\sE_{1,2}[g(X_2)] = u(1, X_1)$. Then by definition we should have
\beaa
u(1, -1) &=& g(-2) \left[1 - \Phi(1, 2, -1; p_2)\right] + g(1)\Phi(1,2, -1; p_2),\\
u(1, 1) &=& g(-1) \left[1 - \Phi(1, 2,1; 1-p_1)\right] + g(2)\Phi(1, 2, 1; 1-p_1).
\eeaa
Assume without loss of generality that $u(1,-1)<u(1,1)$, and the case $u(1,1)<u(1,-1)$ can be analyzed similarly. Then 
\beaa
{\sE}_{0,1} \left[\sE_{1,2}[g(X_2)]\right] &=& g(-2) \left[1- \Phi(1, 2, -1; p_2)\right] \left[1- \varphi_1(\frac{1}{2})\right]  +   g(1) \Phi(1, 2, -1; p_2) \left[1- \varphi_1(\frac{1}{2})\right]\\
&&+ g(-1) \left[1 - \Phi(1, 2,1; 1-p_1)\right] \varphi_1 (\frac{1}{2} )+ g(2)\Phi(1, 2, 1; 1-p_1)  \varphi_1 (\frac{1}{2} ).
\eeaa
If the tower property holds: $\sE_{0,2}[g(X_2)] = {\sE}_{0,1} \big[\sE_{1,2}[g(X_2)]\big]$ for all $g\in \cI$, comparing the weights of $g(-1)$ and $g(2)$ we have
\beaa
&\big[1 - \Phi(1, 2,1; 1-p_1)\big] \varphi_1 (\frac{1}{2} ) =  \varphi_2({1+p_2\over 2}) - \varphi_2({1-p_1+p_2\over 2}),\\
& \Phi(1, 2, 1; 1-p_1)  \varphi_1 (\frac{1}{2} ) = \varphi_2({1-p_1\over 2}).
\eeaa
Adding the two terms above, we have
\bea
\label{phi12constraint}
\varphi_1(\frac{1}{2}) = \varphi_2({1+p_2\over 2}) - \varphi_2({1-p_1+p_2\over 2}) + \varphi_2({1-p_1\over 2}).
\eea
This equality does not always hold. In other words, unless $\varphi$ satisfies \eqref{phi12constraint}, there is no time-consistent $\Phi$ for the model in Figure \ref{fig:binary_tree}.  
\end{proof}

\subsection{The general binomial tree case.}
\label{sect:GBT}

We now extend our idea to a general binomial tree model. Let $\cT$ consists of the points $0=t_0<\cdots<t_N$, and let $X=\{X_{t_i}\}_{0\le i\le N}$ be a finite state Markov process such that for each $i=0,\ldots, N$, $X_{t_i}$ takes values $x_{i,0}<\cdots<x_{i,i}$, and has the following transition probabilities: 
\begin{equation}
\label{pijgeneral}
\hP\big(X_{t_{i+1}} = x_{i+1, j+1} \big| X_{t_i} = x_{i,j}\big) = p_{i,j}^+,\quad \hP\big(X_{t_{i+1}} = x_{i+1, j} \big| X_{t_i} = x_{i,j}\big) = p_{i,j}^- := 1- p_{i,j}^+,
\end{equation}
where $p_{ij}^{\pm} > 0$. See Figure \ref{fig:general.tree1} for the case $N=3$. We also assume that for each $t_i \in \mathcal{T} \setminus \{0\}$ we are given a distortion function $\varphi_{t_i}$.

\begin{figure}[t!]
	\centering
	\begin{tikzpicture}[>=stealth]
	\matrix (tree) [
	matrix of nodes,
	minimum size=0.5cm,
	column sep=1.7cm,
	row sep=0.5cm,
	]
	{
		&                 &                  &$x_{3,3}$ \\
		&                 &$x_{2,2}$&                  \\
		&$x_{1,1}$&                  &$x_{3,2}$ \\
		$x_{0,0}$&                 &$x_{2,1}$&                   \\
		&$x_{1,0}$&                  &$x_{3,1}$ \\
		&                 &$x_{2,0}$&                   \\
		&                 &                  &$x_{3,0}$ \\
	};
	\draw[->] (tree-4-1) -- (tree-3-2) node [midway,above] {$p_{0,0}^+$};
	\draw[->] (tree-4-1) -- (tree-5-2) node [midway,below] {$p_{0,0}^-$};
	\draw[->] (tree-3-2) -- (tree-2-3) node [midway,above] {$p_{1,1}^+$};
	\draw[->] (tree-3-2) -- (tree-4-3) node [midway,above] {$p_{1,1}^-$};
	\draw[->] (tree-5-2) -- (tree-4-3) node [midway,below] {$p_{1,0}^+$};
	\draw[->] (tree-5-2) -- (tree-6-3) node [midway,below] {$p_{1,0}^-$};
	\draw[->] (tree-2-3) -- (tree-1-4) node [midway,above] {$p_{2,2}^+$};
	\draw[->] (tree-2-3) -- (tree-3-4) node [midway,above] {$p_{2,2}^-$};
	\draw[->] (tree-4-3) -- (tree-3-4) node [midway,above] {$p_{2,1}^+$};
	\draw[->] (tree-4-3) -- (tree-5-4) node [midway,below] {$p_{2,1}^-$};
	\draw[->] (tree-6-3) -- (tree-5-4) node [midway,below] {$p_{2,0}^+$};
	\draw[->] (tree-6-3) -- (tree-7-4) node [midway,below] {$p_{2,0}^-$};
	\end{tikzpicture}
	
	\caption{Three period binomial tree for $X$} 
	\label{fig:general.tree1}
\end{figure}
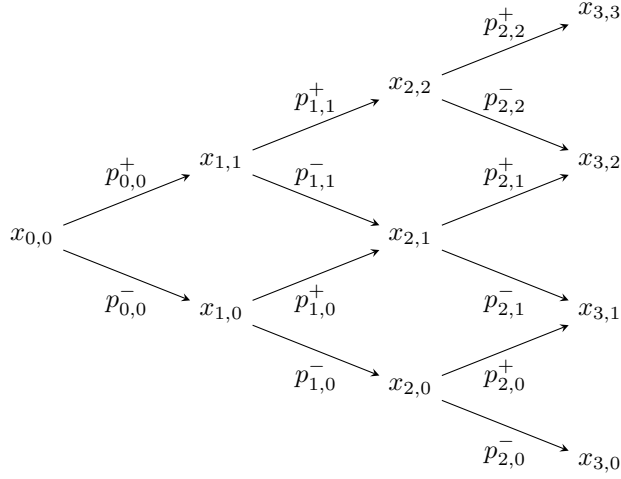

Motivated by the analysis in Section \ref{sec:2period}, we shall find a distorted probability measure $\hQ$ so that 
\bea
\label{sEQ}
\sE_{s, t}[g(X_t)] = \hE^{\hQ}[g(X_t)|\sigma(X_s)]\quad \mbox{ for all $g\in \cI$}.
\eea
This  implies the tower property of $\sE_{s,t}$ immediately and naturally leads to a time-consistent dynamic distortion function. Keeping \eqref{qij2} in mind, we define the following distorted probabilities for the binomial tree model: for $0\le j\le i\le N$, 
\begin{equation}
\label{qij}
q_{i, j}^+ :=  \frac{\varphi_{t_{i+1}}(G_{i+1, j+1} ) - \varphi_{t_i}(G_{i, j + 1})}{\varphi_{t_i}(G_{i, j}) - \varphi_{t_i}(G_{i, j + 1})}, \quad q_{i, j}^- := 1- q_{i,j}^+,\quad\mbox{where}\quad G_{i,j} := \hP(X_{t_i} \ge x_{i,j}).
\end{equation}
We assume further that $G_{i, i+1} :=0$ and $\varphi_{ 0}(p) := p$. From (\ref{qij}), in order to have $0< q_{i,j}^+< 1$, it suffices to (and we will) assume that
\bea
\label{mon2}
\varphi_{t_i}(G_{i, j + 1})< \varphi_{t_{i+1}}(G_{i+1, j+1} )  < \varphi_{t_i}(G_{i, j}), \quad \mbox{for all $(i,j)$}.
\eea
Intuitively, \eqref{mon2} is a technical condition which states that $\varphi_{\cdot}$ cannot change too quickly in time. Clearly this condition is satisfied when $\varphi_t \equiv \varphi$. Now let $\hQ$ be the (equivalent) probability measure under which $X$ is Markov with transition probabilities given by
\begin{equation}
\label{Qdiscrete}
\hQ\big(X_{t_{i+1}} = x_{i+1, j+1} \big| X_{t_i} = x_{i,j}\big) = q_{i,j}^+,\quad \hQ\big(X_{t_{i+1}} = x_{i+1, j} \big| X_{t_i} = x_{i,j}\big) = q_{i,j}^-.
\end{equation}

We first have the following simple lemma.

\begin{lemma} \label{lem-ui}
Assume \eqref{mon2} holds and $g\in \cI$.  For $0<n\le N$, define $u_n(x) := g(x)$, and  for $i=n-1,\ldots, 0$, 
\bea
\label{ui}
u_i(x_{i,j}) := q_{i,j}^+ u_{i+1}(x_{i+1, j+1}) + q_{i,j}^- u_{i+1}(x_{i+1,j}),\quad j=0,\ldots, i.
\eea  
Then $u_i$ is increasing and $\hE^\hQ[g(X_{t_n})|\cF_{t_i}] = u_i(X_{t_i})$.
\end{lemma}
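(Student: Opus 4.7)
The plan is to establish both claims by simultaneous backward induction on $i$, beginning at $i=n$ and working down to $i=0$. The base case is immediate: $u_n=g$ is increasing because $g\in \cI$, and $\hE^\hQ[g(X_{t_n})|\cF_{t_n}]=g(X_{t_n})=u_n(X_{t_n})$.

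For the inductive step, suppose $u_{i+1}$ is increasing on the nodes $\{x_{i+1,0},\ldots,x_{i+1,i+1}\}$ and that $\hE^\hQ[g(X_{t_n})|\cF_{t_{i+1}}]=u_{i+1}(X_{t_{i+1}})$. The conditional-expectation identity at time $t_i$ is then an immediate consequence of the tower property together with the Markov property of $X$ under $\hQ$ and the transition probabilities \eqref{Qdiscrete}: on $\{X_{t_i}=x_{i,j}\}$,
\[
\hE^\hQ[g(X_{t_n})|\cF_{t_i}]=\hE^\hQ\bigl[u_{i+1}(X_{t_{i+1}})\big|\cF_{t_i}\bigr]=q_{i,j}^+u_{i+1}(x_{i+1,j+1})+q_{i,j}^-u_{i+1}(x_{i+1,j}),
\]
which is exactly $u_i(x_{i,j})$ by the definition \eqref{ui}. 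So this half of the lemma requires no further work.

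The only substantive step is the monotonicity of $u_i$. Using $q_{i,j}^-=1-q_{i,j}^+$, I would rearrange the difference of consecutive values from \eqref{ui} into the telescoped identity
\[
u_i(x_{i,j+1})-u_i(x_{i,j})=q_{i,j+1}^+\bigl[u_{i+1}(x_{i+1,j+2})-u_{i+1}(x_{i+1,j+1})\bigr]+(1-q_{i,j}^+)\bigl[u_{i+1}(x_{i+1,j+1})-u_{i+1}(x_{i+1,j})\bigr],
\]
valid for $0\le j\le i-1$. Each bracket is non-negative by the inductive hypothesis that $u_{i+1}$ is increasing, and assumption \eqref{mon2} ensures that the coefficients $q_{i,j+1}^+$ and $1-q_{i,j}^+=q_{i,j}^-$ lie in $(0,1)$; hence the difference is non-negative, closing the induction.

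The main obstacle is spotting the right rearrangement in the display above; once it is in place, the proof reduces to a one-line sign check. It is worth noting that this rearrangement crucially exploits the recombining structure of the tree (the middle node $x_{i+1,j+1}$ is the common child of $x_{i,j}$ and $x_{i,j+1}$), which is precisely the feature whose failure in Example \ref{eg-crossing} was shown to obstruct the existence of a time-consistent dynamic distortion function.
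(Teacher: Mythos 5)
Your proposal is correct and follows essentially the same route as the paper: the conditional expectation claim is the Markov/tower property under $\hQ$, and your telescoped identity for $u_i(x_{i,j+1})-u_i(x_{i,j})$ is just the algebraic form of the paper's two-step comparison $u_i(x_{i,j})\le u_{i+1}(x_{i+1,j+1})\le u_i(x_{i,j+1})$ through the common child node, with the same sign check on $q_{i,j}^\pm$.
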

\begin{proof}  
It is obvious from the binomial tree structure that $\hE^\hQ[g(X_{t_n})|\cF_{t_i}] = u_i(X_{t_i})$. We prove the monotonicity of $u_i$ by backward induction. First, $u_n = g$ is increasing. Assume $u_{i+1}$ is increasing. Then, noting that $x_{i, j}$'s are increasing in $j$, and $q^+_{i,j}+q^-_{i,j}=1$ for all $i,j$, by
(\ref{ui}) we have 
\beaa
u_i(x_{i,j}) &\le&  q_{i,j}^+ u_{i+1}(x_{i+1, j+1}) + q_{i,j}^- u_{i+1}(x_{i+1,j+1}) = u_{i+1}(x_{i+1,j+1}) \\
&\le& q_{i,j+1}^+ u_{i+1}(x_{i+1, j+2}) + q_{i,j+1}^- u_{i+1}(x_{i+1,j+1}) = u_i(x_{i, j+1}). 
\eeaa
Thus $u_i$ is also increasing.  
\end{proof}

We remark that \eqref{ui} can be viewed as a  ``discrete partial differential equation". This idea motivates our treatment of the continuous time model in the next section. 

The following is our main result of this section.

\begin{theorem} \label{thm-ddf}
	Assume \eqref{mon2}. Then there exists a unique time-consistent dynamic distortion function $\Phi$ such that $\Phi(t_0, t_n, x_{0,0}; p) = \varphi_{t_n}(p)$ for $n=1,\ldots, N$, and for all $0\le i < n \le N$, $0\le j\le i$, and $0\le k \le n$, we have
	\begin{equation}
	\label{Phi}
	\Phi(t_i, t_n, x_{i,j}; \hP\{X_{t_n} \ge x_{n,k}\big|X_{t_i} = x_{i,j}\}) = \hQ\{X_{t_n} \ge x_{n,k}\big|X_{t_i} = x_{i,j}\}.
	\end{equation}
	Here uniqueness is only at the conditional survival probabilities for all $k$ in the left side of \eqref{Phi}. 
	
	Moreover, the corresponding conditional  nonlinear expectation satisfies \eqref{sEQ}.
	
\end{theorem}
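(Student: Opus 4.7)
The plan is to construct the candidate $\Phi$ explicitly from the distorted measure $\hQ$ of \eqref{Qdiscrete} and verify all claims in turn. Set $\Phi(t_0, t_n, x_{0,0}; p) := \varphi_{t_n}(p)$ for each $n$, and for each $(i, n, j)$ with $1 \le i < n \le N$ and $0 \le j \le i$, assign $\Phi(t_i, t_n, x_{i,j}; \cdot)$ at the finitely many values $p = \hP(X_{t_n} \ge x_{n,k}|X_{t_i}=x_{i,j})$ via \eqref{Phi}; extend each $\Phi(t_i, t_n, x_{i,j}; \cdot)$ to a continuous strictly increasing distortion function on $[0,1]$ (say, by piecewise linear interpolation). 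Condition \eqref{mon2} forces each $q_{i,j}^{\pm} \in (0,1)$, so $\hQ$ is equivalent to $\hP$; moreover, because the tree is recombining, both the $\hP$- and $\hQ$-conditional survival probabilities are strictly decreasing in $k$ over the reachable range $\{j,\ldots,j+(n-i)\}$ (and equal $1$ or $0$ outside), so the prescribed values are strictly monotone and the extension is well-defined.

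The identity that reconciles this construction with the boundary condition $\Phi(t_0, t_n, x_{0,0}; \cdot) = \varphi_{t_n}$ is
\begin{equation*}
\hQ(X_{t_i} \ge x_{i,m}) = \varphi_{t_i}(G_{i,m}), \qquad 0 \le m \le i,
\end{equation*}
proved by induction on $i$. The step from $i$ to $i+1$ combines the Markov recursion $\hQ(X_{t_{i+1}} \ge x_{i+1,m}) = q_{i,m-1}^+ \hQ(X_{t_i} \ge x_{i,m-1}) + q_{i,m-1}^- \hQ(X_{t_i} \ge x_{i,m})$ with the rearrangement of \eqref{qij}, namely $q_{i,m-1}^+[\varphi_{t_i}(G_{i,m-1}) - \varphi_{t_i}(G_{i,m})] = \varphi_{t_{i+1}}(G_{i+1,m}) - \varphi_{t_i}(G_{i,m})$; the $\varphi_{t_i}$-terms cancel, leaving $\varphi_{t_{i+1}}(G_{i+1,m})$. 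Applied with $i=n$, this shows that the two prescriptions on $\Phi(t_0, t_n, x_{0,0}; \cdot)$ (namely $\varphi_{t_n}$ at arbitrary $p$ and \eqref{Phi} at the specified $p$'s) agree on the specified points. Equation \eqref{sEQ} now follows by applying \eqref{cEdiscrete} to the conditional law of $X_{t_n}$ given $X_{t_i}=x_{i,j}$ and substituting \eqref{Phi} to rewrite each $\Phi$-term as a $\hQ$-survival probability; the telescoping sum yields $\hE^\hQ[g(X_{t_n})|\cF_{t_i}]$. Time-consistency \eqref{flow2} is then immediate from the linear tower property under $\hQ$, combined with Lemma \ref{lem-ui}, which ensures $\sE_{t_i, t_n}[g(X_{t_n})] = u_i(X_{t_i})$ with $u_i \in \cI$ so the outer operator in \eqref{flow2} is well-defined.

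For uniqueness, let $\tilde\Phi$ be any time-consistent dynamic distortion function satisfying $\tilde\Phi(t_0, t_n, x_{0,0}; \cdot) = \varphi_{t_n}$ for every $n$. The unconditional operator $\tilde\sE_{0,m}[h(X_{t_m})]$ is then completely determined for every $h \in \cI$ (and in fact equals $\hE^\hQ[h(X_{t_m})]$ by the identity above), so the tower relation $\tilde\sE_{0,n}[g(X_{t_n})] = \tilde\sE_{0,i}[\tilde\sE_{i,n}[g(X_{t_n})]]$ becomes a linear equation in the unknown values $\tilde\Phi(t_i, t_n, x_{i,j}; \hP(X_{t_n}\ge x_{n,k}|X_{t_i}=x_{i,j}))$ after expanding both sides by \eqref{cEdiscrete}. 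Matching coefficients of $g(x_{n,k})$ as $g$ ranges over $\cI$, exactly as in the two-period computation leading to \eqref{Phi1}, pins these values down uniquely and recovers the right-hand side of \eqref{Phi}.

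The main technical obstacle is the propagation identity $\hQ(X_{t_i} \ge x_{i,m}) = \varphi_{t_i}(G_{i,m})$: the distorted transitions \eqref{qij} involve only two consecutive $\varphi_t$'s, yet the identity must survive arbitrarily many inductive steps, and the precise algebraic cancellation required is exactly what forces the specific form of \eqref{qij}. A secondary subtlety in the uniqueness argument is verifying that coefficient matching over $g \in \cI$ yields enough independent equations to isolate each $\tilde\Phi$-value; this is precisely where the monotonicity of $g$ is essential (compare Remark \ref{rem-nonmonotone}).
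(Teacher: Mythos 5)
Your construction of $\Phi$, the propagation identity $\hQ(X_{t_i}\ge x_{i,m})=\varphi_{t_i}(G_{i,m})$ (your induction on survival functions is just a telescoped version of the paper's induction on the point masses $\hQ(X_{t_{n+1}}=x_{n+1,k})$), the verification of \eqref{sEQ}, and the tower property via Lemma \ref{lem-ui} all match the paper's existence argument and are correct. The gap is in uniqueness. You claim that, for a competitor $\tilde\Phi$, expanding the single relation $\tilde\sE_{0,n}[g(X_{t_n})]=\tilde\sE_{0,i}[\tilde\sE_{i,n}[g(X_{t_n})]]$ by \eqref{cEdiscrete} and matching coefficients of $g(x_{n,k})$ pins down all values $\tilde\Phi(t_i,t_n,x_{i,j};\cdot)$ ``exactly as in the two-period computation.'' That works only when $n-i=1$. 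For $n-i\ge 2$ the system is underdetermined: matching coefficients of $g(x_{n,k})$ yields about $n$ independent scalar equations, while the nontrivial unknowns number $(i+1)(n-i)$. Concretely, take $i=1$, $n=3$ and write $A_{j,k}:=\tilde\Phi(t_1,t_3,x_{1,j};\hP(X_{t_3}\ge x_{3,k}|X_{t_1}=x_{1,j}))$ for $(j,k)\in\{(0,1),(0,2),(1,2),(1,3)\}$. The $k=0$ and $k=3$ equations determine $A_{0,1}$ and $A_{1,3}$, but the $k=1$ and $k=2$ equations involve $A_{0,2}$ and $A_{1,2}$ only through the single combination $A_{0,2}[1-\varphi_{t_1}(G_{1,1})]-A_{1,2}\varphi_{t_1}(G_{1,1})$ (their sum reduces to a relation in $A_{0,1},A_{1,3}$ alone), leaving a one-parameter family of solutions. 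So this single tower relation cannot ``recover the right-hand side of \eqref{Phi}'' for multi-step pairs.

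There is a second, related problem: to expand the outer $\tilde\sE_{0,i}$ via \eqref{cEdiscrete} you need the inner function $\tilde u(t_i,\cdot)=\tilde\sE_{t_i,t_n}[g(X_{t_n})]$ to be increasing in the state. For an arbitrary competitor $\tilde\Phi$ this is automatic only in the one-step case, where $\tilde u(t_i,x_{i,j})$ is a convex combination of $g(x_{i+1,j})$ and $g(x_{i+1,j+1})$ and hence sandwiched between consecutive values of $g$; for $n-i\ge2$ the distorted conditional kernels need not be stochastically monotone in $x_{i,j}$, so monotonicity of $\tilde u$ is not justified a priori. The paper's proof avoids both issues: it applies the coefficient-matching only to the one-step relations $\tilde\sE_{0,t_{i+1}}=\tilde\sE_{0,t_i}[\tilde\sE_{t_i,t_{i+1}}[\cdot]]$, where the resulting (triangular) system in $a_k=\tilde\Phi(t_i,t_{i+1},x_{i,k};p_{i,k}^+)[\varphi_{t_i}(G_{i,k})-\varphi_{t_i}(G_{i,k+1})]$ has a unique solution, concluding $\tilde\Phi(t_i,t_{i+1},x_{i,k};p_{i,k}^+)=q_{i,k}^+$ and hence $\tilde\sE_{t_i,t_{i+1}}=\hE^\hQ[\,\cdot\,|\cF_{t_i}]$; it then iterates the tower property (using Lemma \ref{lem-ui} to keep the intermediate functions in $\cI$) to identify every multi-step operator with $\hE^\hQ$, which is what finally determines the multi-step values of $\tilde\Phi$. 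Your uniqueness paragraph should be restructured along these lines.
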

\begin{proof}  
We first show that \eqref{Phi} has a solution satisfying the desired initial conditon. Note that  both $ \hP\{X_{t_n} \ge x_{n,k}\big|X_{t_i} = x_{i,j}\}$ and $\hQ\{X_{t_n} \ge x_{n,k}\big|X_{t_i} = x_{i,j}\}$ are strictly decreasing in $k$, for fixed $0\le i < n\le N$ and $x_{i,j}$. Then  one can easily define a function $\Phi$, depending on $t_i, t_n$, $x_{i,j}$, so that \eqref{Phi} holds for all $x_{n,k}$, $0\le k\le n$.  Moreover, the initial condition $\Phi(t_0, t_n, x_{0,0}; p) := \varphi_{t_n}(p)$ is equivalent to
\bea
\label{Qflow}
\varphi_{t_n}(\hP\{X_{t_n} \ge x_{n,k}\}) = \hQ\{X_{t_n} \ge x_{n, k}\},\quad 0\le n \le N, \quad 0\le k\le n.
\eea
We shall prove \eqref{Qflow} by induction on $n$. First recall $\varphi_0(p) = p$ and that $\hP\{X_{t_0} = x_{0,0}\} = \hQ\{X_{t_0} = x_{0,0}\}=1$,  thus \eqref{Qflow}  obviously holds for $n=0$.  Assume now it holds for $n<N$. Then
\beaa
&&\hQ\{X_{t_{n+1}} = x_{n+1, k}\} = \hQ\{X_{t_n} = x_{n,k-1}\} q_{n, k-1}^+ + \hQ\{X_{t_n} = x_{n,k}\}q_{n, k}^-\\
&=&\big[\hQ\{X_{t_n} \ge x_{n,k-1}\}-\hQ\{X_{t_n} \ge  x_{n,k}\}\big] q_{n, k-1}^+ + \big[\hQ\{X_{t_n} \ge x_{n,k}\}- \hQ\{X_{t_n} \ge  x_{n,k+1}\}\big] q_{n, k}^-\\
&=&\big[\varphi_{t_n}(G_{n, k-1}) - \varphi_{t_n}(G_{n,k})\big] q_{n, k-1}^+ + \big[\varphi_{t_n}(G_{n, k}) - \varphi_{t_n}(G_{n,k+1})\big] [1-q_{n, k}^+]\\
&=&\big[\varphi_{t_{n+1}}\big(G_{n+1, k}) - \varphi_{t_n}(G_{n,k})\big]  +  \big[\varphi_{t_n}(G_{n, k}) - \varphi_{t_{n+1}}\big(G_{n+1, k+1} ) \big] \\
&=& \varphi_{t_{n+1}}\big(G_{n+1, k})   - \varphi_{t_{n+1}}\big(G_{n+1, k+1} ).
\eeaa
This leads to \eqref{Qflow} for $n+1$ immediately and thus completes the induction step.

We next show that the above constructed $\Phi$ is indeed a time-consistent dynamic distortion function.  We first remark that, for this discrete model only the values of $\Phi$ in the left side of \eqref{Phi} are relevant, and one may extend $\Phi$ to all $p\in [0,1]$ by linear interpolation. Then by \eqref{Phi} it is straightforward to show that $\Phi(t_i, t_n, x_{i,j};\cdot)$ satisfies Definition \ref{defn-distortion} (i). Moreover, by \eqref{condcE}, \eqref{cEdiscrete} and \eqref{Phi}, for any $g\in \cI$ we have
\beaa
\sE_{t_i, t_n}[g(X_{t_n})] \big|_{X_{t_i} = x_{i,j}} &&= \sum_{k=0}^n g(x_{n,k})  \Big[\Phi\big(t_i, t_n, x_{i,j}; \hP\{X_{t_n} \ge x_{n,k}\big| X_{t_i}=x_{i,j}\}\big) \\
&& \quad \quad - \Phi\big(t_i, t_n, x_{i,j}; \hP\{X_{t_n} \ge x_{n,k+1}\big| X_{t_i}=x_{i,j}\}\big)\Big]\\
&&=  \sum_{k=0}^n g(x_{n,k})  \Big[\hQ\{X_{t_n} \ge x_{n,k}\big| X_{t_i}=x_{i,j}\} - \hQ\{X_{t_n} \ge x_{n,k+1}\big| X_{t_i}=x_{i,j}\}\Big]\\
&&= \hE^\hQ\big[g(X_{t_n})\big|X_{t_i} = x_{i,j}\big].
\eeaa
That is, \eqref{sEQ} holds. Moreover, fix $n$ and $g$ and let $u_i$ be as in Lemma \ref{lem-ui}. Since $u_m$ is increasing, we have
\beaa
\sE_{t_i, t_m}\big[ \sE_{t_m,t_n}[g(X_{t_n})]\big] = \sE_{t_i,t_m} [ u_m(X_{t_m})] = u_i(X_{t_i}) = \sE_{t_i,t_n} [g(X_{t_n})],\quad 0\le i < m < n.
\eeaa
This verifies \eqref{flow2}. Thus $\Phi$ is a time-consistent dynamic distortion function.

It remains to prove the uniqueness of $\Phi$. Assume $\Phi$ is an arbitrary time-consistent dynamic distortion function.  For any appropriate $i$, $j$, and $g\in \cI$, following the arguments of Lemma \ref{lem-ui} we see that
\beaa
u(t_i, x_{i,j}) &:=& \sE_{t_i, t_{i+1}}\big[g(X_{t_{i+1}})]\big|_{X_{t_i} = x_{i, j}} \\
&=&  g(x_{i+1, j}) [1-\Phi(t_i, t_{i+1}, x_{i,j}; p_{i,j}^+)] + g(x_{i+1, j+1}) \Phi(t_i, t_{i+1}, x_{i,j}; p_{i,j}^+)
\eeaa
is increasing in $x_{i,j}$. Then by \eqref{cEdiscrete} and the tower property we have
\beaa
&& \sum_k g(x_{i+1, k}) [\varphi_{t_{i+1}}(G_{i+1, k}) - \varphi_{t_{i+1}}(G_{i+1, k+1})] = \sE_{0, t_{i+1}}[g(X_{t_{i+1}})] = \sE_{0, t_i}\big[\sE_{t_i, t_{i+1}}[g(X_{t_{i+1}}]\big]\\
&&=\sE_{0, t_i}\big[u(t_i, X_{t_{i}})\big] = \sum_j u(t_i, x_{i, j}) [\varphi_{t_{i}}(G_{i, j}) - \varphi_{t_{i}}(G_{i, j+1})] \\
&&=\sum_j  \Big[g(x_{i+1, j}) [1-\Phi(t_i, t_{i+1}, x_{i,j}; p_{i,j}^+)] + g(x_{i+1, j+1}) \Phi(t_i, t_{i+1}, x_{i,j}; p_{i,j}^+)\Big]\times\\
&&\quad\quad [\varphi_{t_{i}}(G_{i, j}) - \varphi_{t_{i}}(G_{i, j+1})].
\eeaa
By the arbitrariness of $g\in \cI$, this implies that 
\beaa
&& 1 - \varphi_{t_{i+1}}(G_{i+1, 1}) = [1-\Phi(t_i, t_{i+1}, x_{i,0}; p_{i,0}^+)] [1 - \varphi_{t_{i}}(G_{i, 1})];\\
&&\varphi_{t_{i+1}}(G_{i+1, k}) - \varphi_s{t_{i+1}}(G_{i+1, k+1}) = [1-\Phi(t_i, t_{i+1}, x_{i,k}; p_{i,k}^+)] [\varphi_{t_{i}}(G_{i, k}) - \varphi_{t_{i}}(G_{i, k+1})]\\
&&\quad\quad + \Phi(t_i, t_{i+1}, x_{i,k-1}; p_{i,k-1}^+) [\varphi_{t_{i}}(G_{i, k-1}) - \varphi_{t_{i}}(G_{i, k})],\quad k=1,\ldots, i+1.
\eeaa
This is equivalent to, denoting $a_k:= \Phi(t_i, t_{i+1}, x_{i,k}; p_{i,k}^+) [\varphi_{t_{i}}(G_{i, k}) - \varphi_{t_{i}}(G_{i, k+1})]$,
\beaa
a_0 &=& \varphi_{t_{i+1}}(G_{i+1, 1}) - \varphi_{t_{i}}(G_{i, 1});\\
a_{k-1} - a_k &=& [\varphi_{t_{i+1}}(G_{i+1, k}) - \varphi_{t_{i+1}}(G_{i+1, k+1}) ] - [\varphi_{t_{i}}(G_{i, k}) - \varphi_{t_{i}}(G_{i, k+1})].
\eeaa
Clearly the above equations have a unique solution, so we must have $\Phi(t_i, t_{i+1}, x_{i,k}; p_{i,k}^+)= q_{i, k}^+$. This implies further that $\sE_{t_i, t_{i+1}}[g(X_{t_{i+1}})] = \hE^\hQ[g(X_{t_{i+1}}|\cF_{t_i}]$. Now both $\sE_{t_i, t_n}$ and $\hE^\hQ[\cdot|\cdot]$ satisfy the tower property, then $\sE_{t_i, t_{n}}[g(X_{t_{n}})] = \hE^\hQ[g(X_{t_{n}}|\cF_{t_i}]$ for all $t_i<t_n$ and all $g\in \cI$. So $\sE_{t_i, t_{n}}$ is unique, which implies immediately the uniqueness of $\Phi$.  
\end{proof}

\medskip

\begin{remark}\label{remark4} { \ }
	\begin{enumerate}
		\item[(i)] We should note that the dynamic distortion function $\Phi$ that we constructed actually depends on the survival function of $X$ under both $\hP$ and $\hQ$, see also \eqref{Phi-cont} below.
		\item[(ii)] Our construction of $\Phi$ is local in time. In particular, all the results can be easily extended to the case with infinite times: $0=t_0<t_1<\cdots$.  
		\item[(iii)] Our construction of $\Phi$ is also local in state, in the sense that $\Phi(t_i, t_n, x_{i, j}; \cdot)$ involves only the subtree rooted at $(t_i, x_{i,j})$.
	\end{enumerate}
\end{remark}

\section{The constant diffusion case.} \label{sec:continuous.time}
In this section we set $\cT = [0, T]$, and consider the case where the underlying state process $X$ is a one dimensional Markov process satisfying the following SDE with constant diffusion coefficient:
\bea
\label{X2}
X_t = x_0 + \int_0^t b(s, X_s) ds +B_t,
\eea
where $B$ is a one-dimensional standard Brownian motion  on a given  filtered probability space $(\Omega, \mathcal{F}, \{\mathcal{F}_t\}_{0 \leq t \leq T}, \mathbb{P})$. Again we are given initial distortion functions $\{\varphi_t = \varphi(t, \cdot) \}_{0< t\le T}$ and $\varphi_0(p) \equiv p$. Our goal is to construct a time-consistent dynamic distortion function $\Phi$ and the corresponding time-consistent distorted conditional  expectations $\sE_{s, t}$ for $(s, t) \in \cT_2$. We shall impose  the following technical conditions. 

\begin{assumption}
\label{assum-b}
The function $b$ is sufficiently smooth and both $b$ and the required derivatives are bounded.
\end{assumption}

Clearly, under the assumption the SDE \eqref{X2} is wellposed. The further regularity of $b$ is used to derive some tail estimates for the density of $X_t$, which are required for our construction of the time-consistent dynamic distortion function $\Phi$ and the distorted probability measure $\hQ$. By investigating our arguments more carefully, we can figure out the precise technical conditions we will need. However, since our main focus is the dynamic distortion function $\Phi$, we prefer not to carry out these details for the sake of the readability of the paper. 

\subsection{Binomial tree approximation.} \label{sec:approx}
Our idea is to approximate $X$ by a sequence of binomial trees and then apply the results from the previous section.  To this end, for fixed $N$, denote $h := {T\over N}$, and $t_i := ih$, $i=0,\cdots, N$.  Then \eqref{X2} may be discretized as follows:
\bea
\label{Euler}
X_{t_{i+1}} \approx X_{t_i} + b(t_i, X_{t_i}) h + B_{t_{i+1}} - B_{t_i}.
\eea
We first construct the binomial tree on $\cT_N:= \{t_i, i=0,\ldots, N\}$ as in Subsection \ref{sect:GBT} with 
\begin{equation}
\label{BTN}
x_{0,0} =x_0,\quad x_{i,j} = x_0 + (2j-i) \sqrt{h},\quad  b_{i,j} := b(t_i, x_{i,j}),\quad p_{i,j}^+ := {1\over 2} + {1\over 2} b_{i,j} \sqrt{h}.
\end{equation}
Since $b$ is bounded, we shall assume $h$ is small enough so that $0< p_{i,j}^+ < 1$. Let $X^N$ denote the Markov chain corresponding to this binomial tree under the probability $\hP_N$ specified by \eqref{BTN}. Then our choice of $p_{i,j}^+$ ensures that
\bea
\label{BTNPN}
&\hE^{\hP_N}\big[X^N_{t_{i+1}} -  X^N_{t_i}  \big| X^N_{t_i} = x_{i,j}\big] =  p_{i, j}^+ \sqrt{h} - p_{i,j}^- \sqrt{h} =  b_{i,j}h;\\
&\hE^{\hP_N}\big[ \big(X^N_{t_{i+1}} -  X^N_{t_i}  - b_{i,j} h\big)^2 \big| X^N_{t_i} = x_{i,j}\big] =  p_{i, j}^+ ( \sqrt{h} - b_{i,j} h)^2+ p_{i,j}^- (\sqrt{h} + b_{i,j}h)^2=  h - b^2_{i,j}h^2.&\nonumber
\eea 
Clearly, as a standard Euler approximation, $X^N$  matches the conditional expectation and conditional variance of $X$ in \eqref{Euler}, up to terms of order $o(h)$.

Next we define the other terms in Section \ref{sect:GBT}:
\bea
\label{GQPhiN}
\left\{\ba{lll}
G^N_{i,j} := \hP_N\{X^N_{t_i} \ge x_{i,j}), \quad q_{i, j}^{N,+} :=  \frac{\varphi_{t_{i+1}}(G^N_{i+1, j+1} ) - \varphi_{t_i}(G^N_{i, j + 1})}{\varphi_{t_i}(G^N_{i, j}) - \varphi_{t_i}(G^N_{i, j + 1})},\quad q_{i, j}^{N,-} := 1- q_{i,j}^{N,+}; \\
\hQ_N\big\{X^N_{t_{i+1}} = x_{i+1, j+1} \big| X^N_{t_i} = x_{i,j}\big\} = q_{i,j}^{N,+},\quad \hQ_N\big\{X^N_{t_{i+1}} = x_{i+1, j} \big| X^N_{t_i} = x_{i,j}\big\} = q_{i,j}^{N,-}; \\
\Phi_N\big(t_i, t_n, x_{i,j}; \hP_N\big\{X^N_{t_n} \ge x_{n,k}\big|X^N_{t_i} = x_{i,j}\big\}\big) :=\hQ_N\big\{X^N_{t_n} \ge x_{n,k}\big|X^N_{t_i} = x_{i,j}\big\}.
\ea\right.
\eea

We shall send $N\to\infty$ and analyze the limits of the above terms. In this subsection we evaluate the limits heuristically, by assuming  all  functions involved exist and are smooth.  

Define the survival probability function  and density  function of the $X$ in \eqref{X2}, respectively:
\bea
\label{Grho}
G(t,x) := \hP(X_t \ge x),\quad \rho(t,x) := - \pa_x G(t,x),\quad 0<t \le T.
\eea
Note that, as the survival function of the diffusion process \eqref{X2}, $G$ satisfies the following PDE:
\bea
\label{heat}
\pa_t G = {1\over 2} \pa_{xx} G -  b \pa_x G = -{1\over 2} \pa_x \rho + b \rho.
\eea

It is reasonable to assume $G^N_{i,j} \approx G(t_i, x_{i,j})$. Note that, $t_{i+1} = t_i + h$, $x_{i, j+1} = x_{i, j} + 2\sqrt{h}$, $x_{i+1, j+1} = x_{i,j} + \sqrt{h}$.  Rewrite $\varphi(t,p) := \varphi_t(p)$.  Then, for $(t,x) = (t_i, x_{i,j})$, by \eqref{heat} and applying Taylor expansion  we have (suppressing variables 
when the context is clear):
\beaa
&& \varphi\big(t+h, G(t+h, x+ \sqrt{h})\big) - \varphi(t, G(t,x))  \\
&&= \pa_t \varphi h + \pa_p \varphi [\pa_t G h + \pa_x G \sqrt{h} + {1\over 2} \pa_{xx} G h] + {1\over 2} \pa_{pp} \varphi [\pa_x G]^2 h + o(h)\\
&&=  - \pa_p \varphi \rho \sqrt{h} + \big[\pa_t \varphi + \pa_p \varphi b \rho - \pa_p \varphi \pa_x \rho + {1\over 2} \pa_{pp} \varphi \rho^2\big] h + o(h);\\
&& \varphi\big(t, G(t, x+2\sqrt{h})\big)  - \varphi(t, G(t,x))  = \pa_p \varphi[\pa_x G 2\sqrt{h} + {1\over 2} \pa_{xx} G 4 h] + {1\over 2} \pa_{pp}  [\pa_x G]^24 h + o(h)\\
&&= - 2 \pa_p \varphi  \rho \sqrt{h} - 2\big[ \pa_p\varphi \pa_{x} \rho - \pa_{pp}\varphi \rho^2\big] h + o(h) .
\eeaa  
Thus we have an approximation for the $q^{N,+}_{i,j}$ in \eqref{GQPhiN}:  
\bea
\label{qN+}
q_{i, j}^{N,+} &\approx&  \frac{\varphi\big(t+h, G(t+h, x+ \sqrt{h})\big) - \varphi\big(t, G(t, x+ 2\sqrt{h}\big)}{\varphi\big(t, G(t,x)\big) - \varphi\big(t, G(t, x+2\sqrt{h})\big)}\nonumber\\
&=&1+ {- \pa_p \varphi \rho \sqrt{h} + \big[\pa_t \varphi + \pa_p \varphi b \rho - \pa_p \varphi \pa_x \rho + {1\over 2} \pa_{pp} \varphi \rho^2\big]h + o(h) \over  2 \pa_p \varphi \rho \sqrt{h} + 2\Big[\pa_p \varphi \pa_x \rho -  \pa_{pp}\varphi \rho^2\Big] h + o(h)}\\
&=&  {1\over 2} + {1\over 2} \mu(t,x) \sqrt{h} + o(\sqrt{h}), \nonumber
\eea 
where
\bea
\label{mu}
\mu(t,x) :=  b(t,x) + {\pa_t \varphi (t, G(t,x))  - {1\over 2} \pa_{pp}\varphi(t, G(t,x)) \rho^2(t,x) \over \pa_p \varphi(t, G(t,x)) \rho(t,x)}.
\eea

Next, note that
\beaa
&\hE^{\hQ_N}\big\{X^N_{t_{i+1}} - X^N_{t_i} \big| X^N_{t_i}=x_{i,j}\big\} = \sqrt{h} [2 q^{N,+}_{i,j} - 1]  = \mu(t_i, x_{i,j}) h + o(h);\\
&\hE^{\hQ_N}\big\{(X^N_{t_{i+1}} - X^N_{t_i}  -  \mu(t_i, x_{i,j}) h)^2 \big| X^N_{t_i}=x_{i,j}\big\} = h + o(h).
\eeaa
In other words, as $N\to\infty$, we expect that $\hQ_N$ would converge  to a probability measure $\hQ$,  such that  for some $\hQ$-Brownian motion $\tilde B$, it holds that
\bea
\label{XQ}
X_t = x_0 + \int_0^t \mu(s, X_s) ds + \tilde B_t,\quad \hQ\mbox{-a.s.}
\eea
Moreover, formally one should be able to find a  dynamic distortion function $\Phi$ satisfying: 
\begin{equation}
\label{PhiX}
\Phi\big(s,t, x; \hP\{X_t \ge y | X_s = x\}\big) = \hQ\{X_t \ge y | X_s = x\}, \quad 0\le s< t \le T.
\end{equation}
We shall note that, however, since $X_0=x_0$ is degenerate, $\rho(0,\cdot)$ and hence $\mu(0,\cdot)$ do not exist, so the above convergence will hold only for $0<s<t\le T$. It is also worth noting that asymptotically \eqref{ui} should read:
\beaa
u(t, x) &\approx& {1\over 2} [1+  \mu(t,x) \sqrt{h} + o(\sqrt{h})] [u(t+h, x+ \sqrt{h}) - u(t+h, x-\sqrt{h}) ] + u(t+h, x-\sqrt{h})\\
&=& u(t,x) +\left[\pa_t u + {1\over 2} \pa_{xx} u  + \mu \pa_x u\right] h + o(h).
\eeaa  
That is, 
\bea
\label{cLu}
\sL u(t,x) := \pa_t u + {1\over 2} \pa_{xx} u  + \mu \pa_x u = 0.
\eea

\subsection{Rigorous results for the continuous time model.}
\label{sect:limit}
We now substantiate the heuristic arguments in the previous subsection and derive the time-consistent dynamic distortion function and the distorted conditional expectation for the continuous time model.  We first have the following tail estimates for the density of the diffusion \eqref{X2}. Since our main focus is the dynamic distortion function we postpone the proof to Section \ref{sect-density} below. 

\begin{proposition} \label{prop-tail}
Under Assumption \ref{assum-b}, $X_t$ has a density function $\rho(t,x)$ which is  strictly positive and sufficiently smooth on $(0, T]\times \hR$. Moreover, for any $0<t_0\le T$, there exists a constant $C_0$, possibly depending on $t_0$, such that
\begin{equation}
\label{Gbound}
{|\pa_x \rho(t,x)|\over \rho(t,x)} \le C_0,\quad  {1\over C_0[1+|x|]} \le {G(t,x) [1-G(t,x)]\over \rho(t,x)} \le C_0,\quad (t,x)\in [t_0, T]\times \hR.
\end{equation}
\end{proposition}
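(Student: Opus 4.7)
\smallskip
\noindent\textbf{Proof proposal for Proposition \ref{prop-tail}.}

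The plan is to combine Girsanov's theorem with classical two-sided Gaussian heat-kernel estimates, and then derive the tail ratio bounds from Mill's-ratio asymptotics. The regularity and strict positivity of $\rho$ on $(0,T]\times\hR$ are standard: because the diffusion coefficient is identically $1$, the Fokker--Planck equation
\bes
\pa_t \rho = \tfrac12 \pa_{xx}\rho - \pa_x(b\rho)
\ees
is uniformly parabolic, and Assumption \ref{assum-b} yields smooth bounded coefficients. Thus classical interior Schauder estimates give $\rho\in C^\infty((0,T]\times\hR)$, and the strong parabolic maximum principle (applied to any nonnegative initial approximation of the point mass $\delta_{x_0}$) yields $\rho>0$.

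Next I would establish Aronson-type two-sided Gaussian bounds
\bes
\frac{C_1}{\sqrt{t}}\,e^{-c_1(x-x_0)^2/t}\le \rho(t,x) \le \frac{C_2}{\sqrt{t}}\,e^{-c_2(x-x_0)^2/t},
\ees
valid on $(0,T]\times\hR$. These can be obtained directly by Girsanov: with $Z_t = \exp(\int_0^t b(s,X_s)\,dB_s - \tfrac12\int_0^t b^2\,ds)$ and $d\hQ^\ast = Z_T^{-1}\,d\hP$, the process $X$ is a standard Brownian motion from $x_0$ under $\hQ^\ast$. Since $b$ is bounded, $Z_t^{\pm 1}$ has $L^p$-moments of all orders uniform in $t\in[0,T]$, and conditioning on $X_t=x$ yields
\bes
\rho(t,x)=\phi_t(x-x_0)\cdot \hE^{\hQ^\ast}[Z_t\mid X_t=x],
\ees
from which the two-sided Gaussian bounds follow. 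Analogous reasoning applied to $\pa_x\rho$, combined with standard interior parabolic gradient estimates for the Fokker--Planck equation (or the Bismut--Elworthy--Li integration-by-parts formula), gives the matching bound $|\pa_x\rho(t,x)|\le C t^{-1}e^{-c(x-x_0)^2/t}$; dividing through and using the lower Gaussian bound on $\rho$ yields the first estimate in \eqref{Gbound}.

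For the Mill's-ratio estimate on $G(1-G)/\rho$: the upper bound $\le C_0$ is clear on any bounded $x$-region since $G(1-G)\le 1/4$ and $\rho$ is bounded below there by the lower Gaussian bound; for $|x|\to\infty$ one uses the classical estimate $\bar\Phi(z)/\varphi(z)\sim 1/z$ for the standard normal together with the Gaussian comparison to conclude $G(1-G)/\rho\to 0$ at the correct rate. The lower bound of the form $1/(C_0(1+|x|))$ follows from the \emph{upper} Gaussian bound on $\rho$ combined with a matching \emph{lower} bound on $G$ (respectively $1-G$), obtained by integrating the lower Gaussian bound on $\rho$ from $x$ to $\infty$; the resulting estimate again mirrors the classical Gaussian Mill's ratio and contributes the factor $1/(1+|x|)$.

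The main obstacle is the derivative estimate: naive Gaussian bounds suggest $|\pa_x\rho/\rho|$ could grow linearly in $|x|$, so obtaining a uniform constant $C_0$ requires exploiting the specific structure of \eqref{X2}. I would carry this out by working with the logarithmic derivative $\partial_x\log\rho$ directly via the Fokker--Planck equation (which becomes a Riccati-type equation for $\partial_x\log\rho$), using the smoothness and boundedness assumptions on $b$ and its derivatives together with the Gaussian-type a priori control from Step 2 to rule out blow-up; alternatively one can differentiate the Girsanov representation above and use the Malliavin-calculus identity $\pa_x \rho(t,x) = \hE^{\hQ^\ast}[Z_t\,\delta(\cdot)\mid X_t=x]\phi_t(x-x_0)$ with a carefully chosen Malliavin weight, to extract the sought-after pointwise bound.
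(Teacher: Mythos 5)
Your overall strategy (Girsanov plus Gaussian comparison plus Mills-ratio asymptotics) is in the same family as the paper's, and your treatment of smoothness and strict positivity is fine, but two central steps have genuine gaps. First, the factor $\hE^{\hQ^\ast}[Z_t\mid X_t=x]$ in your representation is \emph{not} uniformly bounded in $x$: unconditional $L^p$ bounds on $Z_t^{\pm 1}$ give no control of the conditional expectation on the event $\{X_t=x\}$ with $x$ far in the tail, and in general this factor grows like $e^{C|x|}$. The paper makes exactly this representation rigorous via the Brownian bridge (Proposition \ref{prop-rep}), writing $\rho(t,x)=\frac{1}{\sqrt{2\pi t}}\exp\big(-\frac{(x-x_0)^2}{2t}+I(t,x)\big)$, and the usable fact is not boundedness of $e^{I}$ but the Lipschitz estimate $|\pa_x I|\le C$, obtained by eliminating the stochastic integral through It\^o's formula applied to $\bar b(t,x)=\int_0^x b(t,y)\,dy$ and then differentiating inside the expectation. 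Second, even granting Aronson-type two-sided bounds $C_1t^{-1/2}e^{-c_1(x-x_0)^2/t}\le \rho(t,x)\le C_2t^{-1/2}e^{-c_2(x-x_0)^2/t}$, they cannot deliver the Mills-ratio estimate: necessarily $c_1>c_2$ (or, via your Girsanov factor, an $e^{C|x|}$ slack), and integrating one bound for $G$ while dividing by the other for $\rho$ leaves an error factor $e^{(c_1-c_2)x^2/t}$ (resp.\ $e^{C|x|}$) that blows up, so neither $G(1-G)/\rho\le C_0$ nor the lower bound $\ge 1/(C_0(1+|x|))$ follows this way. What is needed is control of the ratio $\rho(t,x+y)/\rho(t,x)$, i.e.\ precisely the Lipschitz bound on $I$: the paper writes $G(t,x)/\rho(t,x)=\int_0^\infty e^{I(x+y)-I(x)-xy/t-y^2/(2t)}\,dy$ and reads off both bounds for large $|x|$ directly from $|I(x+y)-I(x)|\le Cy$.

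Concerning your ``main obstacle'' paragraph: you are chasing a bound that cannot hold. For $b\equiv 0$ one has $\pa_x\rho/\rho=-(x-x_0)/t$, which is unbounded in $x$, so no Riccati or Malliavin-weight argument can produce a uniform constant bound. The paper's own argument in fact yields $\pa_x\log\rho(t,x)=-(x-x_0)/t+\pa_x I(t,x)$ with $|\pa_x I|\le C$; accordingly the first inequality in \eqref{Gbound} should be read with a factor $1+|x|$ on the right-hand side (this linear-growth version is what is actually invoked in the proof of Lemma \ref{lem-mu}), and both estimates in \eqref{Gbound} then flow from the single Lipschitz estimate on $I$ coming from the bridge representation \eqref{rep}, rather than from two-sided heat-kernel bounds.
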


We next assume the following technical conditions on $\varphi$.

\begin{assumption} \label{assum-Gphi}
	$\varphi$ is continuous on $[0, T]\times [0, 1]$ and is sufficiently smooth in $(0, T]\times (0,1)$ with $\pa_p \varphi >0$. Moreover, for any $0< t_0 < T$, there exists a constant $C_0>0$ such that for $(t,p) \in [t_0, T]\times (0,1)$ we have the following bounds:
	\bea
	\label{phibound}
	 \Big| {\pa_{pp}\varphi(t,p) \over \pa_p \varphi(t,p)}\Big|\le {C_0\over p(1-p)},\quad  \Big| {\pa_{ppp} \varphi(t,p) \over \pa_p \varphi(t,p)}\Big|\le {C_0\over p^2(1-p)^2},\\
	 \Big|{\pa_t \varphi (t,p) \over \pa_p \varphi(t,p)}\Big| \le C_0 p(1-p),\quad \Big|{\pa_{tp} \varphi(t,p) \over \pa_p \varphi(t,p)}\Big| \le C_0.
	\eea
\end{assumption}

We note that, given the existence of $G(t, x)$, $\rho(t, x)$ as well as the regularity of $\varphi$,  the function $\mu(t, x)$ in \eqref{mu} is well defined. 

\medskip

\begin{remark} \label{rem-G} { \ }
	\begin{enumerate}
		\item[(i)] Note that in \eqref{qN+} and \eqref{mu} only  the composition $\varphi(t, G(t,x))$ is used, and obviously $0< G(t,x) <1$ for all 
		$(t,x)\in (0, T]\times \hR$. Therefore we do not require the differentiability of $\varphi$ at $p=0, 1$. Moreover, since $\pa_p \varphi >0$, the condition \eqref{phibound} involves only the singularities around $p\approx 0$ and $p\approx 1$.
		\item[(ii)] The first line in \eqref{phibound}  is not restrictive. For example, by straightforward calculation one can verify that all the following distortion functions commonly used in the literature (see, e.g., Huang-NguyenHuu-Zhou \cite[Section 4.2]{HNZ}) satisfy it: recalling that in the literature typically $\varphi(t, p) = \varphi(p)$ does not depend on $t$,
		\begin{itemize}
			\item Tversky and Kahneman \cite{KT2}: $\varphi(p) = {p^{\g} \over (p^{\g} + (1-p)^{\g})^{1 / \g}}$,  $\g\in [\g_0, 1)$, where $\g_0 \approx 0.279$ so that $\varphi$ is increasing.
			\item Tversky and Fox \cite{TF}: $\f(p) = {\a p^\g \over \a p^\g + (1-p)^\g}$, $\a>0, \g\in (0, 1)$.
			\item Prelec \cite{Prelec}: $\f(p) = \exp(-\g(-\ln p)^\a)$, $\g>0$, $\a\in (0,1)$. % so that $\f$ is concave around $p\approx 0$. 
			\item Wang \cite{Wang}: $\f(p) = F(F^{-1}(p) + \a)$, $\a\in \hR$, where $F$ is the cdf of the standard normal.
		\end{itemize}
	    As an example, we check the last one which is less trivial. Set $q:= F^{-1}(p)$. Then 
	    \beaa
	    \f(F(q)) = F(q+\a) \Longrightarrow \f'(F(q)) = {F'(q+\a)\over F'(q)} \Longrightarrow \ln(\f'(F(q)) ) = \ln(F'(q+\a)) - \ln(F'(q)). 
	    \eeaa
	    Note that $F'(q) = {1\over \sqrt{2\pi}} e^{-{q^2\over 2}}$, then $\ln (F'(q)) = -\ln\sqrt{2\pi} - {q^2\over 2}$. Thus
	    \beaa
	    \ln(\f'(F(q)) ) = -{(q+\a)^2\over 2} + {q^2\over 2} ~\Longrightarrow~ {\f''(F(q))\over \f'(F(q))} F'(q) = -\a. 
	    \eeaa
	    This implies that, denoting  by $G(q) := 1-F(q)$  the survival function of the standard normal,
	    \beaa
	    {|\f''(p)|\over \f'(p)} p[1-p] = |\a| {F(q)[1-F(q)] \over F'(q)} = |\a| {G(q)[1-G(q)] \over F'(q)},
	    \eeaa
	    Then by applying \eqref{Gbound} on standard normal (namely $b=0$ and $t=1$ there) we obtain the desired estimate for ${\pa_{pp}\f\over \pa_p \f}$. Similarly we may estimate ${\pa_{ppp}\f\over \pa_p \f}$.
	    \item[(iii)] When  $\f(t, p) \equiv \f(p)$ as in the standard literature, the second line in \eqref{phibound} is trivial. Another important example is the separable case: $\f(t,p) = f(t)\f_0(p)$. Assume $f'$ is bounded. Then the second inequality here becomes trivial, and a sufficient condition for the first inequality is ${\f_0(p)\over \f_0'(p)} \le {C_0\over p(1-p)}$,  which holds true for all the examples in (ii). 
	\end{enumerate}
\end{remark}

\medskip

To have a better understanding about $\mu$ given by \eqref{mu}, we compute an example explicitly.

\medskip

\begin{example} \label{eg-mu}
Consider Wang \cite{Wang}'s distortion function: $\f(t, p) = F(F^{-1}(p)+\a)$, as  in Remark \ref{rem-G}(ii). Set $b=0$, then $\mu(t, x) = {\a\over 2\sqrt{t}}$.
\end{example}
\begin{proof}  
First it is clear that $\pa_t \f =0$ and  $\pa_p \f(t,p) = {F'(F^{-1}(p)+\a)\over F'(F^{-1}(p))}$. Then 
\beaa
{\pa_{pp}\f(t,p)\over \pa_p\f(t,p)} = \pa_p \big[\ln \big(\pa_p \f(t,p)\big)\big] = {1\over F'(F^{-1}(p))}\Big[{F''(F^{-1}(p)+\a))\over F'(F^{-1}(p)+\a)) } - {F''(F^{-1}(p))\over F'(F^{-1}(p))}\Big] .
\eeaa
One can easily check that $F'(x) = {1\over \sqrt{2\pi}} e^{-{x^2\over 2}}$ and $F''(x) = - x F'(x)$.	Then
\beaa
{ \pa_{pp}\f(t,p) \over \pa_p \f(t,p)}={1\over F'(F^{-1}(p))}\big[-[F^{-1}(p)+\a] +F^{-1}(p)\big] = -{\a\over F'(F^{-1}(p))}.
\eeaa
Note that
\beaa
G(t,x) = \hP(B_t \ge x) = \hP(B_1 \ge {x\over \sqrt{t}}) =  \hP(B_1 \le -{x\over \sqrt{t}}) = F(-{x\over \sqrt{t}}).
\eeaa
Then
\beaa
\mu(t,x) = {\a \rho(t,x) \over F'(F^{-1}(G(t,x)))} = {\a \rho(t,x) \over F'(-{x\over \sqrt{t}})} = {{\a\over \sqrt{2\pi t}} e^{-{x^2\over 2t}}\over {1\over \sqrt{2\pi }} e^{-{(-{x\over \sqrt{t}})^2\over 2}}}={\a\over \sqrt{t}},
\eeaa
completing the proof.
\end{proof}

We now give some technical preparations.  Throughout the paper we shall use $C$ to denote a generic constant which may vary from line to line.

\begin{lemma} \label{lem-mu}
Let Assumptions \ref{assum-b} and \ref{assum-Gphi} hold. 
\begin{enumerate}
	\item[(i)] The function $\mu$ defined by \eqref{mu} is sufficiently smooth in $(0, T]\times \hR$. Moreover, for any $0<t_0 <T$, there exists $C_0>0$ such that 
	\begin{equation}
	\label{mugrowth}
	|\mu(t,x)| \le C_0[1+|x|],\quad |\pa_x \mu(t,x)|\le C_0[1+|x|^2], \quad \mbox{for all}~ (t,x) \in [t_0, T]\times \hR.
	\end{equation}
	\item[(ii)] For any $(s, x)\in (0, T)\times \hR$, the following SDE on $[s, T]$ has a unique strong solution:
	\begin{equation}
	\label{tildeXsx}
	\tilde X^{s,x}_t = x + \int_s^t \mu(r, \tilde X^{s,x}_r) dr + B^s_t, \quad \mbox{where}\quad B^s_t:= B_t-B_s, \quad t\in [s, T],~ \hP\mbox{-a.s.}, 
	\end{equation}
	Moreover, the following $M^{s,x}$ is a true $\hP$-martingale and $\hP\circ (\tilde X^{s,x})^{-1} = \hQ^{s,x}\circ (X^{s,x})^{-1}$, where
	\bea
	\label{Q}
	X^{s,x}_t:= x +B^s_t,\quad  M^{s,x}_t := e^{\int_s^t \mu(r, X^{s,x}_r) dB_r -{1\over 2}\int_s^t |\mu(r, X^{s,x}_r)|^2dr},\quad {d\hQ^{s,x}\over d\hP} := M^{s,x}_T.
	\eea
	\item[(iii)] Recall the process $X$ as in  \eqref{X2}. Define
	\begin{equation}
	\label{G}
	G^{s,x}_t(y) := \hP(X_t \ge y | X_s = x),\quad \tilde G^{s,x}_t (y):= \hP(\tilde X^{s,x}_t \ge y ),~ 0<s< t\le T,~ x, y\in \hR.
	\end{equation}
	Then $G^{s,x}_t$ and $\tilde G^{s,x}_t$ are  continuous, strictly decreasing in $y$, and enjoy the following properties:
	\beaa
	\left.\ba{c}
	G^{s,x}_t(\infty) := \lim_{y\to \infty} G^{s,x}_t(y) = 0, \quad \tilde G^{s,x}_t(\infty) := \lim_{y\to \infty} \tilde G^{s,x}_t(y)  = 0;\\
	G^{s,x}_t(-\infty) := \lim_{y\to -\infty} G^{s,x}_t(y) = 1, \quad \tilde G^{s,x}_t(-\infty) := \lim_{y\to -\infty} \tilde G^{s,x}_t(y)  = 1.
	\ea\right.
	\eeaa
	Furthermore,  $G^{s,x}_t$ has a continuous inverse function $(G^{s,x}_t)^{-1}$ on $(0, 1)$, and by continuity we set $(G^{s,x}_t)^{-1}(0) := -\infty$, $(G^{s,x}_t)^{-1}(1) := \infty$.
	
	\item[(iv)] For any $g\in \cI$ fixed, let $u(t,x) := \hE^{\hP }[g(\tilde X^{t,x}_T)]$, $(t,x)\in (0, T]\times \hR$. 
	Then  $u$ is bounded, increasing in $x$, and is the unique bounded viscosity solution of the following PDE:
	\begin{equation}
	\label{PDE}
	\sL u(t,x) := \pa_t u + {1\over 2} \pa_{xx} u  + \mu \pa_x u =0,\quad 0< t\le T;\quad u(T,x) = g(x).
	\end{equation}
	
	\item[(v)] For the $t_0$ and $C_0$ in (i),   there exists $\delta = \delta(C_0)>0$ such that, if  $g\in \cI$ is sufficiently smooth and $g'$ has compact support, then $u$ is sufficiently smooth on $[T-\delta, T]\times \hR$ and there exists a constant  $C>0$, which may depend on $g$, satisfying, for $(t,x)\in [T-\delta, T]\times \hR$, 
	\begin{equation}
	\label{paxubound}
	|u(t,x) - g(-\infty)| \le C e^{-x^2}, x<0;~  |u(t,x) - g(\infty)|\le C e^{-x^2},x>0; ~  \pa_x u(t,x) \le Ce^{- x^2}.
	\end{equation}
\end{enumerate}

\end{lemma}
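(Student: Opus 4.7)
The plan is to dispatch parts (i)--(iv) using the structural bounds from Assumption \ref{assum-Gphi} and Proposition \ref{prop-tail} together with classical SDE/PDE theory, and to concentrate effort on part (v), which is the main obstacle.

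For (i), I would rewrite
\[
\mu(t,x)-b(t,x) = \frac{\partial_t\varphi(t,G)}{\partial_p\varphi(t,G)\,\rho} - \frac{1}{2}\,\frac{\partial_{pp}\varphi(t,G)}{\partial_p\varphi(t,G)}\,\rho,
\]
with $G=G(t,x)$ and $\rho=\rho(t,x)$. Coupling the ratios of $\varphi$-derivatives from Assumption \ref{assum-Gphi} with the two-sided bound on $G(1-G)/\rho$ in Proposition \ref{prop-tail} bounds the first term by $C_0$ (via $G(1-G)/\rho\le C_0$) and the second by $C_0(1+|x|)$ (via $\rho/[G(1-G)]\le C_0(1+|x|)$), yielding the linear-growth bound on $|\mu|$. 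A parallel differentiation in $x$, using $\partial_x G=-\rho$, $|\partial_x\rho/\rho|\le C_0$, and the higher $\varphi$-derivative bounds, gives $|\partial_x\mu|\le C_0(1+|x|^2)$; the worst contribution is of order $[\rho/(G(1-G))]^2$. Smoothness of $\mu$ is inherited from the assumed regularity of $b,\varphi,G,\rho$. Part (ii) is then immediate: the local Lipschitz property and linear growth of $\mu$ give a unique strong solution to \eqref{tildeXsx}, while Bene\v{s}' criterion, applied to the $\hP$-Brownian motion $X^{s,x}$, makes $M^{s,x}$ a true martingale; Girsanov's theorem then identifies $\hP\circ(\tilde X^{s,x})^{-1}$ with $\hQ^{s,x}\circ(X^{s,x})^{-1}$.

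Parts (iii) and (iv) should be standard. Proposition \ref{prop-tail} applied to the diffusion restarted from $(s,x)$ yields a strictly positive smooth transition density, from which continuity, strict monotonicity, the boundary limits, and the continuous inverse of $G^{s,x}_t$ follow directly; the same properties of $\tilde G^{s,x}_t$ transfer through the equivalence $\hQ^{s,x}\sim\hP$ on $\cF_T$ and the Gaussian law of $x+B^s_t$ under $\hP$. For (iv), boundedness of $u$ comes from that of $g$; monotonicity uses the one-dimensional pathwise comparison $\tilde X^{t,x}\le\tilde X^{t,x'}$ for $x\le x'$ driven by a common Brownian motion; and Feynman--Kac realizes $u$ as the unique bounded viscosity solution of \eqref{PDE} through the standard comparison principle for linear parabolic equations with continuous linear-growth coefficients.

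The hard part will be (v). The idea is to exploit $g\equiv g(\pm\infty)$ off the compact set $[-R,R]\supset\mathrm{supp}(g')$: for $x\ge R$,
\[
g(\infty)-u(t,x) = \int_{-R}^{R}g'(y)\,\hP(\tilde X^{t,x}_T<y)\,dy,
\]
and invoking the Girsanov identification from (ii) I rewrite the inner probability as $\hE^\hP[M^{t,x}_T\mathbf{1}_{\{x+B^t_T<y\}}]$. Cauchy--Schwarz bounds this by $\|M^{t,x}_T\|_{L^2(\hP)}\cdot\hP(x+B^t_T<y)^{1/2}$, where the Gaussian factor contributes $\exp(-(x-R)^2/(2(T-t)))$. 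The delicate step is to calibrate $\delta=\delta(C_0)$ small enough that simultaneously $\sup_{t\in[T-\delta,T],\,x\in\hR}\hE[(M^{t,x}_T)^2]<\infty$ and $\exp(-(x-R)^2/(2\delta))\le Ce^{-x^2}$ for large $|x|$: the first demand is delicate because $(M^{t,x}_T)^2$ carries an extra factor $\exp(\int_t^T|\mu|^2)$ with $|\mu|$ of linear growth, so only on a sufficiently short window can the Brownian exponential moment of $X^{t,x}$ absorb it. The case $x<-R$ is symmetric. For $\partial_x u$, I would differentiate under the expectation to obtain $\partial_x u = \hE[g'(\tilde X^{t,x}_T)\partial_x\tilde X^{t,x}_T]$ and combine the Gaussian-small probability that $\tilde X^{t,x}_T\in\mathrm{supp}(g')$ for large $|x|$ with an $L^2$ control of $\partial_x\tilde X^{t,x}_T$ obtained from the quadratic growth of $\partial_x\mu$; iterating this differentiation argument delivers the smoothness of $u$ on $[T-\delta,T]\times\hR$.
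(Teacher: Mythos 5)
Your treatment of (i)--(iv) follows essentially the same route as the paper: the bounds \eqref{phibound} and \eqref{Gbound} give \eqref{mugrowth}, Bene\v{s}/Karatzas--Shreve gives the true-martingale property and Girsanov the law identification, and positivity of the transition density handles (iii); your use of pathwise comparison of the one-dimensional SDE to get monotonicity of $u$ in (iv) is a clean alternative to the paper's device (which instead deduces it by approximating $g$ with smooth terminal data and showing $\pa_x u=\hE[g'(\tilde X^{t,x}_T)e^{\int_t^T\pa_x\mu\,ds}]\ge0$ in (v)), and either works.

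The problem is in (v), where your pivotal calibration target is unattainable: you cannot choose $\delta$ so that $\sup_{t\in[T-\delta,T],\,x\in\hR}\hE[(M^{t,x}_T)^2]<\infty$. Indeed $\hE[(M^{t,x}_T)^2]$ carries the factor $\hE\big[e^{c\int_t^T|\mu(r,X^{t,x}_r)|^2dr}\big]$, and since $|\mu(r,x+B^t_r)|\le C_0(1+|x|+|B^t_r|)$ genuinely grows linearly (the bound $\rho/[G(1-G)]\le C_0(1+|x|)$ is attained, e.g.\ in the Gaussian case), this is of order $e^{C\delta(1+|x|^2)}$ for every $\delta>0$: a short window absorbs the $\sup|B|^2$ part of the exponent but not the $|x|^2$ part, so the supremum over $x$ is infinite. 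The estimate is nonetheless salvageable, and the repair is exactly what the paper does: keep the $x$-dependent bound $e^{C\delta(1+|x|^2)}$ explicitly (the paper uses a three-factor H\"older splitting $M^{t,x}_T=e^{\int\mu\,dB-\frac32\int|\mu|^2}e^{\int|\mu|^2}$, the first factor having expectation $1$ by Corollary 5.16 again) and observe that the Gaussian tail $\hP(X^{t,x}_T\le R)\le Ce^{-(x-R)^2/(8\delta)}$ dominates it once $\delta$ is small, yielding $Ce^{-x^2}$. The same caveat applies to your gradient estimate: $\hE[(\pa_x\tilde X^{t,x}_T)^2]=\hE[e^{2\int_t^T\pa_x\mu(s,\tilde X^{t,x}_s)ds}]$ with $\pa_x\mu$ of quadratic growth is again only of order $e^{C\delta(1+|x|^2)}$, and the localization probability $\hP(\tilde X^{t,x}_T\in\mathrm{supp}\,g')$ is not a bare Gaussian tail but requires a Gronwall argument (or, as in the paper, a Girsanov rewrite back to $X^{t,x}$) before the Gaussian decay in $x$ can be extracted; with these two points made explicit your argument for \eqref{paxubound} closes and coincides in substance with the paper's.
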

\begin{proof}  
(i)  By our assumptions and Proposition \ref{prop-tail}, the regularity of $\mu$ follows immediately.  For any $t\ge t_0$, by \eqref{phibound} and then \eqref{Gbound} we have, 
\beaa
&&  \Big|{\pa_t \f(t, G(t,x))\over \pa_p \f(t, G(t,x)) \rho(t,x)}\Big| \le {C G(t,x) [1-G(t,x)]\over \rho(t,x)} \le C;\\
&&  \Big|{\pa_{pp}\f(t, G(t,x)) \rho(t,x) \over \pa_p \f(t, G(t,x)) }\Big| \le {C \rho(t,x) \over G(t,x)[1-G(t,x)]}   \le C  [1+|x|].
\eeaa
Then it follows from \eqref{mu} that  $ |\mu(t,x)| \le C[1+|x|]$.

Moreover, note that
\beaa
\pa_x \mu(t, x) = \pa_x b  - {\pa_{tp} \f \over \pa_p\f} + {\pa_t \f \pa_{pp}\f\over (\pa_p \f)^2} - {\pa_t \f \pa_x \rho\over \pa_p\f \rho^2} + {1\over 2}{\pa_{ppp}\f \rho^2\over \pa_p\f} - {1\over 2}{\pa_{pp}\f \pa_x \rho \over \pa_p\f} - {1\over 2}{(\pa_{pp}\f)^2 \rho^2 \over (\pa_p\f)^2}.
\eeaa
By \eqref{phibound} and  \eqref{Gbound} again one can easily verify that $|\pa_x \mu(t,x)|\le C_0[1+|x|^2]$.

(ii) Since $\mu$ is locally uniform Lipschitz continuous in $x$, by a truncation argument $\tilde X^{s,x}$ exists locally. Now the uniform linear growth \eqref{mugrowth} guarantees the global existence. Moreover,  by Karatzas-Shreve \cite[Chapter 3, Corollary 5.16]{KS} we see that $M^{t,x}$ is a true $\hP$-martingale and thus $\hQ^{t,x}$ is a probability measure. 

(iii) Since the conditional law of $X_t$ under $\hP$ given $X_s=x$ has a  strictly positive density, the statements concerning $G^{s,x}_t$ is obvious. Similarly, since the law of $X^{s,x}_t$ under $\hP$ has a density, and $d\hQ^{s,x} \ll d\hP$, the statements concerning $\tilde G^{s,x}_t$ are also obvious.

(v) We shall prove (v) before (iv). Let $\delta >0$ be specified later, and $t\in [T-\delta, T]$. Let $R>0$ be such that $g'(x)=0$ for $|x|\ge R$. Note that $u(t,x) = \hE[M_T^{t,x}g(X_T^{t,x})]$. For $x > 2R$, we have
\beaa
&&\big|u(t,x) - g(\infty)\big| = \Big|\hE\big[M_T^{t,x}[g(X_T^{t,x}) - g(\infty)]\big]\Big|\le \hE\big[M_T^{t,x}|g(X_T^{t,x}) - g(\infty)|\big]\\
&& \le  2 \| g\|_{\infty} \hE\big[M_T^{t,x}1_{\{X^{t,x}_T \le R\}}\big] =   2 \| g\|_{\infty} \hE\Big[e^{\int_t^T \mu dB_r - {3\over 2}\int_t^T |\mu|^2dr} e^{\int_t^T |\mu|^2dr} 1_{\{X^{t,x}_T \le R\}}\big].
\eeaa
Then
\beaa
\big|u(t,x) - g(\infty)\big|^3 \le  \| g\|_{\infty}^3 \hE\Big[e^{3\int_t^T \mu dB_r - {9\over 2}\int_t^T |\mu|^2dr}\Big]\hE\Big[e^{3\int_t^T |\mu|^2dr}\Big] \hP(X^{t,x}_T \le R).
\eeaa
By \cite[Chapter 3, Corollary 5.16]{KS} again, we have $\hE\Big[e^{3\int_t^T \mu dB_r - {9\over 2}\int_t^T |\mu|^2dr}\Big]=1$. By \eqref{mugrowth}, we have
\beaa
\hE\Big[e^{3\int_t^T |\mu|^2dr}\Big] \le \hE\Big[e^{ C_0 \int_t^T [1+|x|^2 + |B^t_r|^2]dr}\Big]\le e^{ C_0 \delta [1+|x|^2]} \hE\Big[e^{ C_0 \delta \sup_{0\le s\le \delta} |B_s|^2}\Big]\le e^{2 C_0 \delta [1+|x|^2]},
\eeaa
for $\delta$ small enough.  Fix such a $\delta > 0$ and let  $t \in [T - \delta, T]$. Note that we may choose $\delta$ independent from $g$. Henceforth we let $C > 0$ be a generic constant. Moreover, since $x>2R$,
\beaa
\hP(X^{t,x}_T \le R) \le \hP(X^{t,x}_T \le {x\over 2}) = \hP(B^t_T \le -{x\over 2}) \le  \hP(B_1 \le -{x\over 2\sqrt{\delta}}) \le Ce^{-{x^2\over 8\delta}}.
\eeaa
Putting together we have, for $\delta$ small enough,
\beaa
\big|u(t,x) - g(\infty)\big|^3  \le Ce^{-{x^2\over 8\delta} + C\delta [1+|x|^2]} \le Ce^{-3 x^2},
\eeaa
This implies that $\big|u(t,x) - g(\infty)\big| \le Ce^{-x^2}$. Similarly, $\big|u(t,x) - g(-\infty)\big| \le Ce^{-x^2}$ for $x<0$.

The preceding estimates allows us to differentiate inside the expectation, and we have
\beaa
&\pa_x u(t,x) =\hE\big[g'(\tilde X_T^{t,x})\td \tilde X_T^{t,x}\big], \\
&\mbox{where} \quad \td \tilde X_s^{t,x} = 1 + \int_t^s \pa_x\mu(r, \tilde X_r^{t,x}) \td \tilde X_r^{t,x} dr\quad \mbox{and thus}\quad \td \tilde X_T^{t,x} = e^{\int_t^T \pa_x\mu(s, \tilde X_s^{t,x}) ds}>0.
\eeaa
Then $\pa_x u \ge 0$.  Moreover, recalling \eqref{Q} we  have
\beaa
\pa_x u(t,x) &=&  \hE\Big[g'(\tilde X_T^{t,x})e^{\int_t^T \pa_x\mu(s, \tilde X_s^{t,x}) ds}\Big]=\hE\Big[M^{t,x}_Tg'(\tilde X_T^{t,x})e^{\int_t^T \pa_x\mu(s, \tilde X_s^{t,x}) ds}\Big] \\
&\le& C \hE\Big[M^{t,x}_T e^{\int_t^T \pa_x\mu(s, \tilde X_s^{t,x}) ds} 1_{\{|X_T^{t,x}|\le R\}}\Big].
\eeaa
Then, by the estimate of $\pa_x\mu$ in \eqref{mugrowth}, it follows from the same arguments as above we can show that $|\pa_x u(t,x)|\le Ce^{-x^2}$. 

Finally, we may apply the arguments further to show that $u$ is sufficiently smooth, and then it follows from the flow property and the standard It\^{o} formula that $u$ satisfies PDE \eqref{PDE}. 

(iv) We shall only prove the results on $[T-\delta, T]$. Since $\delta>0$ depends only on $C_0$ in (i), one may apply the results backwardly in time and extend the results to $[t_0, T]$. Then it follows from the arbitrariness of $t_0$ that the results hold true on $(0, T]$.

We now fix $\delta$ as in (v). The boundedness of $u$ is obvious. Note that $u(t,x) =  \hE^{\hP}\big[g(X^{t,x}_{T}) M^{t,x}_T\big]$, and $\mu$, $g$ are continuous, following similar arguments as in (v) one can show that $u$ is continuous. 

Next, for any $g\in \cI$, there exist approximating sequence $\{g_n\}$ such that each $g_n$ satisfies the conditions in (v). Let $u_n(t,x) := \hE[g_n(\tilde X^{t,x}_T)]$. Then $u_n$ is increasing in $x$ and is a classical solution to PDE \eqref{PDE} on $[T-\delta, T]$ with terminal condition $g_n$. It is clear that $u_n \to u$. Then $u$ is also increasing in $x$ and its viscosity property follows from the stability of viscosity solutions. The uniqueness of viscosity solution follows from the standard comparison principle. We refer to the classical reference Crandall-Ishii-Lions \cite{CIL} for the details of the viscosity theory.  
\end{proof}

\medskip

We are now ready for the main result of this section. Recall \eqref{G} and define
\bea
\label{Phi-cont}
\Phi(s,t,x; p) := \tilde G^{s,x}_t\big((G^{s,x}_t)^{-1}(p)\big), \quad s>0.
\eea

\begin{theorem}
	\label{thm-limit}
	Let Assumptions \ref{assum-b} and \ref{assum-Gphi} hold.
	Then $\Phi$ defined by \eqref{Phi-cont} is a time-consistent dynamic distortion function which is consistent with the initial conditions: $\Phi(0,t, x; p) = \f_t(p)$. 
\end{theorem}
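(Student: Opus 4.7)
The plan is to reduce the theorem to classical stochastic analysis by identifying the distorted conditional expectation induced by $\Phi$ with a linear expectation under $\hQ^{s,x}$ of Lemma \ref{lem-mu}(ii), equivalently with $\hE[\,\cdot\,]$ evaluated on the $\hP$-SDE $\tilde X^{s,x}$ in \eqref{tildeXsx}. Once this identification is in place, the tower property reduces to the Markov/flow property of $\tilde X$, while the initial-time consistency reduces to a Kolmogorov/Fokker--Planck uniqueness argument.

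First I would check that $\Phi(s,t,x;\cdot)$ is a distortion function for every $(s,t,x)$ with $0<s<t\le T$. By Lemma \ref{lem-mu}(iii), $G^{s,x}_t$ and $\tilde G^{s,x}_t$ are continuous and strictly decreasing on $\hR$ with limits $1$ at $-\infty$ and $0$ at $+\infty$. Hence $(G^{s,x}_t)^{-1}$ is a continuous, strictly decreasing bijection from $(0,1)$ onto $\hR$, and the composition $\Phi=\tilde G^{s,x}_t\circ(G^{s,x}_t)^{-1}$ extends by continuity to a continuous, strictly increasing bijection $[0,1]\to[0,1]$ with $\Phi(s,t,x;0)=0$ and $\Phi(s,t,x;1)=1$; joint measurability in $(x,p)$ follows from joint continuity of $G^{s,x}_t$, $\tilde G^{s,x}_t$ and their inverses in $(x,y)$.

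Next I would establish the key identity
\[
\sE_{s,t}[g(X_t)]\big|_{X_s=x}=\hE\big[g(\tilde X^{s,x}_t)\big]=u(s,x),\qquad g\in\cI,\ 0<s<t\le T.
\]
Assuming $g$ strictly increasing (the general case in $\cI$ follows by monotone approximation and Proposition \ref{prop-DCT}(iii)), the relation $\hP(g(X_t)\ge y\mid X_s=x)=G^{s,x}_t(g^{-1}(y))$ combined with \eqref{Phi-cont} collapses the integrand in \eqref{condcE} to $\tilde G^{s,x}_t(g^{-1}(y))$, whose integral equals $\hE[g(\tilde X^{s,x}_t)]$. By Lemma \ref{lem-mu}(iv), $u(s,\cdot)\in\cI$, so the tower property for $0<r<s<t\le T$ follows at once from the flow property $\tilde X^{s,\tilde X^{r,x}_s}_t=\tilde X^{r,x}_t$ of \eqref{tildeXsx}:
\[
\sE_{r,s}[u(s,X_s)]\big|_{X_r=x}=\hE\big[u(s,\tilde X^{r,x}_s)\big]=\hE\big[g(\tilde X^{r,x}_t)\big]=\sE_{r,t}[g(X_t)]\big|_{X_r=x}.
\]

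What remains, and where the real work lies, is the initial-time consistency $\Phi(0,t,x_0;p)=\varphi_t(p)$. The tower property at $r=0$ will then follow by the same Markov argument applied to $\tilde X^{0,x_0}$, once the density identity $\tilde\rho(s,\cdot)=\partial_p\varphi(s,G(s,\cdot))\rho(s,\cdot)$ is in hand so that the previous computation extends to $s=0$. The consistency itself reduces to the pointwise identity $\tilde G^{0,x_0}_t(y)=\varphi(t,G(t,y))$ on $(0,T]\times\hR$. Formula \eqref{mu} is engineered precisely so that, combined with \eqref{heat}, a direct differentiation yields that $H(t,y):=\varphi(t,G(t,y))$ and $\tilde G^{0,x_0}_t(y)$ satisfy the same forward Kolmogorov-type PDE $\partial_t u=\tfrac12\partial_{yy}u-\mu(t,y)\partial_y u$ on $(0,T]\times\hR$; equivalently, the associated densities $\partial_p\varphi(t,G(t,\cdot))\rho(t,\cdot)$ and $\tilde\rho(t,\cdot)$ both solve the Fokker--Planck equation with drift $\mu$ and both have initial trace $\delta_{x_0}$ as $t\downarrow 0$ (using $\varphi(0,\cdot)=\mathrm{id}$). \emph{The hard part will be justifying this uniqueness step}: the drift $\mu$ has only linear growth in $x$ by Lemma \ref{lem-mu}(i) and is singular at $t=0$, so the $\delta_{x_0}$-initial Cauchy problem must be handled either by a Girsanov/It\^o verification starting at times $\epsilon>0$ and passing to the limit $\epsilon\downarrow 0$ (using the tail estimates of Proposition \ref{prop-tail}), or as the limit of the discrete binomial-tree construction of Section \ref{sec:approx}.
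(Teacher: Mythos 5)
Your handling of the strictly positive times is exactly the paper's argument: Lemma \ref{lem-mu}(iii)--(iv) gives the distortion-function property of \eqref{Phi-cont}, the identity $\sE_{s,t}[g(X_t)]\big|_{X_s=x}=\hE[g(\tilde X^{s,x}_t)]=u(s,x)$ follows from \eqref{PhiQ}, and the tower property for $0<r<s<t\le T$ is the flow property of \eqref{tildeXsx} (equivalently uniqueness for \eqref{PDE}). Your observation that $H(t,y):=\varphi(t,G(t,y))$ solves $\pa_t H=\tfrac12\pa_{yy}H-\mu\,\pa_y H$ is also correct; this cancellation is precisely how \eqref{mu} is derived.

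The gap is the initial-time consistency, which is the real content of the theorem and which you explicitly leave as ``the hard part.'' Your route requires the distorted dynamics started at $(0,x_0)$ --- either the SDE $\tilde X^{0,x_0}$ or, equivalently, a solution of the Fokker--Planck equation with drift $\mu$ and initial trace $\delta_{x_0}$ --- together with uniqueness for that singular Cauchy problem. But $\mu(t,\cdot)$ is only controlled for $t\ge t_0>0$ (Lemma \ref{lem-mu}(i)) and is singular as $t\downarrow 0$ because $X_0$ is degenerate; Remark \ref{rem-singular} states that it is not even clear the SDE started at time $0$ is well posed, so the object $\tilde G^{0,x_0}_t$ that you want to identify with $\varphi(t,G(t,\cdot))$ is not available, and uniqueness of the $\delta_{x_0}$-initial problem for such a drift is exactly the kind of statement that needs a proof rather than a citation. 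Your fallbacks ($\e\downarrow 0$ Girsanov limit, or the binomial approximation) are plausible directions but are not carried out, and the $\e\downarrow 0$ limit amounts to proving precisely the assertion in question. The paper sidesteps time $0$ for the distorted dynamics altogether: it writes $\sE_{0,t}[u(t,X_t)]=\int_\hR\varphi(t,G(t,x))\,\pa_x u(t,x)\,dx$, inserts a spatial cutoff $\psi_m$, differentiates in $t$ on $(0,T]$ where all quantities are smooth, and uses the cancellation built into \eqref{mu} (computation \eqref{dsEt}) so that only cutoff terms survive; these vanish as $m\to\infty$ thanks to the Gaussian decay of $\pa_x u$ from Lemma \ref{lem-mu}(v) (valid on $[T-\delta,T]$ for smooth $g$ with compactly supported $g'$) and the growth bounds on $\mu$ and $\pa_p\f\,\rho$. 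The resulting constancy of $t\mapsto\sE_{0,t}[u(t,X_t)]$ is then propagated backward in $\delta$-steps and extended to all $g\in\cI$ by density. To complete your proof you would need either to reproduce this computation or to genuinely resolve the singular-initial-data existence/uniqueness problem you defer.
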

\begin{proof}  
First, by Lemma \ref{lem-mu}(iv) it is straightforward to check that $\Phi$ satisfies Definition \ref{defn-ddf}(i).

Next, For $0<s<t\le T$, note that the definition \eqref{Phi-cont} of $\Phi$ implies the counterpart of \eqref{Phi}:
\bea
\label{PhiQ}
\Phi(s,t,x; G^{s,x}_t(y)) = \tilde G^{s,x}_t(y).
\eea
Recall \eqref{condcE} and Lemma \ref{lem-mu}, one can easily see that $\sE_{s, t}[g(X_t)] = u(s, X_s)$ for any $g\in \cI$, where $u(s, x) := \hE^\hP[g(\tilde X^{s,x}_t)]$ is  increasing in $x$ and is the unique viscosity solution of the PDE \eqref{PDE} on $[s, t]\times \hR$ with terminal condition $u(t,x) = g(x)$. Then, either by the flow property of  the solution to SDE \eqref{tildeXsx} or the uniqueness of the  PDE, we obtain the tower property \eqref{flow2} immediately for $0<r<s<t\le T$. 

To verify the tower property at $r=0$, let $t_0>0$ and $\delta >0$ be as in Lemma \ref{lem-mu} (v). We first show that, for  any $g$ as in Lemma \ref{lem-mu}(v) and the  corresponding $u$, we have 
\bea
\label{dsE}
\sE_{0,t_1}\big[u(t_1, X_{t_1})]\big]=\sE_{0,t_2}\big[u(t_2, X_{t_2})]\big],\quad T-\delta \le t_1 < t_2 \le T.
\eea
Clearly the set of such $g$ is dense in $\cI$, then \eqref{dsE} holds true for all $g\in \cI$, where $u$ is the viscosity solution to the PDE \eqref{PDE}. Note that $u(t, X_t) = \sE_{t, T}[g(X_T)]$, then by setting $t_1=t$ and $t_2=T$ in \eqref{dsE} we obtain $ \sE_{0,t}\big[\sE_{t,T}[g(X_T)]\big] =\sE_{0,T}[g(X_T)]  $ for $T-\delta \le t\le T$. Similarly we can verify the tower property over any interval $[t-\delta, t] \subset [t_0, T]$. Since $\sE_{s, t}$ is already time-consistent for $0<s<t$, we see the time-consistency for any $t_0\le s<t\le T$. Now by the arbitrariness of $t_0>0$, we obtain the tower property at $r=0$ for all $0<s<t\le T$.

We now prove \eqref{dsE}. Recall \eqref{paxubound} that $u(t,-\infty) = g(-\infty)=0$.  
Then, for $T-\delta \le t\le T$,  similar to (\ref{sEgsmooth}), we have
\beaa
\sE_{0,t}\big[u(t, X_t)]\big] = \int_0^\infty \f\big(t, \hP(u(t, X_t) \ge x)\big) dx = \int_\hR \f\big(t, G(t,x)\big) \pa_x u(t,x) dx.
\eeaa
Let $\psi_m: \hR\to [0, 1]$ be smooth with $\psi_m(x) = 1$, $|x|\le m$, and $\psi_m(x) =0$, $|x|\ge m+1$.
Denote
\beaa
\sE^m_{0,t}\big[u(t, X_t)]\big] :=  \int_\hR \f\big(t, G(t,x)\big) \pa_x u(t,x) \psi_m(x)dx.
\eeaa

Then, recalling \eqref{heat}, \eqref{PDE}, and suppressing the variables when the context is clear, we have
\bea
\label{dsEt}
&&{d\over dt} \sE^m_{0,t}\big[u(t, X_t)]\big] = \int_\hR \Big[[\pa_t \f + \pa_p \f \pa_t G] \pa_x u + \f \pa_{tx} u\Big]\psi_m dx\nonumber\\ 
&& =  \int_\hR \Big[[\pa_t \f + \pa_p \f \pa_t G] \pa_x u \psi_m + [\pa_p \f  \rho \psi_m- \f \psi_m']\pa_{t} u\Big] dx\nonumber\\
&&= \int_\hR \Big[\big[\pa_t \f + \pa_p \f [b \rho - {1\over 2}\pa_x \rho]\big] \psi_m\pa_x u  - [\pa_p \f  \rho\psi_m  - \f\psi_m']  \big[  {1\over 2} \pa_{xx} u + \mu \pa_x u\big] \Big] dx\nonumber\\
&&= \int_\hR \Big[\big[\pa_t \f + \pa_p \f [b \rho - {1\over 2}\pa_x \rho]- \pa_p \f  \rho \mu\big]\psi_m \pa_x u  + \f\psi_m'  \mu \pa_x u \\
&&\quad +{1\over 2} \pa_{x} u \big[\pa_p \f \pa_x \rho \psi_m- \pa_{pp} \f \rho^2\psi_m +2\pa_p \f  \rho\psi'_m - \f \psi_m''  \big] \Big] dx\nonumber\\
&&=\int_\hR \Big[\big[\pa_t \f + \pa_p \f b \rho - \pa_p \f  \rho \mu - {1\over 2} \pa_{pp} \f \rho^2\big]\psi_m + \big[\f \mu + \pa_p \f \rho\big]\psi'_m  - {1\over 2} \f \psi''_m \Big]\pa_x u   dx\nonumber\\
&&=  \int_\hR  \Big[ \big[\f \mu + \pa_p \f \rho\big]\psi'_m  - {1\over 2} \f \psi''_m \Big]\pa_x u dx,\nonumber
\eea

where the last equality follows from \eqref{mu}. That is,  for any $T-\delta\le t_1 < t_2 \le T$,
\begin{equation}
\label{dsEt2}
\sE^m_{0,t_2}\big[u(t_2, X_{t_2})]\big]- \sE^m_{0,t_1}\big[u(t_1, X_{t_1})]\big] = \int_{t_1}^{t_2}\int_\hR  \Big[\big[\f \mu + \pa_p \f \rho\big]\psi'_m  - {1\over 2} \f \psi''_m \Big]\pa_x u dxdt.
\end{equation}
It is clear that $\lim_{m\to\infty} \sE^m_{0,t}[u(t, X_t)] = \sE_{0,t}[u(t,X_t)]$. Note that, by \eqref{mugrowth} and \eqref{paxubound},  we have
\beaa
|\mu|\le C[1+|x|],\quad  \pa_p\f \rho \le   {C\rho \over G[1-G]} \le C[1+|x|],\quad |\pa_x u|\le Ce^{-x^2}.
\eeaa
Then, by sending $m\to \infty$ in \eqref{dsEt2} and  applying the dominated convergence theorem, we obtain \eqref{dsE} and hence the theorem.  
\end{proof}

\medskip

\begin{remark} \label{rem-singular}
In the definition of $\Phi$ (see \eqref{Phi-cont}) we require that the initial time $s$ is strictly positive. In fact, when $s=0$ the distribution of $X_s$ becomes degenerate, and thus $\mu$ may have singularities.  For example, assume $\f(t,\cdot)= \f(\cdot)$ is independent of $t$ and $b\equiv0$, $x_0=0$. Then
\beaa
\mu(t, x) = -{\f''(G(t,x))\over 2 \f'(G(t,x))} {1\over 2\sqrt{\pi t}} e^{-{x^2\over 2t}}.
\eeaa
It is not even clear if the following SDE is wellposed in general: 
\beaa
\tilde X_t =  \int_0^t \mu(s, \tilde X_s) ds + B_t.
\eeaa
Correspondingly, if we consider the following PDE on $(0, T]\times \hR$:
\beaa
\sL u(t,x) = 0, ~ (t,x) \in (0, T]\times \hR,\quad u(T,x) = g(x).
\eeaa
then it is not clear whether or not $\lim_{(t,x) \to (0, 0)} u(t,x)$ exists.
\end{remark}

 Unlike Theorem \ref{thm-ddf} in the discrete case, surprisingly here the time-consistent dynamic distortion function is {\it not} unique. Let $\check{\Phi}$ be an arbitrary time-consistent dynamic distortion function for $0<s<t\le T$ (not necessarily consistent with $\f_t$ when $s=0$ at this point). Fix $0<t \le T$. For any $s\in (0, t]$ and  $g\in \cI$, define 
\bea
\label{hatu}
\check{u}(s, x) :=  \int_\hR \check{\Phi}(s,t,x; \hP( g(X_t) \ge y|X_s=x)) ds.
\eea
The corresponding $\{\check{\sE}_{s, t}\}$ is time-consistent, i.e., the tower property holds. Suppose $\check{\Phi}$ defines via \eqref{PhiQ} a $\mathbb{Q}$-diffusion $\check{X}$ with coefficients $\check{\mu}$, $\check{\sigma}$, i.e., 
\bea \label{checkPhi}
\left.\ba{c}
\check\Phi(s,t,x; \hP(X_t \ge y|X_s=x)) = \hP(\check X^{s,x}_t\ge y),\\
\mbox{where}\quad \check X^{s,x}_t = x + \int_s^t \check \mu(r, \check X^{s,x}_r) dB_r+\int_s^t \check \sigma(r, \check X^{s,x}_r) dB_r,\quad \hP\mbox{-a.s.}
\ea\right.
\eea
and $\check u$ satisfies the following PDE corresponding to the infinitesimal generator of $\check{X}$:
\bea
\label{hatPDE}
\pa_t \check u + {1\over 2} \check \sigma^2 \pa_{xx}\check u + \check \mu \pa_x \check u =0,~ (s, x) \in (0, t]\times \hR;\quad \check u(t,x) = g(x).
\eea
We have the following more general result.

\begin{theorem}
\label{thm-limit2}
Let Assumptions \ref{assum-b} and \ref{assum-Gphi} hold, and $\check \Phi$ be an arbitrary smooth time-consistent (for $t>0$) dynamic distortion function corresponding to \eqref{checkPhi}. Suppose $\check{\mu}$ and $\check{\sigma}$ are sufficiently smooth such that $\check{u}$ is smooth and integration by parts in \eqref{dsEthat} below goes through. Then $\check \Phi$ is consistent with the initial condition $\check\Phi(0,t, x; p) = \f_t(p)$ if and only if 
	\bea
	\label{hatmu}
	\check \mu = b + \check \sigma \pa_x \check \sigma  + {1\over 2}[\check\sigma^2 -1]{\pa_x\rho \over \rho}+ {\pa_t \f \over \pa_p\f \rho} - {\check\sigma^2 \pa_{pp}\f \rho \over 2\pa_p\f}.
	\eea
	
	In particular, if we restrict to the case $\check\sigma = 1$, then $\check \mu = \mu$ and hence $\check\Phi = \Phi$ is unique.	
\end{theorem}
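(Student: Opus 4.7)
The plan is to mirror the argument used in the proof of Theorem \ref{thm-limit}, with the specific PDE \eqref{PDE} replaced by the more general \eqref{hatPDE}. Since $\check{\Phi}$ is already assumed time-consistent for $0<s<t\le T$, we have $\check{\sE}_{s,t}[g(X_t)] = \check u(s, X_s)$ for $g\in\cI$, with $\check u$ increasing in $x$ and solving \eqref{hatPDE}. The missing ingredient---consistency with the initial datum $\check\Phi(0,t,x;p) = \f_t(p)$---is equivalent, via exactly the reduction in the proof of Theorem \ref{thm-limit} (using the density of smooth $g\in\cI$ with compactly supported derivative), to the statement that the map $t \mapsto \sE_{0,t}[\check u(t, X_t)]$ is constant on $[t_0, T]$ for every $t_0 > 0$.

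To compute this derivative, I would write $\sE_{0,t}[\check u(t, X_t)] = \int_{\hR} \f(t, G(t, x))\, \pa_x \check u(t,x)\, dx$ after integration by parts (cf.~\eqref{sEgsmooth}), insert a spatial cutoff $\psi_m$ exactly as in \eqref{dsEt2}, and differentiate in $t$. Substituting $\pa_t G = -\tfrac{1}{2}\pa_x\rho + b\rho$ from \eqref{heat} and, this time, $\pa_t \check u = -\tfrac{1}{2}\check\sigma^2 \pa_{xx}\check u - \check\mu \pa_x \check u$ from \eqref{hatPDE}, the calculation parallels the one that produced \eqref{dsEt}; the only new feature is the factor $\check\sigma^2$ multiplying $\pa_{xx}\check u$. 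The second integration by parts therefore picks up an extra contribution from $\pa_x(\check\sigma^2) = 2\check\sigma\pa_x\check\sigma$, while the $\psi_m'$ and $\psi_m''$ boundary terms are pushed to zero as $m\to\infty$ using decay estimates on $\check u$ analogous to \eqref{paxubound} (available here under the smoothness hypothesis on $\check\mu, \check\sigma$).

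After collecting terms the derivative takes the form
\begin{equation*}
\frac{d}{dt}\sE_{0,t}[\check u(t, X_t)] = \int_{\hR} \Bigl[ \pa_t\f + \pa_p\f\, b\, \rho - \pa_p\f\, \rho\, \check\mu + \pa_p\f\, \rho\, \check\sigma \pa_x\check\sigma + \tfrac{1}{2}(\check\sigma^2 - 1)\pa_p\f\, \pa_x\rho - \tfrac{1}{2}\check\sigma^2 \pa_{pp}\f\, \rho^2 \Bigr]\, \pa_x \check u\, dx,
\end{equation*}
with every $\f$--term evaluated at $(t, G(t, x))$. Since the family $\{\pa_x \check u\}_{g \in \cI}$, restricted to the dense subclass of Lemma \ref{lem-mu}(v), generates a sufficiently rich set of nonnegative smooth test functions, vanishing of this integral for all admissible $g$ forces the bracket to vanish pointwise. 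Dividing by $\pa_p\f \cdot \rho > 0$ then yields exactly \eqref{hatmu}; conversely, \eqref{hatmu} makes the bracket identically zero and the tower property at $r = 0$ follows.

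Specializing to $\check\sigma \equiv 1$ kills the $\check\sigma\pa_x\check\sigma$ and $(\check\sigma^2-1)\pa_x\rho/(2\rho)$ terms and reduces $\check\sigma^2 \pa_{pp}\f\, \rho/(2\pa_p\f)$ to $\pa_{pp}\f\, \rho/(2\pa_p\f)$, so \eqref{hatmu} collapses to the formula \eqref{mu} for $\mu$; the SDE in \eqref{checkPhi} then coincides with \eqref{tildeXsx}, strong uniqueness gives $\check X = \tilde X$ in law, and hence $\check\Phi = \Phi$. I expect the main obstacle to be the rigorous justification of differentiation under the integral and of the vanishing of the $\psi_m'$, $\psi_m''$ contributions as $m \to \infty$; this is precisely what the (somewhat opaque) smoothness hypothesis on $\check\mu, \check\sigma$ is designed to accommodate, and it runs parallel to the analogous delicate step at the end of the proof of Theorem \ref{thm-limit}.
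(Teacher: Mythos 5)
Your proposal is correct and follows essentially the same route as the paper: reduce consistency with the initial datum to constancy of $t\mapsto\sE_{0,t}[\check u(t,X_t)]$, differentiate under the integral using \eqref{heat} and \eqref{hatPDE}, integrate by parts (picking up the $\check\sigma\pa_x\check\sigma$ and $\tfrac12(\check\sigma^2-1)\pa_x\rho/\rho$ terms), and conclude by the arbitrariness of $g$ that the bracket vanishes, which is exactly \eqref{hatmu}; your final integrand agrees with \eqref{dsEthat}. The only cosmetic difference is that you reinstate the cutoff $\psi_m$ from the proof of Theorem \ref{thm-limit}, whereas the paper simply assumes the integration by parts in \eqref{dsEthat} is valid as part of the hypotheses.
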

\begin{proof}  
The consistency of $\check \Phi$ with the initial condition $\Phi(0,t, x; p) = \f_t(p)$ is equivalent to  \eqref{dsE} for $\check u$, where $\check u$ is the solution to PDE \eqref{hatPDE} on $(0, T]$ with terminal condition $g$. Similar to \eqref{dsEt}, we have

 \bea
\label{dsEthat}
&&{d\over dt} \sE_{0,t}\big[\check u(t, X_t)]\big] = \int_\hR \Big[[\pa_t \f + \pa_p \f \pa_t G] \pa_x \check u + \f \pa_{tx} \check u\Big] dx \nonumber\\
&&=  \int_\hR \Big[[\pa_t \f + \pa_p \f \pa_t G] \pa_x \check u + \pa_p \f  \rho \pa_{t} \check u\Big] dx\nonumber\\
&&= \int_\hR \Big[\big[\pa_t \f + \pa_p \f [b \rho - {1\over 2}\pa_x \rho]\big] \pa_x \check u  - \pa_p \f  \rho \big[  {1\over 2} \check \si^2 \pa_{xx}\check u + \check \mu \pa_x u\big] \Big] dx\nonumber\\
&&= \int_\hR \Big[\big[\pa_t \f + \pa_p \f [b \rho - {1\over 2}\pa_x \rho]- \pa_p \f  \rho \check \mu\big] \pa_x \check u   \\
&&\quad +{1\over 2} \pa_{x} \check u \big[- \pa_{pp} \f \rho^2 \check \si^2 + \pa_p \f \pa_x \rho \check \si^2 + 2 \pa_p\f \rho \check\si \pa_x \check \si  \big] \Big] dx\nonumber\\
&&=\int_\hR \Big[b + \check \sigma \pa_x \check \sigma + {\pa_t \f \over \pa_p\f \rho} - {\check\si^2 \pa_{pp}\f \rho \over 2\pa_p\f} + {1\over 2}[\check\si^2 -1]{\pa_x\rho \over \rho}\big] -\check \mu\Big]\pa_p\f \rho \pa_x\check u   dx.\nonumber
\eea
Since $\pa_p\f \rho >0$, and $g$  and hence $\check u$ is arbitrary, we get the equivalence of \eqref{dsE} and \eqref{hatmu}.  
\end{proof}

\medskip

\begin{remark} \label{rem-uniqueness}  {\ }
\begin{enumerate}
	\item[(i)] When $\check{\sigma} \not\equiv 1$, the law of $\check X^{s,x}_t$ can be singular to the conditional law of $X_t$ given $X_s = x$. That is, the agent may distort the probability so dramatically that the distorted probability is singular to the original one. For example, some event which is null under the original probability may be distorted into a positive or even full measure, so the agent could be worrying too much on something which could never happen, which does not seem to be reasonable in practice. Our result says, if we exclude this type of extreme distortions, then for given $\{\f_t\}$, the time-consistent dynamic distortion function $\Phi$ is unique.  
	\item[(ii)] In the discrete case in Section \ref{sect:GBT}, due to the special structure of binomial tree, we always have $|X_{t_{i+1}}-X_{t_i}|^2 = h$. Then for any possible $\hQ$, we always have $\hE^{\hQ}\big[|X_{t_{i+1}}-X_{t_i}|^2 \big| X_{t_i} = x_{i,j}\big]= h$. This, in the continuous time model, means $\check \si \equiv 1$. This is why we can obtain the uniqueness in Theorem \ref{thm-ddf}.
\end{enumerate}
\end{remark}

\subsection{Rigorous proof of the convergence.}
We note that Theorem \ref{thm-limit} already gives the definition of the desired time-consistent conditional expectation for the constant diffusion case. Nevertheless, it is still worth asking whether  the discrete system in Section \ref{sec:approx} indeed converges to the continuous time system in Section \ref{sect:limit}, especially from the perspective of numerical approximations. We therefore believe that a detailed convergence analysis, which we now describe, is interesting in its own right. 

For each $N$, denote $h := h_N:= {T\over N}$, and $t_i := t^N_i :=  ih$, $i=0,\ldots, N$, as in Section \ref{sec:approx}.  Consider the notations in \eqref{BTN} and \eqref{GQPhiN}, and denote
\bea
\label{rhoN}
\rho^N_{i,j} :=   \hP^N(X^N_{t_i} = x_{i,j})\slash (2\sqrt{h}).
\eea

\begin{proposition}
	\label{prop-Grho}
	Under Assumptions \ref{assum-b}, for any sequence $(t^N_i, x^N_{i,j}) \to (t,x) \in (0, T]\times \hR$, we have $G^N_{i,j} \to G(t,x)$ and $\rho^N_{i,j} \to \rho(t,x)$ as  $N\to \infty$.
\end{proposition}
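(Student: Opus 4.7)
The plan has two parts corresponding to the two claimed limits.

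For $G^N_{i,j} \to G(t,x)$, the strategy is to first establish weak convergence of the approximating Markov chain $X^N$ to the diffusion $X$ on $[0,T]$. By \eqref{BTNPN}, the conditional mean and variance of the one-step increments $X^N_{t_{i+1}} - X^N_{t_i}$ match the drift $b$ and unit diffusion of \eqref{X2} up to errors of order $o(h)$, while higher conditional moments are controlled since the increments have magnitude $\sqrt{h}$. Under Assumption \ref{assum-b}, this is sufficient (cf.\ Stroock-Varadhan, or Ethier-Kurtz Ch.~7) to conclude that a piecewise-constant interpolation of $X^N$ converges weakly on $[0,T]$ to the unique solution of the martingale problem for \eqref{X2}, namely $X$. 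Since Proposition \ref{prop-tail} shows $X_t$ has a continuous cdf for $t>0$, the Portmanteau theorem then yields $G^N_{i,j}\to G(t,x)$ along any sequence $(t^N_i, x^N_{i,j})\to(t,x) \in (0,T]\times\hR$.

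For $\rho^N_{i,j}\to\rho(t,x)$, which is a local central limit theorem, I would work at the level of the discrete forward equation. A direct computation from \eqref{rhoN} gives
\begin{equation*}
\rho^N_{i+1,k} \;=\; p^+_{i,k-1}\,\rho^N_{i,k-1} + p^-_{i,k}\,\rho^N_{i,k},
\end{equation*}
and substituting $p^\pm_{i,j} = \tfrac12 \pm \tfrac12 b_{i,j}\sqrt h$ and Taylor-expanding shows this is a consistent discretization of the Fokker-Planck PDE $\pa_t\rho = \tfrac12\pa_{xx}\rho - \pa_x(b\rho)$ that the true density $\rho$ solves (differentiate \eqref{heat} in $x$). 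The argument then proceeds in four steps: (i) obtain a uniform Gaussian-type upper bound on $\rho^N_{i,j}$ on any compact subset of $[t_0,T]\times\hR$ with $t_0>0$, by comparing the law of $X^N$ with that of the simple symmetric random walk via a discrete Radon-Nikodym factor which is uniformly bounded since the drift correction is $O(\sqrt h)$; (ii) deduce local parabolic H\"older equicontinuity of a piecewise-linear interpolation $\tilde\rho^N$ of $\rho^N$ (a discrete Aronson-Bass type estimate for perturbed symmetric walks); (iii) pass to a subsequential limit $\rho_\infty$ of $\tilde\rho^N$ on compacts of $(0,T]\times\hR$, and use the discrete forward equation with Taylor expansion to show $\rho_\infty$ is a bounded weak solution of the Fokker-Planck PDE with initial datum $\delta_{x_0}$; (iv) invoke uniqueness of such solutions (smooth bounded drift) to identify $\rho_\infty=\rho$, so the full sequence converges pointwise.

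The main obstacle is Step (ii), the discrete parabolic regularity needed to pass to the limit in the forward equation. With constant diffusion coefficient and smooth bounded $b$, the Girsanov-type comparison in Step (i) with the simple symmetric random walk, whose local CLT is classical via Stirling's formula, is the natural tool and also transports the regularity estimates. An alternative, if one prefers to avoid discrete PDE estimates entirely, is a characteristic-function approach: the characteristic function of $X^N_{t_i}-x_0$ satisfies a multiplicative recursion whose logarithm converges uniformly on compacts in $\xi$ to that of $X_t-x_0$, and uniform tail decay in $\xi$ (again a consequence of the $O(\sqrt h)$ drift smallness) justifies lattice Fourier inversion to recover pointwise density convergence. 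In either approach the key technical input is that the perturbation of the symmetric binomial kernel is small enough in a quantitative sense to preserve Gaussian heat-kernel behavior.
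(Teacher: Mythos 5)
Your treatment of $G^N_{i,j}\to G(t,x)$ (weak convergence via the martingale problem plus Portmanteau, using the continuity of the limiting cdf from Proposition \ref{prop-tail}) is fine and matches what the paper dismisses as standard. The issue is the density convergence, where your plan is genuinely different from the paper's but, as written, has a gap exactly at its load-bearing step. Your step (ii) — discrete parabolic H\"older equicontinuity of $\tilde\rho^N$, i.e.\ a discrete Aronson/Nash-type estimate for the drift-perturbed walk — is precisely the local-CLT content of the proposition; naming it as ``the natural tool'' does not supply it, and nothing in steps (i), (iii), (iv) substitutes for it. Moreover, two of the auxiliary claims you lean on are incorrect as stated. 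First, the discrete Radon--Nikodym factor of $X^N$ with respect to the simple symmetric walk is $\prod_{i}(1+b^N_i\xi^N_{i+1})\approx e^{\sum_i b^N_i\xi^N_{i+1}-\frac12\sum_i |b^N_i|^2 h}$, an exponential martingale: it is uniformly \emph{exponentially integrable}, not uniformly bounded, so your Gaussian upper bound in step (i) does not follow from ``boundedness''; to control $\rho^N(t_i,x)=\hE\big[\prod_i(1+b^N_i\xi^N_{i+1})\,1_{\{B^N_{t_i}=x\}}\big]/(2\sqrt h)$ you must bound the \emph{conditional} expectation of this factor given the endpoint $\{B^N_{t_i}=x\}$, which is a bridge estimate and is where the real work sits. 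Second, your fallback characteristic-function route fails for the model at hand: since $b$ is state-dependent, the characteristic function of $X^N_{t_i}-x_0$ does not satisfy a scalar multiplicative recursion (the one-step kernel depends on the current position), so uniform-on-compacts convergence of a ``logarithm of a product'' is not available except when $b$ is constant.

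For comparison, the paper avoids discrete regularity theory altogether by conditioning on the terminal point: it writes $\rho^N(1,x)=\hE_x\big[e^{\sum_i[b^N_i\xi^N_{i+1}-\frac12|b^N_i|^2h]}\big]\,\hP(B^N_1=x)/(2\sqrt h)$, handles the second factor by Stirling (the classical local CLT for the symmetric walk), and shows the conditional expectation converges to the Brownian-bridge Girsanov functional $e^{I(x)}$ appearing in the exact representation \eqref{rep} of $\rho$ (Proposition \ref{prop-rep}). The technical inputs there are explicit binomial/Stirling asymptotics for the discrete bridge, a martingale decomposition of the conditioned increments with uniform exponential moment bounds (the discrete analogue of \eqref{Mintegrable}), and a coarse-graining comparison using the uniform continuity of $b$. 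If you want to salvage your scheme, you would either have to prove the discrete heat-kernel regularity estimate in step (ii) (e.g.\ along the lines of Konakov--Mammen's parametrix local limit theorems for Markov chains approximating diffusions), or replace steps (i)--(ii) by the conditional/bridge estimates above — at which point you have essentially reconstructed the paper's argument.
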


Again we postpone this proof to Section \ref{sect-density}.

\begin{theorem}
	\label{thm-conv}
	Let Assumptions \ref{assum-b} and \ref{assum-Gphi} hold, and $g\in \cI$.  For each $N$, consider the notations in \eqref{BTN} and \eqref{GQPhiN}, and define by backward induction as in \eqref{ui}:
	\begin{equation}
	\label{uNi}
	u^N_N(x) := g(x),~ u^N_i(x_{i,j}) := q^{N, +}_{i,j} u^N_{i+1}(x_{i+1, j+1}) + q^{N, -}_{i,j} u^N_{i+1}(x_{i+1, j}),~  i=N-1, \ldots, 0.
	\end{equation}
	Then, for any $(t,x) \in (0, T]\times \hR$ and any sequence $(t^N_i, x^N_{i,j}) \to (t,x)$, we have 
	\bea
	\label{uNconv}
	\lim_{N\to\infty} u^N_i(x_{i,j}) = u(t,x).
	\eea
\end{theorem}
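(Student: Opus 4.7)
My strategy is to interpret $u^N$ as a monotone finite-difference scheme for the linear parabolic PDE \eqref{PDE} and apply a Barles--Souganidis type argument, combined with the uniqueness of viscosity solutions established in Lemma \ref{lem-mu}(iv). The probabilistic content (convergence of the distorted transition probabilities) enters through a rigorous version of the heuristic expansion \eqref{qN+}.

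\textbf{Step 1: Consistency of the distorted transition probabilities.} I first make the expansion \eqref{qN+} rigorous. Using Proposition \ref{prop-Grho} we have $G^N_{i,j}\to G(t,x)$ and $\rho^N_{i,j}\to \rho(t,x)$ as $(t^N_i,x^N_{i,j})\to (t,x)\in(0,T]\times\hR$. Combining this with the smoothness of $\varphi$ and the PDE \eqref{heat} satisfied by $G$, and Taylor expanding $\varphi(t+h,G(t+h,x+\sqrt{h}))$ and $\varphi(t,G(t,x+2\sqrt{h}))$ as in Section \ref{sec:approx}, I obtain
\begin{equation*}
q^{N,+}_{i,j} \;=\; \tfrac{1}{2} + \tfrac{1}{2}\mu(t^N_i,x^N_{i,j})\sqrt{h} + R^N_{i,j}, \qquad R^N_{i,j} = o(\sqrt{h}),
\end{equation*}
with the remainder uniform on compact subsets of $(0,T]\times\hR$. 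Here the bounds from Assumption \ref{assum-Gphi} together with the tail estimates \eqref{Gbound} control the higher-order terms from the singular behavior of $\partial_p\varphi$ near $p=0,1$.

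\textbf{Step 2: Stability and monotonicity.} Boundedness $|u^N_i(x_{i,j})|\le \|g\|_\infty$ follows directly from \eqref{uNi} since $q^{N,\pm}_{i,j}\in (0,1)$. The scheme is monotone in the terminal data, and by the argument of Lemma \ref{lem-ui} each $u^N_i(\cdot)$ is increasing.

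\textbf{Step 3: Consistency with $\sL$.} For any smooth $\phi \in C^{1,2}_b((0,T]\times \hR)$ and any $(t_i^N,x^N_{i,j})\to (t,x)$ with $t<T$, a Taylor expansion of $\phi$ in the spatial variable at $x^N_{i,j}$ gives
\begin{equation*}
\tfrac{1}{h}\bigl[q^{N,+}_{i,j}\phi(t_{i+1},x_{i+1,j+1}) + q^{N,-}_{i,j}\phi(t_{i+1},x_{i+1,j}) - \phi(t_i,x_{i,j})\bigr] \;\longrightarrow\; \sL\phi(t,x),
\end{equation*}
using Step 1 to handle the drift term $(q^{N,+}-q^{N,-})\sqrt{h}\,\partial_x\phi \to \mu\partial_x\phi\cdot h$ and the symmetric $\sqrt h$ spacing to produce the diffusion term.

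\textbf{Step 4: Barles--Souganidis and identification of the limit.} I define the half-relaxed limits
\begin{equation*}
\bar u(t,x) := \limsup_{\substack{N\to\infty \\ (t^N_i,x^N_{i,j})\to (t,x)}} u^N_i(x^N_{i,j}), \qquad \underline u(t,x) := \liminf_{\substack{N\to\infty \\ (t^N_i,x^N_{i,j})\to (t,x)}} u^N_i(x^N_{i,j}),
\end{equation*}
on $(0,T]\times\hR$. Stability gives boundedness; monotonicity and consistency (Steps 2--3) yield via the standard Barles--Souganidis argument that $\bar u$ is a viscosity subsolution and $\underline u$ is a viscosity supersolution of \eqref{PDE} on $(0,T)\times \hR$. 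The terminal condition $\bar u(T,\cdot)\le g\le \underline u(T,\cdot)$ at $t=T$ is obtained directly from $u^N_N = g$ combined with a barrier constructed from the Lipschitz approximation and the regularity estimate \eqref{paxubound}. The comparison principle for \eqref{PDE} (Lemma \ref{lem-mu}(iv)) then forces $\bar u\le u \le \underline u$, and since trivially $\underline u\le \bar u$, we conclude $\bar u = \underline u = u$, which is exactly \eqref{uNconv}.

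\textbf{Main obstacle.} The principal difficulty lies in Step 1 and in the comparison argument of Step 4: the drift $\mu$ has only linear growth in $x$ (Lemma \ref{lem-mu}(i)), and it may be singular as $t\to 0^+$ (cf.\ Remark \ref{rem-singular}). I expect to restrict first to $(t,x)\in [t_0,T]\times \hR$ for arbitrary $t_0>0$, obtain the convergence and growth control there, and then let $t_0\downarrow 0$. The uniform smallness of the remainders $R^N_{i,j}$ in Step 1 also needs to be controlled at tree nodes $x^N_{i,j}$ where $G^N_{i,j}$ is close to $0$ or $1$; this is exactly where the tail bounds in \eqref{Gbound} and \eqref{phibound} must be used together to compensate the singularities of $\partial_p\varphi$ and $\partial_{pp}\varphi$.
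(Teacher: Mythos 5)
Your overall architecture is exactly the paper's: the paper's proof is precisely a hands-on Barles--Souganidis argument (half-relaxed limits $\overline u,\underline u$, monotonicity of the scheme \eqref{uNi}, a consistency expansion of $q^{N,+}_{i,j}-q^{N,-}_{i,j}$, and then the comparison principle for \eqref{PDE} from Lemma \ref{lem-mu}(iv) to force $\overline u=\underline u=u$). Steps 2--4 of your plan correspond to what is actually done, with the paper handling the terminal time simply by taking the relaxed limits one-sidedly in time ($t_i\downarrow t$), so no separate barrier is needed there.

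The genuine gap is in your Step 1, which is the only place where real work is required. You propose to "make \eqref{qN+} rigorous" by Taylor expanding $\varphi(t+h,G(t+h,x+\sqrt h))$ and $\varphi(t,G(t,x+2\sqrt h))$ with the \emph{continuous} survival function $G$ and invoking the PDE \eqref{heat}, then appealing to Proposition \ref{prop-Grho}. This does not close: since the numerator of $q^{N,+}_{i,j}-q^{N,-}_{i,j}$ is of order $h$ and the denominator of order $\sqrt h$, substituting $G(t_i,x_{i,j})$ for $G^N_{i,j}$ (and likewise at the shifted nodes) is only admissible if $|G^N_{i,j}-G(t_i,x_{i,j})|=o(h)$ locally uniformly, i.e.\ a \emph{rate} of convergence; Proposition \ref{prop-Grho} gives plain convergence with no rate, and no such rate is proved anywhere. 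The paper avoids this entirely: it never compares $G^N$ with $G$ at the level of increments, but instead uses the exact discrete identities $G^N_{i,j}-G^N_{i,j+1}=2\rho^N_{i,j}\sqrt h$ and $G^N_{i+1,j+1}=G^N_{i,j}-2\rho^N_{i,j}p^{-}_{i,j}\sqrt h$ (immediate from \eqref{rhoN}, \eqref{BTN} and the tree structure), Taylor expands $\varphi$ around the discrete value $G^N_{i,j}$, and only in the resulting coefficient of $\sqrt h$ replaces $(G^N_{i,j},\rho^N_{i,j})$ by $(G(t,x),\rho(t,x))$ via Proposition \ref{prop-Grho} — for which mere convergence suffices and \eqref{heat} is never used. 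Unless you restructure Step 1 along these lines (or separately prove an $o(h)$ rate for $G^N_{i,j}-G(t_i,x_{i,j})$, which is a substantial additional task), the consistency claim, and hence Step 3 and the drift term $\mu\partial_x\phi$ in the limit operator, is not justified. A minor additional remark: you do not need the remainder to be uniform on compacts — consistency along the particular converging sequence of test-function extremum points is all the relaxed-limit argument uses, and that is exactly what the sequential form of Proposition \ref{prop-Grho} delivers.
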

\begin{proof}  
Define
\beaa
\overline{u}(t,x) := \limsup_{N\to \infty, t_i\downarrow t, x_{i,j}\to x} u^N_i(x_{i,j}),\quad \underline{u}(t,x) := \liminf_{N\to \infty, t_i\downarrow t, x_{i,j}\to x} u^N_i(x_{i,j}).
\eeaa
We shall show that $\overline{u}$ is a viscosity subsolution and $\underline{u}$ a viscosity supersolution of PDE \eqref{PDE}. By the comparison principle of the PDE \eqref{PDE} we have $\overline{u} = \underline{u} = u$, which implies \eqref{uNconv} immediately.

We shall only prove  $\overline{u}$ is a viscosity subsolution. The viscosity supersolution property of $\underline{u}$ can be proved similarly. Fix $(\overline{t}, \overline{x}) \in (0, T] \times \hR$. Let $w$ be a smooth test function at $(\overline{t}, \overline{x})$ such that $[w- \overline{u}](\overline{t}, \overline{x}) = 0 \le [w-u](t,x)$ for all $(t,x) \in [\overline{t}, T] \times \hR$ satisfying $t- \ol t \le \delta^2, |x-\ol x|\le \delta$  for some $\delta>0$. Introduce 
\bea
\label{tildew}
\tilde w(t,x) := w(t,x) + \delta^{-5} [|t-\ol t|^2 + |x-\ol x|^4].
\eea
Then
\beaa
[\tilde w - \ol u](\ol t, \ol x) = 0 < {1\over  C\delta} \le \inf_{ {\delta^2 \over 2} \le |t-\ol t| + |x-\ol x|^2 \le \delta^2} [\tilde w - \ol u](t,x).
\eeaa
By the definition of $\ol u(\ol t, \ol x)$,  by otherwise choosing a subsequence of $N$, without loss of generality we assume there exist $(i_N, j_N)$ such that $t_{i_N} \downarrow \ol t$, $x_{i_N, j_N} \to \ol x$, and $\lim_{N\to\infty} u^N_{i_N}(x_{i_N, j_N}) = \ol u(\ol t, \ol x)$. Since $\ol u$ and $u^N$ are bounded, for $\delta$ small, we have
\beaa
c_N:=[\tilde w - u^N](t_{i_N}, x_{i_N, j_N})  < {1\over  2C\delta} \le  \inf_{ {\delta^2 \over 2}\le |t_i-\ol t| + |x_{i,j}-\ol x|^2 \le \delta^2} [\tilde w -  u^N](t_i,x_{i,j}).
\eeaa
Denote
\beaa
c^*_N := \inf_{t_{i_N} \le t_i\le \ol t + {\delta^2\over 2},   |x_{i,j}-\ol x|^2 \le \delta^2} [\tilde w -  u^N](t_i,x_{i,j}) = [\tilde w -  u^N](t_{i^*_N},x_{i^*_N,j^*_N})  \le c_N.
\eeaa
Then clearly $|t_{i^*_N} - \ol t| +| x_{i^*_N,j^*_N} - \ol x| < {\delta^2\over 2}$. Moreover, by a compactness argument, by otherwise choosing a subsequence, we may assume $(t_{i^*_N}, x_{i^*_N,j^*_N}) \to (t_*, x_*)$.  Then
\beaa
0 &=& \lim_{N\to \infty} c_N \ge \limsup_{N\to \infty} [\tilde w -  u^N](t_{i^*_N},x_{i^*_N,j^*_N}) = \tilde w(t_*, x_*) - \liminf_{N\to \infty} u^N(t_{i^*_N},x_{i^*_N,j^*_N})\\
&\ge& \tilde w(t_*, x_*)  - \ol u(t_*, x_*) \ge \delta^{-5} [|t_*-\ol t|^2 + |x_*-\ol x|^4].
\eeaa
That is, $(t_*, x_*) = (\ol t, \ol x)$, namely 
\bea
\label{xNconv}
\lim_{N\to \infty} (t_{i^*_N}, x_{i^*_N,j^*_N}) = (\ol t, \ol x).
\eea
Note that
\beaa
\tilde w(t_{i^*_N},x_{i^*_N,j^*_N}) &=& u^N(t_{i^*_N},x_{i^*_N,j^*_N}) + c^*_N\\
&=& q^{N, +}_{i^*_N,j^*_N} u^N(t_{i^*_N+1},x_{i^*_N+1,j^*_N+1}) + q^{N, -}_{i^*_N,j^*_N} u^N(t_{i^*_N+1},x_{i^*_N+1,j^*_N})+ c^*_N\\
&\le&  q^{N, +}_{i^*_N,j^*_N} \tilde w(t_{i^*_N+1},x_{i^*_N+1,j^*_N+1}) + q^{N, -}_{i^*_N,j^*_N} \tilde w(t_{i^*_N+1},x_{i^*_N+1,j^*_N}).
\eeaa
Then, denoting $(i, j) := (i^*_N, j^*_N)$ for notational simplicity, we have
\bea
\label{tildewest}
0&\le&q^{N, +}_{i,j}\big[\tilde w(t_{i+1},x_{i+1,j+1}) - \tilde w(t_{i},x_{i,j})\big]+ q^{N, -}_{i,j} \big[\tilde w(t_{i+1},x_{i+1,j})-\tilde w(t_{i},x_{i,j})\big]\nonumber\\
&=& q^{N, +}_{i,j}[\pa_t \tilde w(t_i, x_{i,j}) h + \pa_x \tilde w(t_i, x_{i,j}) \sqrt{h} + {1\over 2}  \pa_{xx} \tilde w(t_i, x_{i,j}) h] \\
&&  + q^{N, -}_{i,j}[\pa_t \tilde w(t_i, x_{i,j}) h - \pa_x \tilde w(t_i, x_{i,j}) \sqrt{h} + {1\over 2}  \pa_{xx} \tilde w(t_i, x_{i,j}) h]  + o(h)\nonumber\\
&=&  [\pa_t \tilde w(t_i, x_{i,j})  +  {1\over 2}  \pa_{xx} \tilde w(t_i, x_{i,j}) ] h + [q^{N, +}_{i,j}  - q^{N, -}_{i,j}]\pa_x \tilde w(t_i, x_{i,j}) \sqrt{h} +  o(h).\nonumber
\eea
Note that
\beaa
&& q^{N, +}_{i,j}  - q^{N, -}_{i,j} = 1+2 \frac{\varphi_{t_{i+1}}(G^N_{i+1, j+1} ) - \varphi_{t_i}(G^N_{i, j })}{\varphi_{t_i}(G^N_{i, j}) - \varphi_{t_i}(G^N_{i, j + 1})};\\
&&\varphi_{t_i}(G^N_{i, j}) - \varphi_{t_i}(G^N_{i, j + 1}) =  \varphi_{t_i}(G^N_{i, j}) - \varphi_{t_i}(G^N_{i, j } - 2 \rho^N_{i,j} \sqrt{h}) = \pa_p \f(t_i, G^N_{i,j})2 \rho^N_{i,j} \sqrt{h}  +  o(\sqrt{h});\\
&&\varphi_{t_{i+1}}(G^N_{i+1, j+1} ) - \varphi_{t_i}(G^N_{i, j }) = \varphi_{t_{i+1}}\big(G^N_{i, j} -  2 \rho^N_{i,j} \sqrt{h}  p_{i,j}^- \big) - \varphi_{t_i}(G^N_{i, j}) \\
&&\quad = \pa_t \f_{t_i}(G^N_{i,j}) h - \pa_p\f_{t_i}(G^N_{i,j}) 2 \rho^N_{i,j} \sqrt{h}  p_{i,j}^- + {1\over 2} \pa_{pp} \f_{t_i}(G^N_{i,j}) [2 \rho^N_{i,j} \sqrt{h}  p_{i,j}^-]^2  + o(h)\\
&&\quad = \pa_t \f_{t_i}(G^N_{i,j}) h - \pa_p\f_{t_i}(G^N_{i,j}) \rho^N_{i,j} \sqrt{h}  [1-b_{i,j}\sqrt{h}] + {1\over 2} \pa_{pp} \f_{t_i}(G^N_{i,j}) [\rho^N_{i,j}]^2h     + o(h).
\eeaa 
Then, denoting $G_{i,j} := G(t_i, x_{i,j})$, $\rho_{i,j} := \rho(t_i, x_{i,j})$ and by Proposition \ref{prop-Grho},
\beaa
q^{N, +}_{i,j}  - q^{N, -}_{i,j}
&=& { \pa_t \f_{t_i}(G^N_{i,j}) h  + \pa_p\f_{t_i}(G^N_{i,j})  \rho^N_{i,j}  b_{i,j} h   - {1\over 2}  \pa_{pp} \f_{t_i}(G^N_{i,j}) [ \rho^N_{i,j}  ]^2h + o(h) \over  \pa_p \f(t_i, G^N_{i,j})  \rho^N_{i,j} \sqrt{h}  + o(\sqrt{h})}\\
&=& \Big[b_{i,j} + { \pa_t \f_{t_i}(G^N_{i,j})      -{1\over 2}  \pa_{pp} \f_{t_i}(G^N_{i,j}) [ \rho^N_{i,j}  ]^2  \over  \pa_p \f(t_i, G^N_{i,j})  \rho^N_{i,j}   } + o(1)\Big]\sqrt{h}\\
&=& \Big[b_{i,j} + { \pa_t \f_{t_i}(G_{i,j})      -{1\over 2}  \pa_{pp} \f_{t_i}(G_{i,j}) [ \rho_{i,j}  ]^2  \over  \pa_p \f(t_i, G_{i,j})  \rho_{i,j}   } + o(1)\Big]\sqrt{h}\\
&=& \big[\mu(t_i, x_{i,j})  + o(1)\big]\sqrt{h}.
\eeaa
Thus, by \eqref{tildewest} and \eqref{xNconv},
\beaa
0&\le& \Big[\pa_t \tilde w(t_i, x_{i,j})  +  {1\over 2}  \pa_{xx} \tilde w(t_i, x_{i,j}) +\mu(t_i, x_{i,j}) \pa_x \tilde w(t_i, x_{i,j})\Big] h +  o(h)\\
&=&\Big[\pa_t \tilde w(\ol t, \ol x)  +  {1\over 2}  \pa_{xx} \tilde w(\ol t, \ol x) +\mu(\ol t, \ol x) \pa_x \tilde w(\ol t, \ol x)\Big] h +  o(h).
\eeaa
This implies $\sL \tilde w (\ol t, \ol x) \ge 0$. By \eqref{tildew}, it is clear that $\sL w (\ol t, \ol x) = \sL \tilde w (\ol t, \ol x)$. Then $\sL w (\ol t, \ol x) \ge 0$, thus $\ol u$ is a viscosity subsolution at $(\ol t, \ol x)$  
\end{proof}

\section{The general diffusion case.}  \label{sect:generaldiffusion}

In this section we consider a general diffusion process given by the SDE:
\bea
\label{Xdiffusion}
X_t = x_0 + \int_0^t b(s, X_s) ds + \int_0^t \sigma(s, X_s) dB_s,\quad \hP\mbox{-a.s.}
\eea
Provided that $\si$ is non-degenerate, this problem can be transformed back to \eqref{X2}: 
\begin{equation}
\label{Psi}
\hat X_t := \psi(t, X_t),  \quad \hat x_0 := \psi(0, x_0),\quad \mbox{where} \quad \psi(t,x) := \int_0^x {dy\over \si(t, y)}.
\end{equation}
Then, by a simple application of  It\^{o}'s formula, we have
\begin{equation}
\label{hatX}
\hat X_t = \hat x_0 + \int_0^t \hat b(s, \hat X_s) ds + B_t,\quad \mbox{where}\quad \hat b (t, x):= \big[\pa_t \psi + {b\over \si} - \frac{1}{2} \pa_x \sigma\big](t, \psi^{-1}(t,x)).
\end{equation}
Here $\psi^{-1}$ is the inverse mapping of $x\mapsto \psi(t, x)$. Denote
\begin{equation}
\label{hatGrho}
G(t,x) := \hP(X_t \ge x), \quad \rho := -\pa_x  G,\quad \hat G(t,x) := \hP(\hat X_t \ge x), \quad \hat \rho := -\pa_x \hat G.
\end{equation}

To formulate a rigorous statement we shall make the following assumption.

\begin{assumption}
	\label{assum-si}
	The functions $b,\si$ are sufficiently smooth and both $b, \si$ and the required derivatives are bounded. Moreover, $\si \ge c_0 >0$. 
\end{assumption}

The following result is immediate and we omit the proof.
\begin{lemma}
	\label{lem-hatGrho}

	Under Assumption \ref{assum-si}, we have
	
	\begin{enumerate}
		\item[(i)] the $\hat b$ defined in \eqref{hatX} satisfies Assumption \ref{assum-b};
	    \item[(ii)] $G(t,x) = \hat G(t, \psi(t, x)), \rho(t,x) = {\hat \rho(t,\psi(t,x))\over \si(t,x)}$ are sufficiently smooth and satisfy \eqref{Gbound}.
	\end{enumerate}
\end{lemma}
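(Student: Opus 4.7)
The plan is to unpack the definitions in \eqref{Psi}--\eqref{hatGrho} and transfer everything from the constant-diffusion case to the general case through the smooth diffeomorphism $x\mapsto\psi(t,x)$. The proofs of (i) and (ii) are essentially calculus; the only genuine input beyond Assumption \ref{assum-si} is Proposition \ref{prop-tail} applied to the transformed process $\hat X$.

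For (i), I would first observe that under Assumption \ref{assum-si}, $\sigma \ge c_0 > 0$ is smooth with bounded derivatives, so $\partial_x \psi = 1/\sigma$ is smooth, bounded above, and bounded below away from zero. The inverse function theorem then yields that $\psi^{-1}(t,\cdot)$ is smooth with $\partial_x \psi^{-1}(t,x) = \sigma(t,\psi^{-1}(t,x))$, and smoothness of $\hat b$ follows immediately from its defining formula in \eqref{hatX}. Boundedness of $\hat b$ and its derivatives reduces to bounding $b/\sigma$, $\partial_x \sigma$, and $\partial_t \psi(t,x) = -\int_0^x \partial_t \sigma(t,y)/\sigma^2(t,y)\,dy$; the last is the delicate term, and requires reading ``the required derivatives are bounded'' in Assumption \ref{assum-si} to include enough integrability/decay in $x$ of $\partial_t\sigma$ (e.g.\ $\partial_t\sigma/\sigma^2$ integrable in $x$ uniformly in $t$).

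For (ii), the identity $G(t,x) = \hat G(t,\psi(t,x))$ is immediate from $\hat X_t = \psi(t,X_t)$ and the strict monotonicity of $\psi(t,\cdot)$: the events $\{X_t \ge x\}$ and $\{\hat X_t \ge \psi(t,x)\}$ coincide. Differentiating in $x$ and using $\partial_x \psi = 1/\sigma$ gives $\rho(t,x) = \hat \rho(t,\psi(t,x))/\sigma(t,x)$. Smoothness of $G$ and $\rho$ then follows by composing with $\psi$, using (i) together with Proposition \ref{prop-tail} applied to $\hat X$, which supplies the smoothness and strict positivity of $\hat G, \hat \rho$.

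The remaining task is to verify the estimates \eqref{Gbound} for $\rho, G$. For the first, a direct computation yields
\[
\frac{\partial_x \rho(t,x)}{\rho(t,x)} = \frac{\partial_{\hat x}\hat \rho(t,\psi(t,x))}{\sigma(t,x)\,\hat \rho(t,\psi(t,x))} - \frac{\partial_x \sigma(t,x)}{\sigma(t,x)},
\]
and both terms are bounded using $\sigma \ge c_0$, the bound $|\partial_{\hat x}\hat\rho/\hat\rho| \le C_0$ coming from Proposition \ref{prop-tail}, and the boundedness of $\partial_x\sigma$. For the second,
\[
\frac{G(1-G)}{\rho}(t,x) = \sigma(t,x)\,\frac{\hat G(1-\hat G)(t,\psi(t,x))}{\hat \rho(t,\psi(t,x))},
\]
so the upper bound follows from $\sigma \le \|\sigma\|_\infty$ and the upper bound on $\hat G(1-\hat G)/\hat\rho$. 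The main subtlety lies in the lower bound: Proposition \ref{prop-tail} gives $\hat G(1-\hat G)/\hat\rho \ge 1/[C_0(1+|\psi(t,x)|)]$, and the linear bound $|\psi(t,x)| \le |x|/c_0$ (consequence of $\sigma \ge c_0$) preserves the $1/(1+|x|)$ decay after adjusting the constant. It is precisely this step where the uniform ellipticity in Assumption \ref{assum-si} is essential — without it, $\psi$ could grow superlinearly and the lower bound could deteriorate.
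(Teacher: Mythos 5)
Your proposal is correct and is essentially the verification the paper has in mind (the paper omits the proof, calling the result immediate): transfer everything through the strictly increasing change of variables $\psi(t,\cdot)$, apply Proposition \ref{prop-tail} to $\hat X$ via part (i), and check the two bounds in \eqref{Gbound} exactly as you do, using $c_0\le\sigma\le\|\sigma\|_\infty$ and $|\psi(t,x)|\le |x|/c_0$ for the lower bound. Your remark that $\partial_t\psi(t,x)=-\int_0^x \partial_t\sigma(t,y)/\sigma^2(t,y)\,dy$ is the only term whose boundedness does not follow from boundedness of $\partial_t\sigma$ alone is a fair (and slightly more careful) reading of the paper's deliberately loose phrase ``the required derivatives are bounded'' in Assumption \ref{assum-si}.
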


Here is the main result of this section.

\begin{theorem}
	\label{thm-si}
	Assume Assumptions \ref{assum-si} and \ref{assum-Gphi} hold. Let $\check \Phi$ be a time-consistent dynamic distortion function determined by \eqref{checkPhi} for $0<s<t\le T$, where $\check \si$ and $\check \mu$ satisfy the same technical requirements as in Theorem \ref{thm-limit2}. Then $\check\Phi$ is consistent with initial condition $\check\Phi(0,t,x_0; p) = \f_t(p)$ if and only if 
	\bea
	\label{checkmusi}
	\check \mu =b -  \si \pa_x \sigma +\check \si \pa_x \check\si + {1\over 2}[\check\si^2 -\si^2]{\pa_x \rho \over \rho }+{\pa_t \f (t, G(t,x))\over \pa_p\f (t, G(t,x))\rho} - {\check\si^2\rho \pa_{pp}\f (t, G(t,x))  \over 2\pa_p\f(t, G(t,x))}.
	\eea
	
	In particular, if we require $\check \si = \si$, then $\check\Phi$ is unique with 
	\begin{equation}
	\label{checkmu}
	\check \mu(t,x) = \mu(t,x) := b(t,x) +   {\pa_t \f (t, G(t,x))  - {1\over 2} \pa_{pp}\f(t, G(t,x)) \rho^2\si^2(t,x) \over \pa_p \f(t, G(t,x)) \rho(t,x)}.
    \end{equation}
\end{theorem}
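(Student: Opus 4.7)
The plan is to reduce the general diffusion case to the constant diffusion case treated in Theorem \ref{thm-limit2} via the time-dependent space transformation $\psi$ in \eqref{Psi}. By Lemma \ref{lem-hatGrho}(i), the transformed process $\hat X_t = \psi(t,X_t)$ satisfies \eqref{hatX} with constant diffusion coefficient $1$ and drift $\hat b$ obeying Assumption \ref{assum-b}. Since $\psi(t,\cdot)$ is strictly increasing (as $\sigma>0$), the map $g\mapsto \hat g_t := g\circ \psi^{-1}(t,\cdot)$ sends $\cI$ into $\cI$, and any conditional survival function on $X$ pushes forward to one on $\hat X$ via $\hP(X_t\ge y|X_s=x) = \hP(\hat X_t \ge \psi(t,y)|\hat X_s = \psi(s,x))$.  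Thus a time-consistent dynamic distortion function $\check\Phi$ on $X$ corresponds bijectively to one on $\hat X$, say $\hat{\check\Phi}$, with the same initial condition $\hat{\check\Phi}(0,t,\hat x_0;p)=\varphi_t(p)$.

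I would then translate the diffusion pair $(\check\mu,\check\sigma)$ associated with $X$ via \eqref{checkPhi} into the corresponding pair $(\hat{\check\mu}, \hat{\check\sigma})$ for $\hat X$. Applying It\^o's formula to $\psi(t,\check X^{s,x}_t)$ yields
\begin{equation*}
\hat{\check\sigma}(t,\hat y) = \frac{\check\sigma(t,\psi^{-1}(t,\hat y))}{\sigma(t,\psi^{-1}(t,\hat y))},\qquad \hat{\check\mu}(t,\hat y) = \Big[\pa_t\psi + \frac{\check\mu}{\sigma} - \tfrac{1}{2}\check\sigma^2\frac{\pa_x\sigma}{\sigma^2}\Big](t,\psi^{-1}(t,\hat y)).
\end{equation*}
Now I would invoke Theorem \ref{thm-limit2} applied to $\hat X$: consistency of $\hat{\check\Phi}$ with the initial condition $\varphi_t$ is equivalent to
\begin{equation*}
\hat{\check\mu} = \hat b + \hat{\check\sigma}\pa_{\hat x}\hat{\check\sigma} + \tfrac{1}{2}[\hat{\check\sigma}^2-1]\frac{\pa_{\hat x}\hat\rho}{\hat\rho} + \frac{\pa_t\varphi}{\pa_p\varphi\,\hat\rho} - \frac{\hat{\check\sigma}^2 \pa_{pp}\varphi\,\hat\rho}{2\pa_p\varphi},
\end{equation*}
where $\varphi$ is evaluated at $(t,\hat G(t,\hat y)) = (t,G(t,y))$ with $\hat y=\psi(t,y)$.

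The final step is to substitute the expressions for $\hat b, \hat{\check\sigma}, \hat{\check\mu}$ into this identity and use Lemma \ref{lem-hatGrho}(ii), namely $\hat\rho(t,\hat y) = \sigma(t,y)\rho(t,y)$ and the chain rule $\pa_{\hat x}\hat\rho = \sigma\,\pa_x(\sigma\rho)$ at $\hat y=\psi(t,y)$, to rewrite every occurrence of $\hat\rho$, $\pa_{\hat x}\hat\rho$, $\pa_{\hat x}\hat{\check\sigma}$ in the original variables. After cancellation of the $\pa_t\psi$ and $\pa_x\sigma$ terms that arise from $\hat b$ and the It\^o change-of-drift, the identity rearranges into \eqref{checkmusi}. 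Specializing to $\check\sigma=\sigma$ makes the $\pa_x\rho/\rho$ and second-derivative terms collapse to give \eqref{checkmu}, and uniqueness follows because \eqref{checkmusi} then determines $\check\mu$ pointwise.

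The main obstacle will be the algebraic bookkeeping in the substitution step: one has to carefully track how $\pa_{\hat x}$ acting on functions of $\hat y=\psi(t,y)$ translates to $\sigma^{-1}\pa_y$, how the $\pa_t\psi$ term in $\hat b$ combines with the $\pa_t\psi$ inside $\hat{\check\mu}$, and how the factor $\sigma$ in $\hat\rho=\sigma\rho$ interacts with the $\hat{\check\sigma}^2-1$ coefficient to produce the $\check\sigma^2-\sigma^2$ factor on the right-hand side of \eqref{checkmusi}. Once the identifications are set up correctly, the rest is mechanical.
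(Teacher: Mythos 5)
Your proposal is correct and follows essentially the same route as the paper: reduce to the constant-diffusion case through the transformation $\psi$ of \eqref{Psi}, invoke Theorem \ref{thm-limit2} for the transformed system, and then translate back using $\pa_x\psi = 1/\si$, $\pa_{xx}\psi = -\pa_x\si/\si^2$ and $\hat\rho(t,\psi(t,x)) = \si\rho$, which indeed yields \eqref{checkmusi} after the cancellations you describe. The only cosmetic difference is that you obtain the transformed coefficients $(\hat{\check\mu},\hat{\check\si})$ by It\^o's formula applied to $\psi(t,\check X^{s,x}_t)$, whereas the paper gets the same quantities by changing variables in $\sE_{0,t}[\check u(t,X_t)]$ and in the PDE \eqref{hatPDE} for $\check u$; the two computations are equivalent.
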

\begin{proof}  
Let $g\in \cI$ and $\check u$ be the solution to PDE \eqref{hatPDE} on $(0, T]\times \hR$ with terminal condition $g$. Then $\check\Phi$ is consistent with initial condition $\check\Phi(0,t,x_0; p) = \f_t(p)$ means the mapping $t\in (0, T]\mapsto \sE_{0, t}[\check u(t, X_t)]$ is a constant. Note that
\beaa
\sE_{0, t}[\check u(t, X_t)] = \int_\hR \f_t( \hP(X_t \ge x)) \pa_x \check u(t,x ) dx =  \int_\hR \f_t( \hP(\hat X_t \ge \psi(t,x))) \pa_x \check u(t,x ) dx. 
\eeaa
Denote $\hat x:= \psi(t, x)$. Then
\begin{equation}
\label{checkuhat}
\sE_{0, t}[\check u(t, X_t)] =  \int_\hR \f_t( \hP(\hat X_t \ge \hat x)) \pa_{\hat x} \hat u(t, \hat x) d\hat x,\quad\mbox{where}\quad \hat u(t, \hat x) :=  \check u(t, \psi^{-1}(t,\hat x) ).
\end{equation}

Note that $\check u(t,x) = \hat u(t, \psi(t, x))$. Then
\beaa
\pa_t \check u = \pa_t \hat u + \pa_{\hat x}\hat u \pa_x \psi,\quad \pa_x \check u = \pa_{\hat x} \hat u \pa_x \psi,\quad \pa_{xx} \check u = \pa_{\hat x\hat x} \hat u (\pa_x \psi)^2 + \pa_{\hat x}\hat u \pa_{xx}\psi,
\eeaa
and thus PDE \eqref{hatPDE} implies
\beaa
0&=& \big[\pa_t \hat u + \pa_{\hat x}\hat u \pa_t \psi\big] + {1\over 2} \check \si^2  \big[\pa_{\hat x\hat x} \hat u (\pa_x \psi)^2 + \pa_{\hat x}\hat u \pa_{xx}\psi\big] + \check \mu  \pa_{\hat x} \hat u \pa_x \psi\\
&=& \pa_t \hat u +  {1\over 2} (\check \si\pa_x \psi)^2 \pa_{\hat x\hat x} \hat u  + \big[\pa_t \psi + {1\over 2} \check \si^2 \pa_{xx}\psi +  \check \mu\pa_x \psi\big] \pa_{\hat x} \hat u.
\eeaa
Recall \eqref{hatX} and \eqref{checkuhat} and note that $G(t,x) = \hat G(t, \psi(t, x))$. Applying Theorem \ref{thm-limit2} we see that the required time-consistency is equivalent to
\begin{eqnarray} \label{psi1}
&\pa_t \psi + {1\over 2} \check \si^2 \pa_{xx}\psi +  \check \mu\pa_x \psi \\
&= \hat b + (\check \si \pa_x \psi)\pa_{\hat x} (\check \si \pa_x \psi) + {1\over 2}[(\check\si\pa_x\psi)^2 -1]{\pa_{\hat x}\hat \rho \over \hat \rho}+ {\pa_t \f (t, G(t,x))\over \pa_p\f (t, G(t,x)) \hat \rho} - {(\check\si\pa_x\psi)^2 \pa_{pp}\f (t, G(t,x)) \hat \rho \over 2\pa_p\f(t, G(t,x))}.\nonumber
\end{eqnarray}
Note that
\beaa
&\pa_x \psi \pa_{\hat x} (\check \si \pa_x \psi) = \pa_{\hat x} (\check \si \pa_x \psi),\quad \pa_x \psi = {1\over \si},\quad \pa_{xx} \psi = -{\pa_x \si\over \si^2},\\
& \hat \rho(t, \psi(t,x)) = \rho\si(t,x),\quad \pa_{\hat x}\hat\rho = [\pa_x \rho \si + \rho \pa_x \si] \si.
\eeaa
Then \eqref{psi1} is equivalent to
\beaa
&&\pa_t \psi - {\check \si^2 \pa_x \si\over 2\si^2}+  {\check \mu\over \si} =\big[\pa_t \psi + {b\over \si} - \frac{1}{2} \pa_x \sigma\big] +\big[{\check \si \pa_x \check\si\over \si} -  {\check \si^2 \pa_x \si\over \si^2}\big]\\
&&\quad+ {1\over 2}[({\check\si\over \si})^2 -1][{\pa_x \rho\si \over \rho} + \pa_x\si] + {\pa_t \f (t, \psi(t,x))\over \pa_p\f (t, G(t,x))\rho\si} - {\check\si^2\rho \pa_{pp}\f (t, G(t,x))  \over 2\pa_p\f(t, G(t,x))\si}.
\eeaa
This implies \eqref{checkmusi} immediately.  
\end{proof}

\medskip

\begin{remark} \label{rem-general}
In this remark we investigate possible discretization for  the general SDE \eqref{Xdiffusion},  in the spirit of Section \ref{sec:approx}. Note that
\beaa
X_{t_{i+1}} \approx X_{t_i} + b(t_i, X_{t_i}) h +  \si(t_i, X_{t_i}) [B_{t_{i+1}} - B_{t_i}].
\eeaa
For a desired approximation $X^N$, we would expect
\begin{equation}
\label{trinomial}
\hE\big[ X^N_{t_{i+1}} - X^N_{t_i} \big| X^N_{t_i} = x \big] = b(t_i, x) h + o(h),~ \hE\big[ (X^N_{t_{i+1}} - X^N_{t_i})^2 \big| X^N_{t_i} = x \big] = \si^2(t_i, x) h + o(h).
\end{equation}
However, for the binomial tree in Figure \ref{fig:general.tree1}, at each node $x_{i,j}$ there is only one parameter $p_{i,j}^+$ and in general we are not able to match both the drift and the volatility. To overcome this, we have three natural choices:

\begin{enumerate}
	\item[(i)] The first one is to use trinomial tree approximation:  assuming $0<\si\le C_0$, we have
	\beaa
	&x_{i,j} = C_0 j \sqrt{h},  ~ j=-i,\ldots, i,\quad \hP\big(X^N_{t_{i+1}} = x_{i+1, j+1} \big| X^N_{t_i} = x_{i,j}\big) = p_{i,j}^+,&\\
	& \hP\big(X^N_{t_{i+1}} = x_{i+1, j-1} \big| X^N_{t_i} = x_{i,j}\big) = p_{i,j}^-,\quad \hP\big(X^N_{t_{i+1}} = x_{i+1, j} \big| X^N_{t_i} = x_{i,j}\big) = p_{i,j}^0 := 1-p_{i,j}^+ - p_{i,j}^-.&
	\eeaa
	See the left figure in Figure \ref{fig:trinomial} for the case $N=2$. Then, by choosing appropriate $p_{i,j}^+, p_{i,j}^-$, one may achieve \eqref{trinomial}. However, note that the trinomial tree has crossing edges, and they may destroy the crucial monotonicity property we used in the previous section, as we saw in Remark \ref{rem-nonmonotone} (iii) and Example \ref{eg-crossing}.
	\begin{figure}[h]
		\centering
		\begin{tikzpicture}[>=stealth]
		\matrix (tree) [
		matrix of nodes,
		minimum size=0.5cm,
		column sep=1.7cm,
		row sep=0.5cm,
		]
		{
			&                 &$x_{2,2}$ \\
			&$x_{1,1}$&$x_{2,1}$ \\
			$x_{0,0}$&$x_{1,0}$&$x_{2,0}$ \\
			&$x_{1,-1}$&$x_{2,-1}$ \\
			&                 &$x_{2,-2}$ \\
		};
		\draw[->] (tree-3-1) -- (tree-2-2) node [midway,above] {};
		\draw[->] (tree-3-1) -- (tree-3-2) node [midway,above] {};
		\draw[->] (tree-3-1) -- (tree-4-2) node [midway,above] {};
		\draw[->] (tree-2-2) -- (tree-1-3) node [midway,above] {};
		\draw[->] (tree-2-2) -- (tree-2-3) node [midway,below] {};
		\draw[->] (tree-2-2) -- (tree-3-3) node [midway,below] {};
		\draw[->] (tree-3-2) -- (tree-2-3) node [midway,above] {};
		\draw[->] (tree-3-2) -- (tree-3-3) node [midway,above] {};
		\draw[->] (tree-3-2) -- (tree-4-3) node [midway,above] {};
		\draw[->] (tree-4-2) -- (tree-3-3) node [midway,below] {};
		\draw[->] (tree-4-2) -- (tree-4-3) node [midway,below] {};
		\draw[->] (tree-4-2) -- (tree-5-3) node [midway,below] {};
		\end{tikzpicture} \quad
		\begin{tikzpicture}[>=stealth]
		\draw  (0, 0) -- (2, 1);
		\draw  (0, 0) -- (2, -1);
		\draw  (2, 1) -- (4, 2.7);
		\draw  (2, 1) -- (4, -0.7); 
		\draw  (2, -1) -- (4, 0);
		\draw  (2, -1) -- (4, -2);
		\node [right] at (4, 2.7) {$x_2 + \sigma_2 \sqrt{h}$};
		\node [right] at (4, -0.7) {$x_2 - \sigma_2 \sqrt{h}$};
		\node [right] at (4, -2) {$x_1 - \sigma_1 \sqrt{h}$};
		\node [right] at (4, 0) {$x_1 + \sigma_1 \sqrt{h}$};
		\node [below] at (2, 1) {$x_2$};
		\node [below] at (2, -1) {$x_1$};
		\node [below] at (0, 0) {$x_0$};
		
		\end{tikzpicture}
		\caption{Left: trinomial tree; Right: binary tree} 
		\label{fig:trinomial}
	\end{figure}
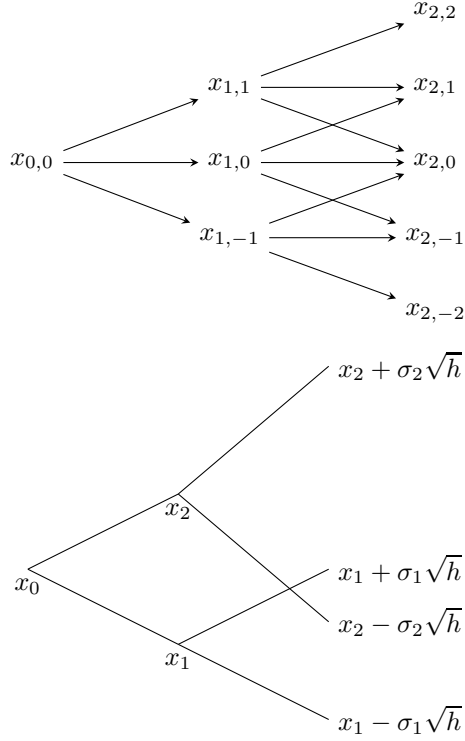
	\item[(ii)] The second choice  is to use the binary tree approximation, see the right figure in Figure \ref{fig:trinomial} for the case $N=2$, where
	$x_1 = x_0 - \si(t_0, x_0) \sqrt{h}$, $x_2 = x_0+\si(t_0, x_0) \sqrt{h}$, $\si_1 = \si(t_1, x_1)$, $\si_2 =\si(t_1, x_2)$.
	But again there are crossing edges and thus the monotonicity property is violated. 
	\item[(iii)] The third choice, which indeed works well, is to utilize the transformation \eqref{Psi}. Let $\hat X^N$ be the discretization for $\hat X$ in \eqref{hatX}, as introduced in Section \ref{sec:approx}. Then $X^N := \psi_{t_i}^{-1}(\hat X^N_{t_i})$ will serve for our purpose. We skip the details here.
\end{enumerate}
\end{remark}

\section{Analysis of the density.} 
\label{sect-density}

In this section we prove Propositions \ref{prop-tail} and  \ref{prop-Grho}. The estimates rely on the following representation formula for $\rho$ by using the Brownian bridge. The result is a direct consequence of Karatzas-Shreve \cite[Section 5.6. Exercise 6.17]{KS}, and holds true in multidimensional case as well.

\begin{proposition}
	\label{prop-rep}
	Assume $b$ is  bounded. Then we have the following representation formula:
	\bea
	\label{rep}
	\left.\ba{c}
	\rho(t,x) = {1\over \sqrt{2\pi t}} \exp\Big(-{(x-x_0)^2\over 2t}  + I(t,x)\Big),\quad t>0,\quad \mbox{where}\\
	 \bar M^t_s := \int_0^s {dB_r \over t-r},\quad \bar X^{t,x}_s := x_0 + [x-x_0]{s\over t} +[t-s] \bar M^t_s, \quad 0\le s< t;\\
	e^{I(t,x)} := \hE\Big[e^{ \int_0^t b(s, \bar X^{t,x}_s) dB_s  + \int_0^t [ (x-x_0) b(s, \bar X^{t,x}_s)  - b(s, \bar X^{t,x}_s) \bar M^t_s- {1\over 2}  |b(s, \bar X^{t,x}_s)|^2]ds }\Big].
	\ea\right.
	\eea
\end{proposition}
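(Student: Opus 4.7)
The plan is to derive \eqref{rep} by combining Girsanov's theorem with the explicit Brownian-bridge representation of conditioned Wiener measure. Since $b$ is bounded (Assumption \ref{assum-b}), Novikov's condition holds for the exponential
\[
Z_t := \exp\Bigl(\int_0^t b(s, x_0+B_s)\,dB_s - \tfrac{1}{2}\int_0^t b^2(s, x_0+B_s)\,ds\Bigr),
\]
and the Cameron--Martin--Girsanov theorem applied in reverse yields $\hE^{\hP}[f(X_t)] = \hE[f(x_0+B_t)\,Z_t]$ for every bounded Borel $f$. Factoring out the Gaussian density of $B_t$ produces the basic identity
\[
\rho(t,x) = \frac{1}{\sqrt{2\pi t}}\,e^{-(x-x_0)^2/(2t)}\,\hE\bigl[Z_t\,\big|\,B_t = x-x_0\bigr],
\]
so the task reduces to identifying the conditional expectation on the right with $e^{I(t,x)}$.

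I would then invoke the Brownian-bridge representation: the conditional law of $(x_0+B_s)_{0 \le s \le t}$ given $B_t = x-x_0$ coincides with the unconditional law (under $\hP$) of the bridge $\bar X^{t,x}$ defined in \eqref{rep}. For pathwise (Lebesgue) integrals, the replacement $x_0+B_s \leadsto \bar X^{t,x}_s$ is automatic. For the It\^o piece $\int_0^t b(s, x_0+B_s)\,dB_s$, differentiating $\bar X^{t,x}_s = x_0 + (x-x_0)s/t + (t-s)\bar M^t_s$ gives the semimartingale decomposition
\[
d\bar X^{t,x}_s = \bigl[(x-x_0)/t - \bar M^t_s\bigr]\,ds + dB_s,
\]
i.e.\ $dB_s = d\bar X^{t,x}_s - \bigl[(x-x_0)/t - \bar M^t_s\bigr]\,ds$. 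Substituting this identity into the exponent of $Z_t$ and collecting the resulting drift corrections against $-\tfrac{1}{2}\int_0^t b^2\,ds$ produces exactly the exponent appearing in the definition of $e^{I(t,x)}$ in \eqref{rep}.

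The main technical obstacle is the rigorous conditioning of a stochastic integral on the null event $\{B_t = x-x_0\}$. I plan to handle this in one of two equivalent ways. Approach (i): apply It\^o's formula to an antiderivative $\Phi(s,y) := \int_0^y b(s,z)\,dz$ (whose smoothness is guaranteed by Assumption \ref{assum-b}) to convert $\int_0^t b(s, x_0+B_s)\,dB_s$ into the pathwise functional $\Phi(t, x_0+B_t) - \Phi(0,x_0) - \int_0^t(\partial_s \Phi + \tfrac{1}{2}\partial_y b)(s,x_0+B_s)\,ds$, so that the conditional expectation collapses to a routine Brownian-bridge expectation. Approach (ii): perform the auxiliary Cameron--Martin translation $B \mapsto B - (x-x_0)\cdot/t$, absorbing the endpoint conditioning into deterministic drift corrections which are precisely the $(x-x_0)/t$ and $-\bar M^t_s$ terms in \eqref{rep}, then invoke the independence decomposition $B = \bar X^{t,x} - x_0 + (\text{shift})$ via $(t-s)\bar M^t_s$. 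Either route produces the stated formula, and the multi-dimensional case is a verbatim componentwise extension; this is essentially the content of Exercise 5.6.17 in Karatzas--Shreve, cited in the statement.
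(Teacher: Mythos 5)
Your proposal is correct and takes essentially the same route as the paper: Girsanov's theorem, disintegration over the terminal value of the Brownian motion, the Brownian-bridge identification of the conditional law, and the decomposition $d\bar X^{t,x}_s = \big[(x-x_0)/t - \bar M^t_s\big]\,ds + dB_s$ to rewrite the exponent (your Approach (i), the antiderivative/It\^o trick, is exactly the device the paper uses to make the conditioning of the stochastic integral rigorous in its later density estimates). The only ingredient the paper adds, and you leave implicit, is the explicit exponential-integrability bound $\hE\big[e^{C\int_0^t |\bar M^t_s|^{\alpha}\,ds}\big]<\infty$, $\alpha\in(0,2)$, which guarantees that $e^{I(t,x)}$ is finite and is reused in the proof of Proposition \ref{prop-Grho}.
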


\begin{proof} Since we will use the arguments, in particular that for \eqref{Mintegrable} below, in the proof of Proposition \ref{prop-Grho}, we provide a detailed proof here. For notational simplicity, let's assume $t=1$ and $x_0=0$. Then \eqref{rep} becomes:
\bea
\label{rep1}
\left.\ba{c}
\rho(1,x) = {1\over \sqrt{2\pi }} \exp\Big(-{x^2\over 2}  + I(x)\Big),\quad \mbox{where}\\
\bar M_s := \int_0^s {dB_r \over 1-r},\quad \bar X^{x}_s :=  x s +[1-s] \bar M_s, 0\le s< 1;\\
e^{I(x)} := \hE\Big[e^{ \int_0^1 b(s, \bar X^{x}_s) dB_s  + \int_0^1 [ x b(s, \bar X^{x}_s)  - b(s, \bar X^{x}_s) \bar M_s- {1\over 2}  |b(s, \bar X^{x}_s)|^2]ds }\Big].
\ea\right.
\eea

We first show that the right side of the last line in \eqref{rep1} is integrable. Since $b$ is bounded, it suffices to prove the following (stronger) claim: for any $C>0$ and $\a\in (0, 2)$,
\bea
\label{Mintegrable}
\hE\Big[e^{C\int_0^1|\bar M_s|^\a ds}\Big] <\infty.
\eea
Indeed, by time change $s={t\over 1+t}$, we have $\int_0^1|\bar M_s|^\a ds = \int_0^\infty {|\bar M_{t/(1+t)}|^\a\over (1+t)^2 } dt$. Since
\beaa
\hE\Big[|\bar M_{t\over 1+t}|^2\Big] = \int_0^{t\over 1+t} {dr\over (1-r)^2} = t.
\eeaa
by Levy's characterization we see that $t\mapsto \bar M_{t/(1+t)}$ is a Brownian motion.  Then
\beaa
\hE\Big[e^{C\int_0^1|\bar M_s|^\a ds}\Big]  = \hE\Big[e^{C\int_0^\infty {|B_t|^\a \over (1+t)^2 } dt}\Big] = \sum_{n=0}^\infty{C^n\over n!} \hE\Big[\Big(\int_0^\infty {|B_t|^\a\over (1+t)^2 } dt\Big)^n\Big].
\eeaa
Note that
\beaa
\int_0^\infty {|B_t|^\a\over (1+t)^2 } dt \le \sup_{t\ge 0} {|B_t|^\a\over (1+t)^{2+\a \over 4}} \int_0^\infty {dt\over (1+t)^{1+{2-\a\over 4}} }  ={4\over 2-\a} \sup_{t\ge 0} {|B_t|^\a\over (1+t)^{2+\a \over 4}}.
\eeaa
Then, for a generic constant $C$,
\beaa
&&\hE\Big[e^{C\int_0^1|\bar M_s|^\a ds}\Big]  \le \sum_{n=0}^\infty{C^n\over n!} \hE\Big[\sup_{t\ge 0} {|B_t|^{n\a}\over (1+t)^{n(2+\a)\over 4}}\Big]\\
&&\le\sum_{n=0}^\infty{C^n\over n!} \hE\Big[\sup_{0\le t\le 1} |B_t|^{n\a} + \sum_{m=0}^\infty \sup_{2^m \le t \le 2^{m+1}} {|B_t|^{n\a}\over (1+t)^{n(2+\a)\over 4}}\Big]\\
&&\le\sum_{n=0}^\infty{C^n\over n!} \hE\Big[\sup_{0\le t\le 1} |B_t|^{n\a} + \sum_{m=0}^\infty  2^{-{mn(2+\a)\over 4}}\sup_{0 \le t \le 2^{m+1}} |B_t|^{n\a}\Big]\\
&&=\sum_{n=0}^\infty{C^n\over n!} \hE\Big[\sup_{0\le t\le 1} |B_t|^{n\a} + \sum_{m=0}^\infty  2^{-{mn(2+\a)\over 4} + {(m+1)n\a\over 2}}\sup_{0 \le t \le 1} |B_t|^{n\a}\Big]\\
&&\le \sum_{n=0}^\infty{C^n\over n!}\hE\Big[\sup_{0\le t\le 1} |B_t|^{n\a}\Big] \sum_{m=0}^\infty 2^{-{mn(2-\a)\over 4}} \le \sum_{n=0}^\infty{C^n\over n!}\hE\Big[\sup_{0\le t\le 1} |B_t|^{n\a}\Big] \\
&&= \hE\Big[e^{C\sup_{0\le t\le 1} |B_t|^\a}\Big].
\eeaa
This implies \eqref{Mintegrable} immediately.

We now prove \eqref{rep1}. By Karatzas-Shreve \cite[Section 5.6. B]{KS}, conditional on $\{B_1=x\}$, $B$ is a Brownian bridge and its conditional law is equal to the law of $\bar X^{x}$.  Then, by Girsanov theorem,
\beaa
G(1, x) &=& \hP(X_1 \ge x) = \hE\Big[e^{\int_0^1 b(s, B_s) dB_s - {1\over 2} \int_0^1 |b(s, B_s)|^2ds} 1_{\{B_1 \ge x\}}\Big] \\
&=&\int_x^\infty {1\over \sqrt{2\pi}} e^{-{y^2\over 2}}  \hE\Big[e^{\int_0^1 b(s, B_s) dB_s - {1\over 2} \int_0^1 |b(s, B_s)|^2ds} \Big| B_1 = y\Big] dy\\
&=&\int_x^\infty {1\over \sqrt{2\pi}} e^{-{y^2\over 2}}  \hE\Big[e^{\int_0^1 b(s, \bar X^y_s) d\bar X^y_s - {1\over 2} \int_0^1 |b(s, X^y_s)|^2ds}\Big] dy.
\eeaa
This, together with the fact $d \bar X^{x}_s = x ds - \bar M_s ds + dB_s$, implies \eqref{rep1} immediately. 
\end{proof}

\medskip

\begin{proof}[Proof of Proposition \ref{prop-tail}.]
Again we shall only prove the case that $t=1, x_0=0$. 

We first show that, for the $I$ in \eqref{rep1},
\bea
\label{I'}
|I'(x)|\le C.
\eea
This, together with \eqref{rep}, implies immediately the first estimate in \eqref{Gbound}. 

Indeed, denote $\bar b(t,x) := \int_0^x b(t, y) dy$. Applying It\^o formula we have
\beaa
\bar b(1, x) = \bar b(1, \bar X^x_1) - \bar b(0, \bar X^x_0)=\int_0^1 \Big[\pa_t \bar b(t, \bar X^x_t) + {1\over 2} \pa_x b(t, \bar X_t^x)\Big] dt + \int_0^1 b(t, \bar X_t^x) d\bar X_t^x.
\eeaa
Then
\bea
\label{I2}
e^{I(x)} = \hE\Big[e^{\bar b(1, x) - \int_0^1 [\pa_t \bar b(t, \bar X^x_t) + {1\over 2} \pa_x b(t, \bar X_t^x)] dt}\Big].
\eea
Differentiating with respect to $x$ and noting that $\pa_x \bar X^x_t = t$, we have
\beaa
e^{I(x)} I'(x) =\hE\Big[e^{\bar b(1, x) - \int_0^1 [\pa_t \bar b(t, \bar X^x_t) + {1\over 2} \pa_x b(t, \bar X_t)] dt} \big[ b(1,x) -\int_0^1 t [\pa_t  b(t, \bar X^x_t) + {1\over 2} \pa_{xx} b(t, \bar X_t)] dt\big] \Big].
\eeaa
This implies
\beaa
e^{I(x)} |I'(x)| \le C\hE\Big[e^{\bar b(1, x) - \int_0^1 [\pa_t \bar b(t, \bar X^x_t) + {1\over 2} \pa_x b(t, \bar X_t)] dt} \Big] = Ce^{I(x)},\quad\mbox{and thus}\quad |I'(x)|\le C.
\eeaa

We next verify the second part of \eqref{Gbound} for $x>0$. The case $x<0$ can be proved similarly. Clearly it suffices to verify it for $x$ large. Note that
\beaa
{G(1,x)\over \rho(1, x)} = \int_0^\infty {\rho(1, x+y)\over \rho(1,x)} dy = \int_0^\infty e^{I(x+y)-{1\over 2}(x+y)^2 + {x^2\over 2} - I(x)}dy = \int_0^\infty e^{I(x+y)  - I(x)- xy -{1\over 2}y^2 }dy .
\eeaa
Then, for $x>C+1$, where $C$ is the bound of $I'$, 
\beaa
{G(1,x)\over \rho(1, x)} &\le& \int_0^\infty e^{Cy - xy }dy = {1\over x-C}  \le 1;\\
{G(1,x)\over \rho(1, x)}  &\ge&   \int_0^1 e^{ -Cy- xy -{1\over 2}y^2}dy \ge e^{-{1\over 2}}{1-e^{-x-C}\over x+C} \ge {c\over x},
\eeaa
completing the proof.  
\end{proof}

\medskip

\begin{proof}[Proof of Proposition \ref{prop-Grho}.]
The convergence of $G^N$ is standard, and is also implied by the convergence of $\rho^N$, so we shall only prove the latter. Assume for simplicity that $T=1$. Note that $\rho$ is locally uniformly continuous in $(0, T]\times \hR$.  Without loss of generality we shall only estimate $|\rho^N(1, x) - \rho(1,x)|$ for  $x$ in the range of $X^N_1$.  We remark that we shall assume $|x|\le R$ for some constant $R>0$, and in the proof below the generic constant $C$ may depend on $R$.

Let $\xi^N_i$, $i=1,\ldots, N$ be i.i.d. with $\hP(\xi^N_i = {1\over \sqrt{N}}) = \hP(\xi^N_i = -{1\over \sqrt{N}}) = {1\over 2}$, $B^N_{t_0} = 0$, $B^N_{t_{i+1}} := B^N_{t_i} + \xi^N_{i+1}$, and denote $b^N_i:= b(t_i, B^N_{t_i})$. Introduce the conditional expectation:
\beaa
\hE_x[\cdot] := \hE\big[\cdot|B^N_1 = x\big].
\eeaa
Then we see that
\beaa
\rho^N(1,x) &=&\hP(X^N_1 = x)\slash (2\sqrt{h}) = \hE\Big[\Pi_{i=0}^{N-1} [1+ b^N_i \xi^N_{i+1}] 1_{\{B^N_1 = x\}}\Big] \slash (2\sqrt{h})\nonumber\\
&=& \hE_x\Big[\Pi_{i=0}^{N-1} [1+b^N_i \xi^N_{i+1}] \Big] \hP(B^N_1=x)\slash (2\sqrt{h})\nonumber\\
&=&\hE_x\Big[ e^{\sum_{i=0}^{N-1} [b^N_i \xi^N_{i+1} - {1\over 2}|b^N_i|^2h ]}[1+  o(1)]  \Big] \hP(B^N_1=x)\slash (2\sqrt{h}).
\eeaa
One can easily show that $\lim_{N\to\infty}\hP(B^N_1 = x) \slash (2\sqrt{h}) = {1\over \sqrt{2\pi}} e^{-{x^2\over 2}}$, by an elementary argument using Stirling's approximation. Then it remains to establish the limit
\bea
\label{Iconv}
\hE_x\Big[ e^{\sum_{i=0}^{N-1} [b^N_i \xi^N_{i+1} - {1\over 2}|b^N_i|^2h ]}  \Big] \to e^{I(x)}.
\eea
We proceed in three steps, and for simplicity we assume $N=2n$ and $x = {2k\over \sqrt{2n}}$.

{\it Step 1.}  Fix $t\in (0, 1)$ and assume $t=t_i$ for some  even $i=2m$ (more rigorously we shall consider $t_{2m}\le t<t_{2m+2}$). For any bounded and smooth test function $f$,
\beaa
\hE_x[f(B^N_{t_{i}})] &=& \sum_j f(x_{ij}) {\hP(B^N_{t_i} = x_{ij}, B^N_1 = x)\over \hP(B^N_1=x)}=\sum_j f(x_{ij}) {\hP(B^N_{t_i} = x_{ij}, B^N_1- B^N_{t_i} = x - x_{ij})\over \hP(B^N_1=x)}\\
&=& \sum_l f(2l\sqrt{h}) {\hP(B^N_{t_i} = 2l \sqrt{h}) \hP(B^N_1- B^N_{t_i} = 2(k-l) \sqrt{h})\over \hP(B^N_1=2k\sqrt{h})}.
\eeaa
Note that ${m\over n} = t$, ${k\over n}= x\sqrt{h}$, and  denote $y := {2l\over \sqrt{2n}} = 2l\sqrt{h}$. By Stirling's formula we have
\beaa
&&\hE_x[f(B^N_{t_{i}})] = \sum_l f(2l\sqrt{h}) {{(2m)!\over (m+l)! (m-l)!} {(2n-2m)!\over (n-m +k-l)! (n-m -k +l)!}\over {(2n)!\over (n+k)! (n-k)!} }\\
&&= [1+o(1)]\sum_l f(2l\sqrt{h}) \sqrt{2m(n-m)(n^2-k^2) \over 2\pi n (m^2-l^2)((n-m)^2-(k-l)^2)} \times\\
&&\quad {m^{2m} (n-m)^{2(n-m)} (n+k)^{n+k}(n-k)^{n-k}\over (m+l)^{m+l} (m-l)^{m-l}(n-m+k-l)^{n-m+k-l}(n-m-k+l)^{n-m-k+l} n^{2n}}\\
&&= [1+o(1)]\sum_l f(2l\sqrt{h}) \sqrt{2t(1-t)(1- x^2h) \over 2\pi n (t^2- y^2h)((1-t)^2-(x-y)^2h)} {A_1\over  A_2 A_3},
\eeaa
where
\bea
\label{A123}
A_1 &:=& (1+x\sqrt{h})^{n(1+x\sqrt{h})} (1-x\sqrt{h})^{n(1-x\sqrt{h})};\nonumber\\
A_2 &:=&  (1+{y\over t}\sqrt{h})^{n(t+y\sqrt{h})}(1-{y\over t}\sqrt{h})^{n(t-y\sqrt{h})}; \\
A_3 &:=&  (1+{x-y\over 1-t}\sqrt{h})^{n(1-t+(x-y)\sqrt{h})}(1-{x-y\over 1-t}\sqrt{h})^{n(1-t-(x-y)\sqrt{h})}.\nonumber
\eea
Note that, for any $0<z<1$,
\beaa
e^{z^2} \le (1+z)^{1+z}(1-z)^{1-z} \le e^{z^2 + {2z^3\over 3}}.
\eeaa
Then, noting that $n = {1\over 2h}$,
\bea
\label{A123est}
\left.\ba{c}
 {A_1 \over A_2A_3} \le e^{n[x^2h + {2\over 3}x^3h^{3\over 2} - {y^2 h\over t} - {(x-y)^2 h\over 1-t}]} = e^{-{(tx-y)^2\over 2 t(1-t)}  +{1\over 3} x^3 \sqrt{h}};\\
 {A_1 \over A_2A_3} \ge e^{n[x^2h  - {y^2 h\over t}- {2|y|^3h^{3\over 2}\over 3t^3} - {(x-y)^2 h\over 1-t} - {2|x-y|^3h^{3\over 2}\over 3(1-t)^3}]} = e^{-{(tx-y)^2\over 2 t(1-t)}  -{1\over 3} [{|y|^3\over t^3} + {|x-y|^3 \over (1-t)^3}] \sqrt{h}}.
\ea\right.
\eea
Then, by denoting $x\approx y$ as $x=y[1+o(1)]$ for $h\to 0$, we have 
\beaa
\hE_x[f(B^N_{t_{i}}) ] \approx  \sum_l f(2l\sqrt{h})   {2\sqrt{h}\over \sqrt{2\pi  t(1-t)}} e^{-{(tx-y)^2\over 2 t(1-t)}  } \approx \int f(y) {1\over \sqrt{2\pi  t(1-t)}} e^{-{(tx-y)^2\over 2 t(1-t)}  } dy.
\eeaa
That is, for $t_i=t$, the conditional law of $B^N_t$ given $B^N_1=x$ asymptotically has density ${1\over \sqrt{2\pi  t(1-t)}} e^{-{(tx-y)^2\over 2 t(1-t)}  } dy$, which is exactly the density of the $\bar X^x_t$ defined in \eqref{rep1}.

\medskip

{\it Step 2.} Again assume for simplicity that $i=2m$ is even. Note that, for each $l$,
\beaa
&&\hP(\xi^N_{i+1} = \sqrt{h} |B^N_{t_i} = 2l\sqrt{h}, B^N_1=x) \\
&&=\hP(\xi^N_{i+1} = \sqrt{h} |B^N_{t_i} = 2l\sqrt{h}, B^N_1-B^N_{t_i}=2(k-l)\sqrt{h})  \\
&&={\hP(B^N_{t_i} = 2l\sqrt{h}, \xi^N_{i+1} = \sqrt{h}, B^N_1-B^N_{t_{i+1}}=(2k-2l-1)\sqrt{h}) \over \hP(B^N_{t_i} = 2l\sqrt{h}, B^N_1-B^N_{t_{i}}=(2k-2l)\sqrt{h})}\\
&&={\hP( \xi^N_{i+1} = \sqrt{h}) \hP(B^N_1-B^N_{t_{i+1}}=(2k-2l-1)\sqrt{h}) \over \hP(B^N_1-B^N_{t_{i}}=(2k-2l)\sqrt{h})}\\
&&= {\Big(\left.\ba{c}2n-2m-1\\ n-m+k-l-1\ea\right.\Big)\over (\left.\ba{c} 2n-2m \\ n-m+k-l\ea\right.)} = {n-m+k-l\over 2(n-m)}.
\eeaa
Note further that, given $B^N_{t_i}$,  $(B^N_{t_1},\ldots, B^N_{t_{i-1}})$ and $(\xi^N_{i+1}, B^N_1)$ are conditionally independent. Then
\beaa
\hP(\xi^N_{i+1} = \sqrt{h} |\cF^N_{t_i}, B^N_1=x) = {n-m+k-{B^N_{t_i}\over 2\sqrt{h}}\over 2(n-m)}={1\over 2} - {B^N_{t_i}-x \over 2(1-t_i)} \sqrt{h},
\eeaa
where $\cF^N_{t_i} := \si(B^N_{t_1},\ldots, B^N_{t_i})$.  This implies
\beaa
\hE_x[\xi^N_{i+1}  |\cF^N_{t_i}] =\sqrt{h}\Big[{1\over 2} - {B^N_{t_i}-x \over 2(1-t_i)} \sqrt{h}\Big]-\sqrt{h}\Big[{1\over 2} + {B^N_{t_i}-x \over 2(1-t_i)} \sqrt{h}\Big] =-{B^N_{t_i}-x \over 1-t_i} h.
\eeaa

Now denote, for $i<N-1$,
\bea
\label{tildexi}
\bar \xi^N_{i+1} := \xi^N_{i+1}  - \hE_x[\xi^N_{i+1}  |\cF^N_{t_i}] = {1-t_i\over 1-t_{i+1}} \xi^N_{t_{i+1}} + {h\over 1-t_{i+1}}[B^N_{t_i} - x]. 
\eea
Then $\cF^N_i = \si(\bar \xi_1,\ldots, \bar \xi_i)$, and
\bea
\label{Etildexi}
|\bar \xi^N_{i+1}|\le \sqrt{h},\quad  \hE_x[\bar \xi^N_{i+1}  |\cF^N_{t_i}] =0.
\eea

By induction one can easily verify 
\bea
\label{BNtildexi}
B^N_{t_i} = x t_i + (1-t_i)  \bar M^N_{t_i},\quad \mbox{where}\quad \bar M^N_{t_i} := \sum_{j=0}^{i-1} {\bar \xi^N_{j+1}\over 1-t_j}.
\eea
By \eqref{Etildexi} we see that $\bar M^N$ is a martingale under the conditional expectation $\hE_x$, and thus
\begin{equation}
\label{EM2}
\hE_x[|\bar M^N_{t_i}|^2] = \sum_{j=0}^{i-1} {\hE_x[|\bar \xi^N_{j+1}|^2]\over (1-t_j)^2}\le  \sum_{j=0}^{i-1} {h\over (1-t_j)^2} \le \int_0^{t_i} {dt\over (1-t)^2} = {t_i\over 1-t_i}.
\end{equation}

Clearly $\ol M^n_{t_i} = {B^N_{t_i}-xt_i \over 1-t_i}$. For any $C>0$, by setting $f(y) = e^{C{y-xt_i\over 1-t_i}}$ and applying the first inequality in \eqref{A123est}, we have
\beaa
\hE_x[e^{C \ol M^n_{t_i} }] \le [1+o(1)] \int e^{C{y-xt_i\over 1-t_i}} {1\over \sqrt{2\pi  t_i(1-t_i)}} e^{-{(t_ix-y)^2\over 2 t_i(1-t_i)}  + {1\over 3} x^3\sqrt{h}} dy = [1+o(1)]e^{Ct_i \over 1-t_i}.
\eeaa
Similarly, $\hE_x[e^{-C \ol M^n_{t_i} }] \le  [1+o(1)]e^{Ct_i \over 1-t_i}$. Applying the Doob's maximum inequality on the martingale $\ol M^N$ we have: for any $l\ge 2$,
\beaa
\hE_x\Big[\sum_{0\le j\le i} |\ol M^N_{t_j}|^l\Big] \le ({l\over l-1})^l \hE_x\Big[|\ol M^N_{t_i}|^l\Big] \le C\hE_x\Big[|\ol M^N_{t_i}|^l\Big].
\eeaa
This implies
\beaa
\hE_x \Big[e^{C \sup_{0\le j\le i}  |\ol M^N_{t_j}|}\Big] &=& \sum_{l=0}^\infty {C^l\over l!} \hE_x\big[\sup_{0\le j\le i}  |\ol M^N_{t_j}|^l\big] \le \sum_{l=0}^\infty {C^l\over l!} \hE_x\big[|\ol M^N_{t_i}|^l\big] \\
&=& \hE_x\big[e^{C|\ol M^N_{t_i}|}\big]\le  \hE_x\big[e^{C\ol M^N_{t_i}} + e^{-C\ol M^N_{t_i}} \big] \le Ce^{Ct_i \over 1-t_i}.
\eeaa
Now following the arguments for \eqref{Mintegrable}, one can  show that,
\beaa
\hE_x\Big[e^{Ch \sum_{i=1}^{N-1} |\bar M^N_{t_i}|} \Big] \le C.
\eeaa
Moreover, note that
\beaa
e^{\sum_{i=0}^{N-1} b_i \xi^N_{i+1}} =e^{ b_{N-1} \xi^N_N + \sum_{i=0}^{N-2} b_i \Big[\bar \xi^N_{i+1} - h \bar M^N_{t_{i+1}} + xh\Big]} \le C e^{ \sum_{i=0}^{N-2} b_i \bar \xi^N_{i+1}} e^{Ch \sum_{i=1}^{N-1} |\bar M^N_{t_i}|}.
\eeaa
By \eqref{Etildexi}  one can easily show that $\hE_x\Big[e^{C \sum_{i=0}^{N-2} b_i \bar \xi^N_{i+1}} \Big] \le C$.
Then we have
\bea
\label{MNintegrable}
\hE_x\Big[ e^{C\sum_{i=0}^{N-1} [b^N_i \xi^N_{i+1} - {1\over 2}|b^N_i|^2h ]} \Big] \le C.
\eea

{\it Step 3.}  Fix $m$ and set $s_j = {j\over m}$.  Similar to Step 1, we see that the conditional law of $(B^N_{s_1}, \ldots, B^N_{s_m})$ given $B^N_1 = x$ is asymptotically equal to the law of $(\bar X^x_{s_1}, \ldots, \bar X^x_{s_m})$.  Assume for simplicity that $N = nm$ (more rigorously we shall consider $nm \le N < (n+1)m$. Then
\begin{equation}
\label{Iconv2}
\hE_x\Big[e^{\sum_{j=0}^{m-1} [ b^N_{nj} (B^N_{s_{j+1}}-B^N_{s_j}) - {1\over 2m} |b^N_{nj}|^2 ]} \Big]  \approx \hE\Big[e^{\sum_{j=0}^{m-1} [ b(s_j, \bar X^x_{s_j}) (\bar X^x_{s_{j+1}}-\bar X^x_{s_j}) - {1\over 2m} |b(s_j, \bar X^x_{s_j})|^2 ]} \Big].
\end{equation}
Send $m\to \infty$, clearly the right side of \eqref{Iconv2} converges to $e^{I(x)}$. 

It remains to estimate the difference between the left side of \eqref{Iconv} and that of \eqref{Iconv2}. Denote
\bea
\label{dmN}
\left.\ba{c}
 \delta^N_{m,1} :=  \hE_x\Big[\Big| \sum_{j=0}^{m-1} \sum_{i=0}^{n-1} {h\over 2} \big| |b^N_{t_{nj +i}}|^2- |b^N_{t_{nj}}|^2]\big|\Big],\\ \delta^N_{m,2}:=  \hE_x\Big[\Big| \sum_{j=0}^{m-1} \sum_{i=0}^{n-1}\Big[[ b^N_{t_{nj +i}}- b^N_{t_{nj}}]  (B^N_{t_{nj+i+1}}-B^N_{t_{nj+i}})\big]\Big|\Big].
\ea\right.
\eea
For any $R>|x|$, note that $b$ is uniformly continuous on $[0, T]\times [-R, R]$ with some modulus of continuity function $\rho_R$.  Then, for $j=0,\ldots, m-1$, $i=0,\ldots, n-1$,
\begin{eqnarray} \label{dmNij}
\delta^N_{m,i,j} &:=& \hE_x\Big[\big| b^N(t_{nj +i}, B^N_{t_{nj+i}})- b^N(t_{nj}, B^N_{t_{nj}})\big|\Big]\\
&\le&C\hE_x\Big[\big[|B^N_{t_{nj+i}} -  B^N_{t_{nj}}| + \rho_R({1\over m}) + 1_{\{|B^N_{t_{nj}}|>R\}}+ 1_{\{|B^N_{t_{nj+i}}|>R\}} \big]\Big]\nonumber\\
&\le& C\rho_R({1\over m}) + {C\over  R} \hE_x\big[|B^N_{t_{nj}}| + |B^N_{t_{nj+i}}|\big] + C \hE_x\Big[|B^N_{t_{nj+i}} -  B^N_{t_{nj}}|\Big].\nonumber
\end{eqnarray}
Recalling \eqref{EM2}, we have
\beaa
&&\hE_x\big[|B^N_{t_i}|^2\big] \le C|x t_i|^2 + C(1-t_i)^2 \hE_x\big[|\bar M^N_{t_i}|^2\big] \le C(xt_i)^2 + Ct_i(1-t_i)  \le C;\\
&& \hE_x\Big[|B^N_{t_{nj+i}} -  B^N_{t_{nj}}|^2\Big]= \hE_x\Big[|[t_{nj+i}- t_{nj}][x-\bar M^N_{t_{nj+i}}] +(1-t_{nj})[\bar M^N_{t_{nj+i}} -  \bar M^N_{t_{nj}}]|^2\Big]\\
&&\le {C\over m^2} \Big[|x|^2 + \hE_x[|\bar M^N_{t_{nj+i}}|^2\big]\Big] + Ch (1-t_{nj})^2\sum_{l=0}^{i-1}{1\over (1-t_{nj+l})^2}\\
&&\le {C\over m^2}[|x|^2 + {1\over 1-t_{nj+i}}] + {C\over m} {1-t_{nj} \over 1-t_{nj+i}} \le {C\over m[1-t_{nj+i}]}.
\eeaa
Then
\bea
\label{dmNijest}
\delta^N_{m,i,j} \le C\rho_R({1\over m}) + {C\over  R} + {C\over \sqrt{m[1-t_{nj+i}}]}
\eea
Thus
\begin{equation}
\label{dmN1est}
\delta^N_{m,1} \le Ch \sum_{j=0}^{m-1}\sum_{i=0}^{n-1}\delta^N_{m,i,j} \le   C\rho_R({1\over m}) + {C\over R} + \sum_{i=0}^{N-1} { Ch\over \sqrt{m(1-t_i)}} \le C\Big[\rho_R({1\over m}) + {1\over R} + {1\over \sqrt{m}}\Big].
\end{equation}
Moreover,
\bea
\label{dmN2est}
&&\delta^N_{m,2}=  \hE_x\Big[\Big| \sum_{j=0}^{m-1} \sum_{i=0}^{n-1} [ b^N_{t_{nj +i}}- b^N_{t_{nj}}] \big[h[x-\bar M^N_{t_{nj+i+1}}] + \bar \xi^N_{nj+i+1}\big]\Big]\nonumber\\
&&\le C h\hE_x\Big[ \sum_{j=0}^{m-1} \sum_{i=0}^{n-1} |b^N_{t_{nj +i}}- b^N_{t_{nj}}| |x-\bar M^N_{t_{nj+i+1}}|\Big]+C\Big(\hE_x\Big[\Big| \sum_{j=0}^{m-1} \sum_{i=0}^{n-1} [ b^N_{t_{nj +i}}- b^N_{t_{nj}}]  \bar \xi^N_{nj+i+1}\Big|^2 \Big]\Big)^{1\over 2}\nonumber\\
&&\le C h \sum_{j=0}^{m-1} \sum_{i=0}^{n-1} \Big(\delta^N_{m,i,j}\Big)^{1\over 2} \Big(\hE_x\big[|x-M^N_{t_{nj+i+1}}||^2\big]\Big)^{1\over 2}+C\Big(h \sum_{j=0}^{m-1} \sum_{i=0}^{n-1} \delta^N_{m,i,j}\Big)^{1\over 2}\nonumber\\
&&\le  C h \sum_{j=0}^{m-1} \sum_{i=0}^{n-1} \Big(\rho_R({1\over m})+{1\over R} + {1\over \sqrt{m(1-t_{nj})}}\Big)^{1\over 2} \Big(x^2+{1\over 1-t_{nj+i+1}}\Big)^{1\over 2}\nonumber\\
&&\quad+C\Big(h \sum_{j=0}^{m-1} \sum_{i=0}^{n-1}\big[ \rho_R({1\over m})+{1\over R} + {1\over \sqrt{m(1-t_{nj})}}\big]\Big)^{1\over 2}\nonumber\\
&&\le C\Big( \rho_R({1\over m})+{1\over R} + {1\over \sqrt{m}}\Big)^{1\over 2}.
\eea

We now estimate the desired difference between  \eqref{Iconv} and  \eqref{Iconv2}. Denote
\beaa
\xi_1 := \sum_{i=0}^{N-1} [b^N_i \xi^N_{i+1} - {1\over 2}|b^N_i|^2h ],\quad \xi_2:= \sum_{j=0}^{m-1} [ b^N_{nj} (B^N_{s_{j+1}}-B^N_{s_j}) - {1\over 2m} |b^N_{nj}|^2 ].
\eeaa
Then, by \eqref{dmN}, \eqref{dmN1est} and \eqref{dmN2est},  we have
\beaa
\hE_x[  |\xi_1 - \xi_2| ] \le C[\delta^N_{m,1}+\delta^N_{m,2}] \le C\Big( \rho_R({1\over m})+{1\over R} + {1\over \sqrt{m}}\Big)^{1\over 2}.
\eeaa

Moreover, similar to \eqref{MNintegrable}, we have
\beaa
\hE_x\Big[ e^{C\xi_1} + e^{-C\xi_1} + e^{C\xi_2} + e^{-C\xi_2}\Big] \le C.
\eeaa
One can easily check that $|e^z-1| \le C\sqrt{|z|} [e^{2z} + e^{-2z}]$.  Then
\beaa
&&\Big|\hE_x\big[ e^{\xi_1}  \big] -  \hE_x\big[e^{\xi_2} \big]\Big|= \hE_x\Big[e^{\xi_2}\big|e^{\xi_1-\xi_2} -1\big|\Big] \le C\hE_x\Big[e^{\xi_2}\sqrt{|\xi_1-\xi_2|} [e^{2[\xi_1-\xi_2]} + e^{2[\xi_2-\xi_1]}\Big] \\
&&\le C\Big(\hE_x[|\xi_1-\xi_2|]\Big)^{1\over 2} \Big(\hE_x\Big[ e^{4\xi_1 - 2\xi_2} + e^{5\xi_2-4\xi_1}\Big]\Big)^{1\over 2} \le C \Big( \rho_R({1\over m})+{1\over R} + {1\over \sqrt{m}}\Big)^{1\over 4}.
\eeaa
By first sending $m\to \infty$ and then $R\to \infty$, we obtain the desired convergence.  
\end{proof}

% Appendix here
% Options are (1) APPENDIX (with or without general title) or 
%             (2) APPENDICES (if it has more than one unrelated sections)
% Outcomment the appropriate case if necessary
%
% \begin{APPENDIX}{<Title of the Appendix>}
% \end{APPENDIX}
%
%   or 
%
% \begin{APPENDICES}
% \section{<Title of Section A>}
% \section{<Title of Section B>}
% etc
% \end{APPENDICES}

% Acknowledgments here
\section*{Acknowledgments.}
% Enter the text of acknowledgments here
The first and third authors are supported in part by NSF grant \#DMS-1908665. We would like to express our sincere gratitude to the anonymous reviewers for their very careful reading and many constructive suggestions which  helped us improve the paper greatly.

% References here (outcomment the appropriate case) 

% CASE 1: BiBTeX used to constantly update the references 
%   (while the paper is being written).
\bibliographystyle{plain} % outcomment this and next line in Case 1
\bibliography{distortion} % if more than one, comma separated

% CASE 2: BiBTeX used to generate mypaper.bbl (to be further fine tuned)
%\input{mypaper.bbl} % outcomment this line in Case 2

\end{document}